\definecolor{menucolor}{rgb}{0.1,0.52,0.47}
\definecolor{anchorcolor}{rgb}{0.85,0.37,0.01}
\definecolor{runcolor}{rgb}{0.46,0.44,0.701}
\definecolor{linkcolor}{rgb}{0.3,0.55,0.01}
\definecolor{urlcolor}{rgb}{0.12,0.47,0.70}
\definecolor{citecolor}{rgb}{0.55,0.36,0.01}
\definecolor{filecolor}{rgb}{0.4,0.4,0.4}
\newcommand{\bLambda}{\boldsymbol{\Lambda}}
\newcommand{\bSigma}{\boldsymbol{\Sigma}}
\newcommand{\bsigma}{\boldsymbol{\sigma}}
\newcommand{\bPsi}{\boldsymbol{\Psi}}
\newcommand{\bmu}{\boldsymbol{\mu}}
\newcommand{\bnu}{\boldsymbol{\nu}}
\newcommand{\bTheta}{\boldsymbol{\Theta}}
\newcommand{\bB}{\boldsymbol{B}}
\newcommand{\bc}{\boldsymbol{c}}
\newcommand{\bD}{\boldsymbol{D}}
\newcommand{\bff}{\boldsymbol{f}}
\newcommand{\bF}{\boldsymbol{F}}
\newcommand{\bV}{\boldsymbol{V}}
\newcommand{\bg}{\boldsymbol{g}}
\newcommand{\bI}{\boldsymbol{I}}
\newcommand{\bS}{\boldsymbol{S}}
\newcommand{\bT}{\boldsymbol{T}}
\newcommand{\be}{\boldsymbol{e}}
\newcommand{\bq}{\boldsymbol{q}}
\newcommand{\bQ}{\boldsymbol{Q}}
\newcommand{\bR}{\boldsymbol{R}}
\newcommand{\bW}{\boldsymbol{W}}
\newcommand{\bu}{\boldsymbol{u}}
\newcommand{\bv}{\boldsymbol{v}}
\newcommand{\bx}{\boldsymbol{x}}
\newcommand{\bX}{\boldsymbol{X}}
\newcommand{\bY}{\boldsymbol{Y}}
\newcommand{\bZ}{\boldsymbol{Z}}
\newcommand{\bzero}{\boldsymbol{0}}
\newcommand{\mL}{\mathcal L}
\newcommand{\mU}{\mathcal U}
\newcommand{\mP}{\mathcal P}
\newcommand{\mN}{\mathcal N}
\newcommand{\mI}{\mathcal I}
\newcommand{\br}{\boldsymbol{r}}
\newcommand{\bb}{\boldsymbol{b}}
\let\muold\mu
\renewcommand{\mu}{{\bm\muold}}
\let\LambdaOLD\Lambda
\renewcommand{\Lambda}{{\bm\LambdaOLD}}
\let\Psiold\Psi
\renewcommand{\Psi}{{\bm\Psiold}}
\let\Usiold\Upsilon
\renewcommand{\Upsilon}{{\bm\Usiold}}
\let\Sigmaold\Sigma
\renewcommand{\Sigma}{{\bm\Sigmaold}}
\let\Gammaold\Gamma
\renewcommand{\Gamma}{{\bm\Gammaold}}
\let\sigmaold\sigma
\renewcommand{\sigma}{{\bm\sigmaold}}
\newcommand{\Tr}{\mathrm{Tr}~}
\def\shalf{\mbox{{\footnotesize$\frac{1}{2}$}}}
\newcommand{\normal}{{\mathcal N}}
\newcommand{\sPN}{{\cal PN}}
\newcommand{\sS}{{\cal S}}
\newcommand{\bbR}{\mathbb{R}}
\newcommand{\bbE}{\mathbb{E}}
\newcommand{\corr}{{\bm{\mathfrak{R}}}}
\newcommand{\diag}{\mathrm{diag}}
\newtheorem{theorem}{Theorem}
\newtheorem{proposition}{Proposition}
\newtcolorbox{mybox}[1][]{
    tikznode boxed title,
    enhanced,
    arc=0mm,
    interior style={white},
    attach boxed title to top center= {yshift=-\tcboxedtitleheight/2},
    fonttitle=\bfseries,
    colbacktitle=white,coltitle=black,
    boxed title style={size=normal,colframe=white,boxrule=0pt},
    #1}
\newcommand{\citep}{\cite}
\newcommand{\citet}{\cite}
\newcommand{\citeyear}{\cite}
\newcommand{\citeauthor}{\cite}
\newcolumntype{L}{>{\raggedright\arraybackslash}X}% <-- added
\begin{document}
%
% paper title
% Titles are generally capitalized except for words such as a, an, and, as,
% at, but, by, for, in, nor, of, on, or, the, to and up, which are usually
% not capitalized unless they are the first or last word of the title.
% Linebreaks \\ can be used within to get better formatting as desired.
% Do not put math or special symbols in the title.
\title{Exploratory Factor Analysis of Data on a Sphere}

\author{{Fan~Dai, Karin~S.~Dorman, Somak~Dutta and
% %Niraj~Kunwar, Fernando~Silva~Aguilar, Huong~Nguyen, Isaac~Agbemafle,
% %Somak Dutta
     and Ranjan~Maitra}% <-this % stops a space
\thanks{F. Dai is with the Department of Mathematical Sciences at the Michigan Technological University, Houghton,
  Michigan 49931, USA. e-mail: fand@mtu.edu.}
\thanks{K.S. Dorman is with the Departments of Statistics and Genetics
  Development and Cell Biology, and S. Dutta and R. Maitra are with
  the Department of Statistics,  % , N. Kunwar is with the Department
                                % of Mechanical Engineering,
                                % F. Aguilar and H. Nguyen are with
                                % the Department of Agronomy,
                                % I. Abemafle is with the Department
                                % of Food Science and Human Nutrition,
  all 
at Iowa State University, Ames, Iowa 50011, USA. e-mail:
\{kdorman,somakd,maitra\}@iastate.edu.}%<-this % stops a space
 }

 \IEEEcompsoctitleabstractindextext{
% For peer review papers, you can put extra information on the cover
% page as needed:
% \ifCLASSOPTIONpeerreview
% \begin{center} \bfseries EDICS Category: 3-BBND \end{center}
% \fi
%
% For peerreview papers, this IEEEtran command inserts a page break and
% creates the second title. It will be ignored for other modes.
\maketitle
Data on high-dimensional spheres arise frequently in many disciplines either naturally or as a consequence of preliminary processing and can have intricate dependence structure that needs to be understood. We develop exploratory factor analysis of the projected normal distribution to explain the variability in such data using a few easily interpreted latent factors. Our methodology provides maximum likelihood estimates through a novel fast alternating expectation profile conditional maximization algorithm. Results on simulation experiments on a wide range of settings are uniformly excellent. Our methodology provides interpretable and insightful results when applied to tweets with the  {\tt \#MeToo} hashtag in early December 2018, to time-course functional Magnetic Resonance  Images of the average pre-teen brain at rest, to characterize handwritten digits, and to gene expression data from cancerous cells in the Cancer Genome Atlas. 

  \begin{IEEEkeywords}
AECM, $l_2$-normalization, Lanczos algorithm, matrix-free
computations, {\tt \#MeToo},
MNIST, projected normal distribution, resting state fMRI,
term-document matrix, TCGA.
\end{IEEEkeywords}}
\IEEEdisplaynotcompsoctitleabstractindextext
\maketitle

%}
\vspace{-0.2em}
\section{Introduction}
\label{sec:intro-ch3}
% Directional data, or
Observations that lie on a unit
sphere~\citep{watson83,fisheretal93,jammalamadakaandsengupta01,mardiaandjupp00} arise
naturally in disciplines such as astronomy ({\em e.g.}, the orientation of planetary and cometary orbits),
biology ({\em e.g.}, homing behavior
or the direction of capillaries in tissue),  geology ({\em e.g.}, the
orientation of paleomagnetism in rocks or the axes of sand dunes), 
meteorology ({\em e.g.}, wind direction), oceanography ({\em e.g.},
the direction of ice flows) or physics ({\em e.g.}, the
direction of particle diffusion).  Many
distributions~\citep{jammalamadakaandsengupta01,mardiaandjupp00} have
been proposed to model such observations, but the von Mises (or
circular normal) 
distribution~\citep{vonmises18} is commonly used to model angular
data, while observations on a three-dimensional sphere are often modeled with the
Fisher distribution~\citep{fisher53}. The von Mises and Fisher
distributions are the 2- and 3-dimensional cases of the Langevin~\citep{watson83} distribution
-- also called the Fisher-von Mises~\citep{senguptaandmaitra98} or
von Mises-Fisher~\citep{mardiaandjupp00} distribution -- that
models data on a multi-dimensional sphere. However, this distribution
is unable to adequately characterize relationships between variables, especially in the context of modern applications such as
text processing, genetics or imaging, where observations end up on unit spheres after appropriate processing. We
describe four motivating applications with high-dimensional data on 
unit spheres that can benefit from a distribution that can characterize
dependence between variables. 
\begin{comment}

Further, in many modern
applications, such as in text processing, genetics or imaging,
observations not originally acquired on unit spheres end 
up as so after preprocessing. 
We describe some motivating applications that will serve to illustrate the
methods in our paper. 

For instance, the similarity between two gene expressions is often assessed
through their pairwise correlation~\citep{dortetbernadetandwicker08,eisenetal98} that is an affine
transformation  of the Euclidean distance between their sphered and
centered versions~\citep{maitraandramler10}. Perhaps the most common 
application is in text data where documents are analyzed based on
their inner product or cosine
similarity~\citep{frakesandbaezayates92,saltonandmcgill83,dhillonandmodha01},
or equivalently their sphered versions. Similarly, bitmap images have
each individual pixel scaled to be in a range of 0 to 255. The raw
pixel value is rarely available and so similarity between two bitmap
images is assessed through the inner product of the vector of bitmap
values of each.
Further applications arise when each observation lies on a simplex but
a square root transformation on each feature renders the transformed
observation to lie on a unit
sphere~\citep{daiandmuller18}. We next illustrate these cases through
four applications that will also serve to motivate our paper. 
\end{comment}

\subsection{Motivating Applications}
\label{sec:motivating}
We introduce four scenarios where preprocessing yields
unit-sphered data. In each case, the goal is to
characterize variability and relationships between variables, whether
they are words in tweets containing the hashtag {\tt \#MeToo}, the
time-points in temporal functional Magnetic Resonance Imaging (fMRI)
data of an averaged typically developing child brain at rest, pixels in
bitmap images, or cancerous gene expressions in the Cancer Genome Atlas (TCGA).
%Our examples are in the context of characterizing variability and
%relationships in text data in the form of tweets containing the
%hashtag {\em \#MeToo}, the brain of  a typically developing (TD)
%child at rest, bitmap images and data from the Cancer Genome Atlas
%(TCGA).

\subsubsection{Summarizing  {\tt \#MeToo} tweets}
  \label{sec:intro.me2}
The {\tt \#MeToo} hashtag on Twitter provided an online forum for
women to narrate their harassment or abuse at the hands of
men in positions of authority. The term had been percolating in social
media since 2006, but received a major fillip in 2017 with the outing of high-profile abusers in
entertainment, politics, business and media. Our interest is
in characterizing variability in tweets having the %{\em \#MeToo}
hashtag and to countenance its description by a few underlying
factors.
Each tweet is akin to a text document, so
all the tweets  can be processed into a weighted {\em document-term
frequency
matrix}~\citep{saltonandbuckley88,kolda97,dhillonandmodha01}, with the
most common weighting scheme normalizing each tweet to lie on a
high-dimensional unit sphere~\citep{singhaletal96}. Modeling each
tweet  in terms of the Langevin distribution, as commonly done for
text data~\citep{banerjeeetal05}, would
require marginal exchangeability and essentially no correlation
between words and a characterization totally at odds with the reality
of tweet structure and characteristics. 
%exposing the distribution's inability in characterizing variability in the tweets. 
We return to this dataset in Section~\ref{sec:app-ch3-metoo}.  

\subsubsection{Characterizing the  brain at rest}
  \label{sec:intro.fmri}
Characterizing changes in cerebral blood-flow at
rest~\citep{biswaletal95}, or in the absence of any stimulus or task,
%mental activity, 
is of interest to  researchers~\citep{biswal12}. A
typical resting state experiment using fMRI~\citep{belliveauetal91,
  kwongetal92,bandettinietal93,lindquist08,lazar08} 
  involves collecting a time series of
  blood-oxygen-level-dependent (BOLD) images of the brain at rest,
  and after preprocessing, fitting a linear model at each voxel with
  motion and other structured noise as covariates. The time series of
  the residuals, or the part of the BOLD response vector unexplained
  by the linear model, is the resting state component of the
  dataset~\citep{foxetal05,coleetal10}. Functionally connected voxels
  may be detected from correlations between the
  residual time series sequences at each voxel, or equivalently by
  calculating the Euclidean distance between the centered and
  standardized time series that lie on the unit
  sphere~\citep{maitraandramler10}. However, it may also be of
  interest to characterize the
  distribution and variability of activity patterns in the resting
  state brain, for which the Langevin distribution is again inadequate. We revisit this application in Section~\ref{sec:app-ch3-fmri}.

  \subsubsection{Assessing variability in handwritten digits}
  \label{sec:intro.digits}
  The Modified National Institute of Standards and Technology (MNIST)
  database~\citep{LeCun2019digits}  has 70,000 normalized and
  anti-aliased $28\!\times 28$ bitmap images of handwritten digits of
  0 through 9, written by 
  500 writers. The standardization of each pixel value to the interval
  [0, 255], as a consequence of the digitization of
  the image, means that the raw photon value of each image pixel is no longer
  available. 
  %so two images may be related to each other in terms of correlation similarity. 
  The loss of a meaningful absolute scale for the image suggests use of correlation as a similarity measure.
  This similarity measure is akin to normalizing each image to
  lie on the unit sphere. Characterizing inter-pixel relationships can
  summarize variability in handwritten digits 
  but again, requires a distribution richer than the Langevin for the same
  reasons as before. We investigate this application further in
  Section~\ref{sec:app-ch3-digits}.

  \subsubsection{Identifying genetic pathways underlying cancer}
\label{sec:tgca.intro}
  The Cancer Genome Atlas (TCGA) dataset provides messenger
  ribonucleic acid (mRNA) sequencing data of tumor samples from more than 11,000 patients collected over a 12-year period. 
  After initial processing, each tumor has gene compositional data,
  where  each observation lies on a simplex, and a square root
  transformation on each feature maps the observation to lie on a unit
  sphere~\citep{daiandmuller18}. It is expected that a few
  groups of genes can explain variability between the tumor samples
  (and patients) and that identification of such gene groups could lead to
  future diagnostic tools and treatments \citep{bose2019chaudhari}. Once again, however, such
  relationships cannot be modeled by the Langevin distribution with its lack of flexible relationships between coordinates. We study this application in detail in Section~\ref{sec:app-ch3-rnaseq}.

Our four showcase applications point to the inadequacy of the Langevin
distribution in characterizing heterogeneity and relationships between
the observed variables that lie on a unit sphere. 
A more general approach to modeling data on a sphere involves taking
the inverse stereographic projection of a general multivariate normal
distribution~\citep{mardiaetal79,dortetbernadetandwicker08}. However,
this approach is not practical in high dimensions: indeed, in their application,
\citet{dortetbernadetandwicker08} use {\em a 
  priori} dimension reduction (into the first three principal
components) on the raw data as a preprocessing step before modeling
and analysis. Other distributions such as the complex
Bingham~\citep{bingham74}, Fisher-Bingham~\citep{kent82} and the
real/complex Watson distributions~\citep{dryden05,kimetal17} also
suffer from intractability with dimensions greater than three. 

A more flexible distributional family arises from the projected normal (PN)
distribution that is obtained by radial projection of the multivariate
normal (MVN) distribution onto the unit sphere~\citep{mardiaandjupp00}. In
general, the PN density  does not have a closed-form
expression. \citet{wangandgelfand13} and \citet{stumpfhauseretal17}
provided Bayesian inference for this model in two and higher
dimensions, respectively, under the (mild) constraint that the
first element of the dispersion matrix parameter of the projected normal density is unity. The structure of this dispersion matrix
parameter is of interest in many applications, so we investigate
if a small number of latent factors can explain the
variability in the (observed) data on a sphere. Our approach posits
parameter estimation from a projected normal
distribution as a missing
data problem, where the complete data is from the generative MVN
distribution that we specify in terms of a factor
model. Two Alternating Expectation Conditional Maximization (AECM)
algorithms~\citep{mengandvandyk97} are developed for 
parameter estimation within our framework. Because our 
interest stems from high-dimensional problems, we also
develop accelerated implementations that incorporate squared 
iterative (SQUAREM) methods~\citep{varadhanandroland08} to achieve superlinear convergence.
 
The remainder of this paper is organized as
follows. Section~\ref{sec:meth-ch3} develops the factor model and
analysis for data on 
a sphere (FADS) and 
proposes two novel AECM algorithms called FADS-D
(Expectation-Maximization, {\em i.e.} EM, with a duet of
latent variables) and FADS-P (EM with one
latent variable and a profile maximization step) for model-fitting. We
use the
extended Bayesian Information Criterion (eBIC) of
\citet{chenandchen08} to decide on the 
 number of factors. 
Section \ref{sec:sim-ch3}
evaluates the performance of FADS-P and FADS-D in simulation
experiments. We next apply our methodology to the motivating
applications of Section~\ref{sec:motivating}. Section \ref{sec:disc} discusses
possible extensions and future work. An online supplement, with
sections, tables, theorems and figures referenced here with the prefix “S”, is
available.

\section{Methodology}
\label{sec:meth-ch3}
\subsection{Background and preliminary development}
\subsubsection{The projected normal  (PN) distribution}
A popular method of constructing probability distributions on the unit
sphere is by radially projecting a multivariate distribution,
typically Gaussian and has been well-studied
in low-dimensional problems by \citet[p. 178]{mardiaandjupp00},
\cite{pukkilaandrao88}, \citet{paineetal18}, and, from a Bayesian perspective,
by \cite{wangandgelfand13} and \cite{stumpfhauseretal17}. The PN distribution $\sPN_p(\bmu,\bSigma)$ on the $p$-unit sphere $\sS_{p-1}$ is defined as the distribution of $\bY/\|\bY\|$ where $\bY$ has an MVN distribution with mean $\bmu$ and covaraince matrix $\Sigma.$
Consequently, the $\sPN_p(\bmu,\bSigma)$ density function is 
\begin{equation}\label{eqn:pdfPN}
 f({\bx};\bmu,\bSigma) =
 (2\pi)^{-\frac{p}{2}}|\bSigma|^{-\frac{1}{2}}\exp{\left\{-\displaystyle{\frac{\bmu^\top\bSigma^{-1}\bmu}{2}} + \displaystyle{\frac{m^2}{2v}}\right\}}
 {\int_{0}^{\infty}{R^{p-1}\exp{\left\{-\displaystyle{\frac{1}{2v}}\Big(R-m\Big)^2\right\}}}\, dR},
\end{equation}
where $m = {{\bx}}^\top\Sigma^{-1}\mu/{{\bx}}^\top\Sigma^{-1}{{\bx}}$,
$v = 1/{{\bx}}^\top\Sigma^{-1}{{\bx}}$ (see
Theorem~\ref{theorem:pndensity}). The integral in
\eqref{eqn:pdfPN} arises from integrating out the unobserved length $R
= \|\bY\|$ and is proportional to the $p$th raw moment of a
univariate normal distribution truncated above zero. However, this
family of distributions is not identifiable \citep{mardiaandjupp00},
so there is need for further constraints, such as fixing a diagonal
entry of $\bSigma$ or, more commonly, setting  
$\bmu\in\sS_{p-1}$ that we adopt in this work. % Although a closed-form analytical expression for
               % \eqref{eqn:pdfPN} is not available, it can be
               % computed numerically using methods of recursion for
               % moments of truncated normal distribution or numerical
               % quadrature, as discussed in Section \ref{sec:comp}.
The matrix $\bSigma$ flexibly models the dependence among
coordinates, allowing different shapes and orientations. For example,
Fig.~\ref{fig-intro-sig} shows the isodensity lines of a 3D PN 
distribution with the same $\mu$ but three different $\Sigma$s.
\begin{figure}[h]
\vspace{-0.4in}
  \centering
  \mbox{
  \setcounter{subfigure}{-1}
  \subfloat[Spherical $\bSigma$]{
    \hspace{-0.05\textwidth}
    \begin{minipage}[b][][t]{.3\textwidth} % [b] bottom [t] top
      \raisebox{.5\height}{      \begin{minipage}[b][][t]{.45\textwidth} % [b] bottom [t] top
        {\footnotesize
        $\begin{bmatrix}
          1 & 0 & 0\\
          0 & 1 & 0\\
          0 & 0 & 1
        \end{bmatrix}$}
    \end{minipage}}
    \hspace{-0.165\textwidth}
      \subfloat{\includegraphics[width=.6\textwidth]{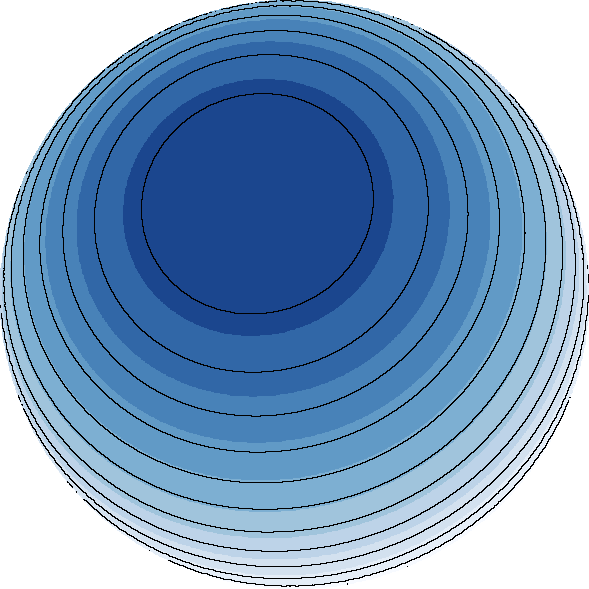}}
    \end{minipage}}%
  \setcounter{subfigure}{0}
  \subfloat[General diagonal $\bSigma$]{
    \hspace{-0.01\textwidth}
    \begin{minipage}[b][][t]{.3\textwidth} % [b] bottom [t] top
     \raisebox{.5\height}{ \begin{minipage}[b][][t]{.45\textwidth} % [b] bottom [t] top
{\footnotesize
          $ \begin{bmatrix}
            \frac12 & 0 & 0\\
            0 & \frac13 & 0\\
            0 & 0 & \frac14
          \end{bmatrix}$}
      \end{minipage}}
    \hspace{-0.125\textwidth}
    \subfloat{\includegraphics[width=.6\textwidth]{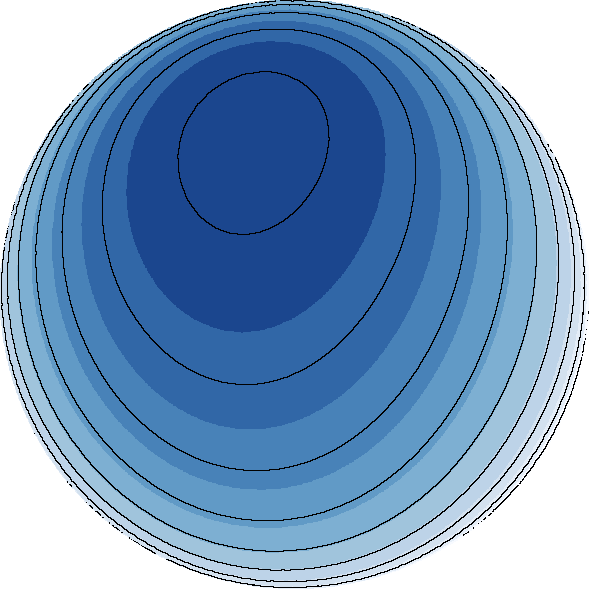}}
    \end{minipage}}%
  \setcounter{subfigure}{1}
  \subfloat[General-structured $\bSigma$]{
    \hspace{-0.01\textwidth}
    \begin{minipage}[b][][t]{.4\textwidth} % [b] bottom [t] top
      \centering
      \raisebox{.5\height}{      \begin{minipage}[b][][t]{.5875\textwidth} % [b] bottom [t] top
        {\footnotesize
          $ \begin{bmatrix}
            \frac14 &-\frac14  &\frac14\\
            -\frac14  &\frac25 &-\frac13\\
            \frac14 &-\frac13  &\frac25
        \end{bmatrix}$}
    \end{minipage}}
    \hspace{-0.22\textwidth}
  \subfloat{\includegraphics[width=.4625\textwidth]{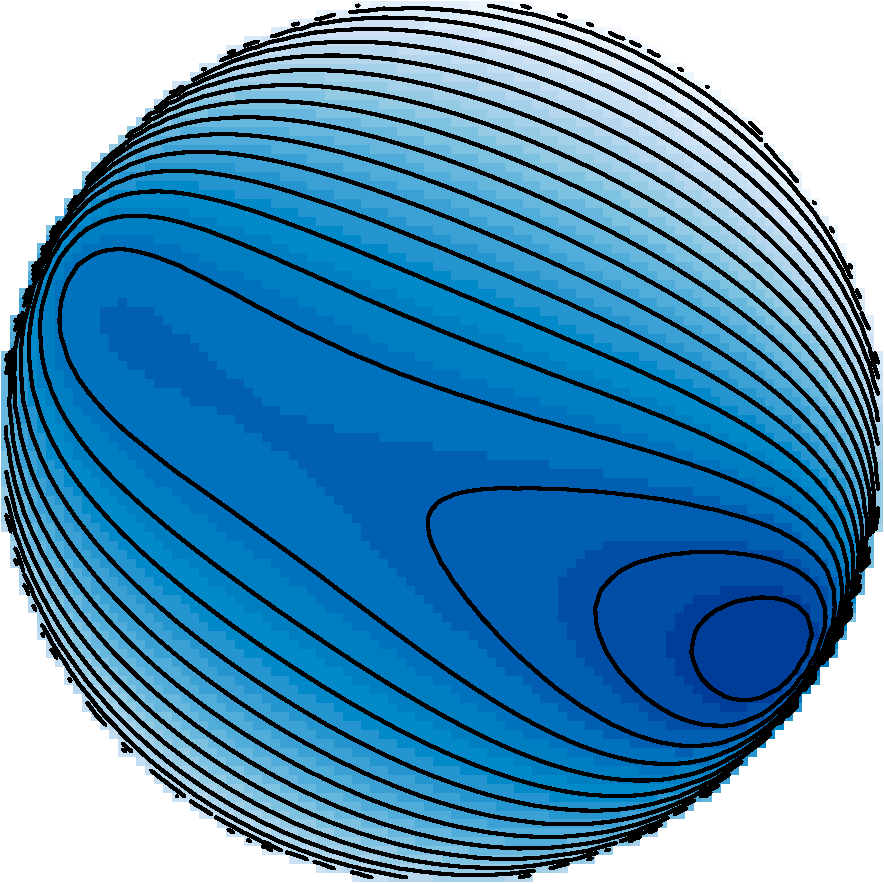}}
    \end{minipage}}%
  }
% \vspace{-0.15in}
  \caption{Sample 3D PN densities (with higher values in darker
    shades) for three different $\bSigma$s.}%Contour plots of data on the 3D sphere following projected normal distributions with different $\Sigma$ (higher density values in darker blue).\vspace{-0.2in}}
\label{fig-intro-sig}
\end{figure}
With spherical $\Sigma$, the isodensity lines on the surface of the
sphere are circular, but for the other two heteroscedastic
cases, we obtain ellipsoids of different eccentricities and orientations.
\citet{pukkilaandrao88} provide method-of-moments estimators for the
parameters of the PN distribution but this method is impractical for
high dimensions, especially when the sample size is relatively small.
An alternative approach to specifying general distributions on the sphere is by stereographic projection of the MVN
distribution~\citep{dortetbernadetandwicker08}. Both approaches result in multimodal densities as the diagonal entries of
$\bSigma$ increase. However, the stereographic
projection approach is not easily analyzed in high dimensions, so
\citet{dortetbernadetandwicker08} suggest reducing the dimensionality
of the MVN through principal components analysis before applying
stereographic projection. We provide computationally efficient EM
algorithms for maximum likelihood (ML) estimators for the parameters of the PN distribution that can be used in high dimensions. Our approach also decomposes $\bSigma$ into a factor structure, allowing for interpretation and further understanding of the variability in a dataset. We describe our factor model setup next.                              %\citet{dortetbernadetandwicker08}

\subsubsection{A latent factor model for the PN distribution}
\label{sec:meth-fa}
Let $\bX_1,\bX_2,\ldots,\bX_n$ be independent identically distributed (i.i.d.)
realizations from~\eqref{eqn:pdfPN}. Our latent factor model %simply
assumes $\bX_i = \bY_i/\|\bY_i\|$, where the generative latent variables $\bY_i$ have %themselves have the
the decomposition
\begin{equation} \label{mod:fa}
{\bY}_i = \mu+\bLambda{\bZ}_i+\bm{\epsilon}_i, \qquad i=1,2,\ldots,n
\end{equation}
and $\bLambda$ is a $p\times q$ matrix called the factor loading
matrix with rank $q$, ${\bZ}_i$s are latent factors of dimension $q$ and
${\bm \epsilon}_i$s are residual noise vectors. Typically, $q \ll \min(n,p)$
and so \eqref{mod:fa} explains the variability in the $\bY_i$s in
terms of a few ($q$) underlying latent factors. Usually, 
${\bZ}_i$s are assumed independent $\mN_q(\bzero,\bI)$ and
${\bm\epsilon}_i$s independent $\mN_p(\bzero,\bPsi)$ random vectors, where
$\bPsi$ is a diagonal matrix with positive entries called the
uniquenesses. Hence, the covariance
matrix $\Sigma$ admits a lower-dimensional representation: $\Sigma =
\bLambda \bLambda^\top + \bPsi$. Parameter estimation in~\eqref{mod:fa}
has received attention
\citep{lawley40,joreskog66,lawleyetal71,mardiaetal79,anderson03,owenandwang15}
over several decades, with recent work~\citep{daietal20} providing near-instantaneous
solutions even for very high dimensions.
%we would like to recover the loading matrix and the uniquenesses even when the lengths of $\bY_i$s are lost.

% covariance matrix $\Sigma = \bLambda \bLambda^\top + \bPsi$ where $\bLambda$ is a $p\times q$ matrix with rank $q \ll \min\{n,p\}$ and $\bPsi$ is a $p\times p$ diagonal matrix with positive diagonal entries. Thus had the actual lengths of the observations
% In terms of the latent random vectors $\bY$ this implies a Gaussian factor model
% where ${\bZ}_i\sim\normal_{q}(\bm{0},{\bI})$ consisting of $q$ unobserved orthogonal factors and $\bm{\epsilon}_i\sim\normal_{p}(\bm{0},\bPsi)$. In addition to the constraint $\mu\in\sS_{p-1},$ further identifiability constraints are needed for the factor covariance because $\bLambda {\bQ}$ for any orthogonal matrix ${\bQ}.$ In this work, we assume that $\bLambda^\top\bPsi^{-1}\bLambda$ is diagonal with decreasing diagonal entries for computational convenience. However, for better interpretability, we suggest other orthogonal rotations such as varimax, quartamax or oblimin \citep{Costello&Osborne2005} after the $\bLambda$ has been estimated.

The Gaussian factor model \eqref{mod:fa} is not invariant to
orthogonal transformation~\citep{anderson03,mardiaetal79} unless all
uniquenesses are equal (i.e. $\bPsi=\sigmaold^2 \bI_p$ for some
$\sigmaold^2>0$). %Thus, this equal uniqueness model should be used in applications where the choice of the coordinate system is arbitrary.
%However, the general factor model \eqref{mod:fa} should be used when the  choice of the coordinate system is fixed by the application.
% For data on the sphere, this lack of invariance implies that the interpretation of the latent factor model  depends on the particular
% choice of the coordinate system. Thus the general factor model
% \eqref{mod:fa} should be used when the choice of the coordinate system
% is determined by the application. This would be the case when the coordinates correspond to immutable objects.
%{\color{purple}{is there some way to help users figure out which applies?}}
% On the other hand, when the points on the sphere are invariant to the choice of the coordinate system, % when the points on
                                % the sphere are more interpretable as
                                % directions, that is, when they
                                % should be invariant to the choice of
                                % the coordinate system,
% the common uniqueness model with $\bPsi=\sigmaold^2 I_p$ should be used.
Also, \eqref{mod:fa} is not identifiable because both
$\bLambda$ and $\bLambda {\bm Q}$ give rise to the same model for any
orthogonal matrix $\bm Q$ and so additional constraints
are needed for identifiability. For computational simplicity, % as will be seen in a later
% section,
we assume that $\bLambda^\top\bPsi^{-1}\bLambda$ is diagonal with
decreasing diagonal entries. However, for better interpretability, and
as is standard in traditional factor analysis, we present  estimates of
$\bLambda$ after applying some appropriate orthogonal~({\em
  e.g.} varimax, quartimax or oblimin) rotation~\citep{costelloandosborne05}. 

\subsection{Estimation of parameters}
The PN factor model of Section~\ref{sec:meth-fa} has two
generative latent components -- the lengths $R_i=\|\bY_i\|$ and the
factors $\bZ_i$, $i=1,2,\ldots,n$. Given $q$ and these inherent
latencies, the EM algorithm provides a natural way to obtain
the ML estimators of $\mu,\bLambda$ and $\bPsi$. But vanilla EM is % missing structures are % susceptible to
slow to converge, so we develop an %Alternating Expectation Conditional
% Maximization  (
AECM  algorithm~\citep{mengandvandyk97}
for parameter estimation using the  latent components
$(R_i,\bZ_i)$s. However, \citet{daietal20} recently showed a
matrix-free profile likelihood approach to be more computationally
efficient than EM~\citep{owenandwang15} for the Gaussian factor model.
Building on that approach, we develop a AECM algorithm,
%alternating expectation conditional maximization  (AECM)  algorithm developed by,
where the expecation step (E-step) only integrates over the latent $R_i$s, while the conditional maximization step (CM-step)
maximizes over $\mu$ with the latent factor model parameters estimated after profiling the expected complete log-likelihood. Our
algorithms are further accelerated by SQUAREM~\citep{varadhanandroland08} to achieve superlinear convergence.

\subsubsection{An AECM algorithm with a duet of latent variables}
\label{sec:aecm-d}
The EM algorithm for ML estimation in a Gaussian factor
model~\citep{rubinandthayer82} can be naturally extended to the
PN factor model by considering both $\bR =
(R_1,R_2, \ldots, R_n)$ and $\bZ = [\bZ_1,\bZ_2,\ldots,\bZ_n]$ as
missing data. Then the $Q$-function, that is, the conditional
expectation of the complete data log-likelihood given the data $\bX$,
is
\begin{equation} \label{eq:exp-llk1}
        Q_D(\bmu,\bLambda,\bPsi;\bmu_t,\bLambda_{t},\bPsi_{t})
        = c - \shalf n[\log\det\bPsi
         + \Tr\{\bPsi^{-1} \bbE_{\bZ,\bR|\bX}({\bS}_{\bZ,\bR})\}],
\end{equation}
where $c$ is a constant free of $\bmu,\bLambda$ and $\bPsi$, and
$ \bbE_{\bZ,\bR|\bX}({\bS}_{\bZ,\bR})$ is the conditional expectation
of ${\bS}_{\bZ,\bR} = n^{-1}\sum^{n}_{i=1} (R_i{\bX}_i- \bmu -
\bLambda\bZ_i)(R_i{\bX}_i- \bmu-
\bLambda{\bZ}_i)^\top$ %over the joint distribution of $(\bR,\bZ)$
given $\bX$, at the current parameter values $(\mu_t,\bLambda_t,\bPsi_t)$.
The E-step needs this conditional expectation,
%$ \bbE^{(t)}_{\bZ,\bR|\bX}({\bS}^{(t)}_{\bZ,\bR})$ at the current
%value of $(\bmu_t,\bLambda_t\bPsi_t)$.
but it is straightforward to show %our derivations in
(Section~\ref{sec:supp-fads-d}) that the
conditional distribution of $\bZ_i$ given $\bR$ and $\bX$ at the
current parameter values $(\bmu_t,\bLambda_t,\bPsi_t)$ is
$\mN_q(\bLambda_t^\top(\bLambda_t\bLambda_t^\top + \bPsi_t)^{-1}(R_i{\bX}_i-\bmu_t),
({\bI}+\bLambda_t^\top\bPsi_t^{-1}\bLambda_t)^{-1})$. Further, the conditional
expectations of $R_i$ and $R_i^2$ given $\bX$ are
$\bbE_{R_i|\bX_i}(R_i) = \mI_p(m_i,v_i)/\mI_{p-1}(m_i,v_i)$ and $\bbE_{R_i|\bX_i}(R_i^2) = \mI_{p+1}(m_i,v_i)/\mI_{p-1}(m_i,v_i)$ where
$m_i = \bX_i^\top (\bLambda_t\bLambda_t^\top+\bPsi_t)^{-1}\bmu_t/\bX_i^\top
(\bLambda_t\bLambda_t^\top+\bPsi_t)^{-1}\bX_i$, $v_i = 1/\bX_i^\top
(\bLambda_t\bLambda_t^\top \allowbreak +\bPsi_t)^{-1}\bX_i$ at the current
parameter values $(\bmu_t,\bLambda_t,\bPsi_t)$, and
\begin{equation}\label{eqn:integral}
 \mI_k(m,v) := \int_{0}^\infty R^k\exp\left\{-\frac{1}{2v}(R-m)^2\right\}dR,
\end{equation}
for $k\geq 0$. We discuss fast methods for numerically computing these quantities in Section \ref{sec:comp}.
%see Section~\ref{sec:supp-fads-d} for the results of E- and M- step computations)
%These conditional expectations are not available analytically, and Section \ref{sec:comp} develops novel numerical algorithms for computing these quantities.

Unlike for Gaussian factor models, the estimate of $\bmu$
cannot be profiled out but is part of the M-step. Further, there is
no closed-form solution for maximizing $Q_D(\cdot)$ with respect to
(w.r.t.) $\bmu,$
$\bLambda$ and $\bPsi$ simultaneously, nor are there easy numerical schemes to maximize $Q_D(\cdot)$.
%with respect to the three sets of parameters.
So we partition the parameter space naturally in
terms of $\{\bmu\}$, $\{\bLambda\}$, $\{\bPsi\}$ and conditionally maximize
$Q_D(\cdot)$ w.r.t. each set of parameters, in turn,
while fixing the other sets of parameters at their current
values. We alternate an E-step with each CM-step that we now discuss
individually.
It is important to note that all CM-steps in our AECM algorithm have closed-form solutions. The E-step has two sets (or a duet) of latent variables, so we use FADS-D to specify this algorithm.
\begin{comment}

Thus we develop an AECM algorithm for the PN factor  model where, given estimates of the parameters $\mu_t,$ $\bLambda_t$ and $\bPsi_t$ at $t$th iteration, we update the estimates as follows:\\
\begin{enumerate}
\item Compute $\bmu_{t+1} = \arg\max_{\bmu} Q_D(\bmu, \bLambda_t,\bPsi_t;\bmu_t, \bLambda_t,\bPsi_t)$ subject to $\|\bmu\|=1.$\\
\item Update conditional expectations and compute $\bLambda_{t+1} = \arg\max_{\bLambda}Q_1(\bmu_{t+1},\lambda,\bPsi_{t};\bmu_{t+1}, \bLambda_t,\bPsi_t).$ \\
\item Update conditional expectations and compute $\bPsi_{t+1} = \arg\max_{\bPsi}Q_1(\bmu_{t+1},\lambda_{t+1},\bPsi;\bmu_{t+1}, \bLambda_{t+1},\bPsi_t).$
\end{enumerate}
\end{comment}

The CM-step update for $\bmu$ at the $(t+1)$th iteration is
$\bmu_{t+1}\!\!=\!\!\arg\!\!\max_{\bmu} Q_D(\bmu, \bLambda_t,\bPsi_t;\bmu_t,
\bLambda_t,\bPsi_t)$ subject to the constraint $\|\bmu\|=1$. This
CM-step has the closed-form solution
\begin{equation} \label{eq:muEM}
         \mu_{t+1} = ({\bI}+\lambda\bPsi_t)^{-1}{\bb}_t,
\end{equation}
where %${\bb}_t=n^{-1}\sum^{n}_{i=1}\left\{{\bbE(R_i|\bX;\bmu_t,\bLambda_t,\bPsi_t)~{\bX}_i}-\bLambda_t\bbE({\bZ}_i|\bX;\bmu_t,\bLambda_t,\bPsi_t)\right\}$
${\bb}_t=\frac1n\sum^{n}_{i=1}\left\{\bbE_{R_i\mid\bX_i}(R_i)\bX_i-\bLambda_t\bbE_{\bZ_i|\bX_i}({\bZ}_i)\right\}$
and $\lambda$ is the Lagrange multiplier. The following proposition shows
that $\lambda$ can be computed easily, and so $\bmu_{t+1}$ has a unique closed-form solution for $\bmu_{t+1} = \arg\max_{\bmu} Q_D(\bmu, \bLambda_t,\bPsi_t;\bmu_t,
\bLambda_t,\bPsi_t)$ subject to the constraint $\|\bmu\|=1$.
%the maximum of $\bmu\mapsto Q_D(\bmu_{t+1},\lambda_{t+1},\bPsi;\bmu_{t+1},\bLambda_{t+1},\bPsi_t)$.

\begin{proposition}\label{proposition:maxmu}
The global maximizer of $\bmu\mapsto Q_D(\bmu,\bLambda_t,\bPsi_t)$ is given by \eqref{eq:muEM}, where $\lambda$ is a root of $f(u) = {\bb}_t^\top ({\bI} + u\bPsi_{t})^{-2}{\bb}_t - 1.$
Moreover, if $\|{\bb}_t\| > 1,$ then $\lambda$ is the unique root of $f(u)$ between $0$ and ${\bb}_t^\top \bPsi_{t}^{-1}{\bb}_t/2$, and otherwise the unique negative root between $(|b^*|/2 - 1)/\psi^*$ and 0, where $\psi^*$ is the largest diagonal entry of $\bPsi_t$ and $b^*$ is the corresponding entry in ${\bb}_t$.
\end{proposition}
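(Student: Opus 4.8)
The plan is to reduce the constrained maximization to an explicitly solvable one‑dimensional secular equation, exhibit the maximizer directly (not just a stationary point), and then bracket the relevant root.

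\emph{Reduction.} First I would expand the trace in \eqref{eq:exp-llk1}: keeping only the terms that depend on $\bmu$, maximizing $\bmu\mapsto Q_D(\bmu,\bLambda_t,\bPsi_t)$ subject to $\|\bmu\|=1$ is equivalent to minimizing the strictly convex quadratic $m(\bmu)=\tfrac12\bmu^\top\bPsi_t^{-1}\bmu-\bmu^\top\bPsi_t^{-1}\bb_t$ on the unit sphere, with $\bb_t$ exactly the vector appearing in \eqref{eq:muEM}. Introducing a Lagrange multiplier $\lambda$ for $\|\bmu\|^2=1$, the stationarity condition reads $(\bPsi_t^{-1}+\lambda\bI)\bmu=\bPsi_t^{-1}\bb_t$; since $\bPsi_t=\diag(\psi_1,\dots,\psi_p)$ with all $\psi_j>0$, this rearranges to $\bmu=(\bI+\lambda\bPsi_t)^{-1}\bb_t$, i.e. \eqref{eq:muEM}, and imposing $\|\bmu\|=1$ gives $f(\lambda)=\bb_t^\top(\bI+\lambda\bPsi_t)^{-2}\bb_t-1=0$. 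Writing $b_j$ for the entries of $\bb_t$, $\psi^*=\max_j\psi_j$, and $b^*$ for the entry of $\bb_t$ at the index attaining $\psi^*$, everything decouples coordinatewise, $f(u)=\sum_j b_j^2/(1+u\psi_j)^2-1$, whose poles on $\bbR$ are the negatives $-1/\psi_j$, the largest (right‑most) of which is $-1/\psi^*$.

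\emph{The distinguished root and global optimality.} On $(-1/\psi^*,\infty)$ every factor $1+u\psi_j$ is strictly positive, so there $f$ is smooth with $f'(u)=-2\sum_j b_j^2\psi_j/(1+u\psi_j)^3<0$ (using $\bb_t\neq\bzero$); since $f(u)\to+\infty$ as $u\downarrow-1/\psi^*$ (the $\psi^*$‑term blows up) and $f(u)\to-1$ as $u\to\infty$, there is a \emph{unique} root $\lambda^*\in(-1/\psi^*,\infty)$. To show the associated $\bmu^*=(\bI+\lambda^*\bPsi_t)^{-1}\bb_t$ is the \emph{global} maximizer — and, crucially, that no spurious root of $f$ in one of the other bands between consecutive poles can do better — I would add the sphere‑constant $\tfrac{\lambda^*}{2}\|\bmu\|^2$ to $m$: on $\|\bmu\|=1$ the function $g(\bmu)=\tfrac12\bmu^\top(\bPsi_t^{-1}+\lambda^*\bI)\bmu-\bmu^\top\bPsi_t^{-1}\bb_t$ differs from $m(\bmu)$ by a constant, and since $\lambda^*>-1/\psi^*=-\lambda_{\min}(\bPsi_t^{-1})$ its Hessian $\bPsi_t^{-1}+\lambda^*\bI$ is positive definite. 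Hence $g$ is strictly convex on $\bbR^p$ with unique unconstrained minimizer $\bmu^*$, and $f(\lambda^*)=0$ says precisely that $\bmu^*$ lies on the sphere; a global minimizer over $\bbR^p$ that happens to satisfy the constraint is a fortiori the constrained minimizer, so $\bmu^*$ maximizes $Q_D$ globally. This convexification is the step I expect to be the main obstacle, because the Lagrange condition by itself only produces stationary points and cannot by itself single out the correct multiplier.

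\emph{Bracketing $\lambda^*$.} It remains to localize $\lambda^*$ via monotonicity of $f$ on $(-1/\psi^*,\infty)$. If $\|\bb_t\|>1$, then $f(0)=\|\bb_t\|^2-1>0$, while at $u_0=\bb_t^\top\bPsi_t^{-1}\bb_t/2=\tfrac12\sum_j b_j^2/\psi_j$ the elementary bound $(1+x)^2>2x$ (since $(1+x)^2-2x=1+x^2>0$) with $x=u_0\psi_j$ gives $f(u_0)=\sum_j b_j^2/(1+u_0\psi_j)^2-1<\sum_j b_j^2/(2u_0\psi_j)-1=\tfrac{1}{2u_0}\sum_j b_j^2/\psi_j-1=0$; hence $\lambda^*\in(0,\bb_t^\top\bPsi_t^{-1}\bb_t/2)$. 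If instead $\|\bb_t\|<1$, then $f(0)=\|\bb_t\|^2-1<0$, whereas at $u=(|b^*|-1)/\psi^*$, which lies in $(-1/\psi^*,0)$ because $0<|b^*|\le\|\bb_t\|<1$, the $\psi^*$‑term equals $(b^*)^2/|b^*|^2=1$ and all other terms are nonnegative, so $f\ge 0$ there; monotonicity then yields $\lambda^*\in[(|b^*|-1)/\psi^*,0)\subseteq((|b^*|/2-1)/\psi^*,0)$, the stated (slightly looser) bracket. The one technicality I would still have to address is the degenerate ``hard case'' in which $b^*=0$ for every index attaining $\psi^*$: then $f$ need not blow up at $-1/\psi^*$, the maximizer occurs at the boundary $\lambda^*=-1/\psi^*$ with a Moore--Penrose pseudoinverse replacing $(\bI+\lambda^*\bPsi_t)^{-1}$, and the statement is recovered by continuity; I would either dispatch this case separately or remark that it is non‑generic.
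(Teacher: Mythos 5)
Your proposal is correct and follows essentially the same route as the paper, which proves the result via Theorem~\ref{theorem:maxmu} (minimize $\Tr\,\bB^{-1}(\bb-\bx)(\bb-\bx)^\top$ on the unit sphere with $\bB=\bPsi_t$): the same Lagrangian stationarity, the same secular equation $f(u)=\bb_t^\top(\bI+u\bPsi_t)^{-2}\bb_t-1$, and the same bracketing bounds (the $(1+x)^2>2x$ estimate for the positive root and forcing the $\psi^*$-term to equal one for the negative root). Your two refinements --- replacing the paper's second-order/Hessian argument for global optimality by the convexification of $g(\bmu)=\tfrac12\bmu^\top(\bPsi_t^{-1}+\lambda^*\bI)\bmu-\bmu^\top\bPsi_t^{-1}\bb_t$, and explicitly flagging the non-generic hard case $b^*=0$ that the paper's proof silently excludes --- are sound and slightly strengthen the argument, but do not change its substance.
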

\begin{proof}
The proposition is a corollary of Theorem~\ref{theorem:maxmu} in Section
\ref{sec:supp-proposition1&2} with $\bB=\bPsi$.
\end{proof}

For the CM-step for $\bLambda$, we maximize $Q_D(\bmu_{t+1},\bLambda,\bPsi_t;\bmu_{t+1},\bLambda_t,\bPsi_t)$ w.r.t. $\bLambda$ at the current values of the other parameters,
that is at $\bmu_{t+1}$ and $\bPsi_t$ to get the update
%have  the maximizer of $Q_1(\bmu_{t+1},\bLambda,\bPsi_t|\bmu_{t+1},\bLambda,\bPsi_t)$ is given  by
\begin{equation} \label{eq:dem-lambda}
{\bLambda}_{t+1} =
\left[\sum^{n}_{i=1}\bbE_{\bZ_i,R_i|\bX_i} \{(R_i\bX_i  -
\bmu_{t+1}){\bZ}_i^\top\}\right]\left\{\sum^{n}_{i=1}{\bbE_{\bZ_i | \bX_i} ({\bZ}_i{\bZ}_i^\top)}\right\}^{-1},
\end{equation}
where the conditional expectations are %$\bbE_{Dt}$ denotes the conditional expectation  over  $(\bR,\bZ)$  given ${\bX}_i$
at the current parameter values $\{\bmu_{t+1},\bLambda_{t},\bPsi_{t}\}$.

Finally, the CM-step for $\bPsi$ maximizes
$Q_D(\bmu_{t+1},\bLambda_{t+1},\bPsi;\bmu_{t+1},\bLambda_{t+1},\bPsi_{t})$
w.r.t. $\bPsi$ while keeping fixed the values of the other
parameters at $\bmu_{t+1}$ and $\bLambda_{t+1}$. The CM-step update is then
\begin{equation} \label{eq:dem-psi}
    \bPsi_{t+1} =
    \mathrm{diag}\left[\frac1n\sum_{i=1}^n\bbE_{R_i,\bZ_i | \bX_i}\left\{(R_i{\bX}_i- \bmu_{t+1} - \bLambda_{t+1}{\bZ}_i)(R_i{\bX}_i- \bmu_{t+1} - \bLambda_{t+1}{\bZ}_i)^\top \right\}\right],
 \end{equation}
where the conditional expectation is at the current parameter values $\{\bmu_{t+1},\bLambda_{t+1},\bPsi_t\}$.

%Although the AECM algorithm is guaranteed to converge to a local maximum, in practice several initializations are tried to increase the potential to converge to a global maximum. Moreover, these algorithms are known to have slow convergence in the the Gaussian factor models. Recently \cite{daietal20} have developed a matrix-free profile likelihood method called FAD for estimating Gaussian factor models that is superior to the EM algorithms both in terms of computational speed and accuracy. Next we build upon FAD and develop a novel profile AECM algorithm (FADS-P) for the projected factor normal model.

\subsubsection{A faster AECM algorithm with profiling}
\label{sec:meth-profile-aecm}
The methodology of Section~\ref{sec:aecm-d} has two sets of latent
variables in the lengths $R_i$s and the factors $\bZ_i$s. The latter
also arise in the case of EM methods for Gaussian factor 
models, but can be eliminated by employing matrix-free profile likelihood
maximization~\citep{daietal20}. We propose to speed up our AECM algorithm by 
incorporating these methods. % for the projected normal factor model, by 
We drop $\bZ_i$s from our erstwhile duet and only consider $R_i$s as our latent variables. % instead
                                % of both $\bR$ and $\bZ,$ we obtain
                                % the Q-function
Then the $Q$-function is
\begin{equation} \label{eq:exp-llk2}
        Q_P(\bmu,\bLambda,\bPsi;\bmu_t,\bLambda_t,\bPsi_t) = c -\frac
        n2\{\log\det (\bLambda\bLambda^\top+\bPsi) +\Tr
        (\bLambda\bLambda^\top+\bPsi)^{-1}\bbE_{\bR | \bX} (\bS_{\bR}) \},
\end{equation}
where $c$ is a constant free of $\bmu,$ $\bLambda$ and $\bPsi,$ and $\bbE_{\bR | \bX}
(\bS_{\bR})$ is the conditional expectation of ${\bS}_{\bR} = n^{-1}\sum^{n}_{i=1}
(R_i{\bX}_i-\bmu)(R_i{\bX}_i-\bmu)^\top$ %over $\bR$ given $\bX$
at the current parameter values $(\bmu_t,\bLambda_t,\bPsi_t)$. The calculation
of the conditional expectation $\bbE_{\bR |\bX}(\bS_{\bR})$ involves the same
conditional expectations of $R_i$ and $R_i^2$ given $\bX$ % at the current parameter values $(\bmu_t,\bLambda_t,\bPsi_t)$
as in FADS-D and are computed similarly. We now discuss the
CM-steps of the AECM algorithm, where we partition
the parameter space into $\{\bmu\}$ and $\{\bLambda,\bPsi\}$.
%and apply CM-steps to the two partitions, with the E-step interspersed between each CM-step.

The CM-step for $\bmu$ (at the $(t+1)$th cycle) involves maximizing
$Q_P(\bmu,\bLambda_t,\bPsi_t; \bmu_t,\bLambda_t,\bPsi_t)$ as
a function of $\bmu$ subject to the constraint $\|\bmu\|=1$. % at the current estimates $\bmu_t,$ $\bLambda_t$ and $\bPsi_t$ we maximize
 Analogous to Proposition
\ref{proposition:maxmu}, a unique closed-form solution for the CM-step
for $\bmu$ can be obtained through the following %proposition.
\begin{proposition}\label{lem:maxmu2}
 The global maximizer
 of $\bmu\mapsto Q_P(\bmu,\bLambda_t,\bPsi_t)$ is $\bmu_{t+1} = ({\bI}+\lambda\Sigma_t)^{-1}{\bc}_t$,
 where ${\bc_t} = n^{-1}\sum_{i=1}^n
 \bX_i\bbE(R_i|\bX_i,\bmu_t,\bLambda_t,\bPsi_t)$,
 $\Sigma_t=\bLambda_t\bLambda_t^\top + \bPsi_t$ and $\lambda$ is a
 root of $g(u) = {\bc}_t^\top ({\bI} + u\Sigma_{t})^{-2}{\bc}_t - 1.$
 Moreover, if $\|{\bc}_t\| > 1,$ then $\lambda$ is the unique root of
 $g(u)$ between $0$ and ${\bc}_t^\top \Sigma_{t}^{-1}{\bc}_t/2$,
 otherwise it is the unique negative root between $(|{\bv}_*^\top{\bc}_t|/2 - 1)/\sigmaold^*$ and 0, where $\sigmaold^*$ is the largest eigenvalue of $\Sigma_t$ with eigenvector ${\bv}_*.$
\end{proposition}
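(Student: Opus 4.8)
The plan is to reduce the constrained maximization of $\bmu\mapsto Q_P(\bmu,\bLambda_t,\bPsi_t)$ to the quadratic-on-the-sphere problem underlying Proposition~\ref{proposition:maxmu} and then to invoke Theorem~\ref{theorem:maxmu} with $\bB$ taken to be $\Sigma_t=\bLambda_t\bLambda_t^\top+\bPsi_t$ in place of $\bPsi$. First I would isolate the $\bmu$-dependence in \eqref{eq:exp-llk2}: the term $\log\det\Sigma_t$ is free of $\bmu$, and with $\bS_{\bR}=n^{-1}\sum_i(R_i\bX_i-\bmu)(R_i\bX_i-\bmu)^\top$, taking conditional expectations and collecting linear and quadratic terms gives $\Tr\{\Sigma_t^{-1}\bbE_{\bR|\bX}(\bS_{\bR})\}=\bmu^\top\Sigma_t^{-1}\bmu-2\bmu^\top\Sigma_t^{-1}\bc_t+(\text{terms free of }\bmu)$, where $\bc_t=n^{-1}\sum_i\bX_i\,\bbE(R_i|\bX_i,\bmu_t,\bLambda_t,\bPsi_t)$. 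Hence maximizing $Q_P$ subject to $\|\bmu\|=1$ is equivalent to minimizing the strictly convex quadratic $\bmu^\top\Sigma_t^{-1}\bmu-2\bmu^\top\Sigma_t^{-1}\bc_t$ over the unit sphere; since $\Sigma_t$ is symmetric positive definite ($\bPsi_t$ has positive diagonal entries and $\bLambda_t\bLambda_t^\top$ is positive semidefinite), this is exactly the setting of Theorem~\ref{theorem:maxmu} with $\bB=\Sigma_t$, so Proposition~\ref{lem:maxmu2} follows as its corollary.

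For completeness I would also sketch the localization that the theorem supplies. The Lagrange condition is $(\bI+u\Sigma_t)\bmu=\bc_t$, so $\bmu_{t+1}=(\bI+u\Sigma_t)^{-1}\bc_t$, and $\|\bmu\|=1$ forces the secular equation $g(u)=\bc_t^\top(\bI+u\Sigma_t)^{-2}\bc_t-1=0$. Diagonalizing $\Sigma_t=\sum_j\sigmaold_j\bv_j\bv_j^\top$ with $\sigmaold_1\ge\cdots\ge\sigmaold_p>0$ gives $g(u)=\sum_j(\bv_j^\top\bc_t)^2(1+u\sigmaold_j)^{-2}-1$, which on $(-1/\sigmaold_1,\infty)$ is strictly decreasing (its derivative equals $-2\sum_j\sigmaold_j(\bv_j^\top\bc_t)^2(1+u\sigmaold_j)^{-3}<0$ whenever $\bc_t\neq\bzero$), tends to $+\infty$ as $u\downarrow-1/\sigmaold_1$ and to $-1$ as $u\to\infty$; by the standard trust-region argument (as in Proposition~\ref{proposition:maxmu}) the unique root in that interval, at which $\bI+u\Sigma_t$ is positive definite, is the global minimizer. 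If $\|\bc_t\|>1$ then $g(0)=\|\bc_t\|^2-1>0$, so the root is positive, and the elementary bound $(1+x)^{-2}<1/(2x)$ for $x>0$ yields $g(\bc_t^\top\Sigma_t^{-1}\bc_t/2)<0$, placing it in $(0,\bc_t^\top\Sigma_t^{-1}\bc_t/2)$. If $\|\bc_t\|\le1$ then $g(0)\le0$ puts the root in $(-1/\sigmaold_1,0]$; with $\sigmaold^*=\sigmaold_1$, $\bv_*=\bv_1$ and $u_*=(|\bv_*^\top\bc_t|/2-1)/\sigmaold^*$ one has $1+u_*\sigmaold^*=|\bv_*^\top\bc_t|/2$ and $u_*<0$ (since $|\bv_*^\top\bc_t|\le\|\bc_t\|\le1<2$), hence $g(u_*)\ge(\bv_*^\top\bc_t)^2/(|\bv_*^\top\bc_t|/2)^2-1=3>0$, so the root lies in $(u_*,0)$. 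Setting $\lambda$ to this root gives the stated expression and bounds.

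\emph{Main obstacle.} There is no real difficulty here, only bookkeeping. The one point that needs care is certifying that the distinguished root delivers the \emph{global} maximizer of $Q_P$ rather than merely a KKT point; I would reuse the argument behind Theorem~\ref{theorem:maxmu}, namely that strict convexity of $\bmu^\top\Sigma_t^{-1}\bmu-2\bmu^\top\Sigma_t^{-1}\bc_t$ singles out the Lagrange multiplier with $\bI+u\Sigma_t$ positive definite as the unique optimal one. The only change from Proposition~\ref{proposition:maxmu} is that $\Sigma_t$ is not diagonal, so ``largest diagonal entry of $\bPsi_t$'' and its companion entry of $\bb_t$ are replaced by ``largest eigenvalue $\sigmaold^*$ of $\Sigma_t$'' and $\bv_*^\top\bc_t$; moving to the eigenbasis of $\Sigma_t$, under which the constraint $\|\bmu\|=1$ is invariant, makes this substitution transparent and confirms that Proposition~\ref{lem:maxmu2} is precisely the $\bB=\Sigma_t$ instance of Theorem~\ref{theorem:maxmu}.
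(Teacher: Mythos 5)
Your proof is correct and follows essentially the same route as the paper: the paper's proof of Proposition~\ref{lem:maxmu2} is precisely the observation that, after isolating the $\bmu$-dependent part of $Q_P$, the problem is the constrained quadratic of Theorem~\ref{theorem:maxmu} with $\bB=\Sigma_t$ and $\bb=\bc_t$. The secular-equation and root-localization details you sketch are the same as the supplement's proof of that theorem, with the diagonal entries of $\bPsi_t$ replaced by the eigenpairs of $\Sigma_t$, so nothing further is needed.
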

\begin{proof}
  The proposition follows as a corollary to
  Theorem~\ref{theorem:maxmu} in Section~\ref{sec:supp-proposition1&2} with $\bB=\bSigma$.
\end{proof}
\noindent The root $\lambda$  in Proposition~\ref{lem:maxmu2} can be
computed by bisection while ${\bv}_*$ can be computed using a
restarted Lanczos algorithm, as  seen shortly in Section \ref{sec:comp}. Also,  the
inverses of $\Sigma_t$ or ${\bI}+\lambda\Sigma_t$ are computed using
the Sherman-Morrison-Woodbury identity \citep{hendersonandsearle81}.

Given $\bmu_{t+1}$ and $\bX_i$, $i=1,2,\ldots,n$, we obtain
$\bbE(R_i\mid\bX_i)$ and $\bbE(R^2_i\mid\bX_i)$ and let
\[\widetilde\bS_t = \bbE_{\bR | \bX}
(\bS_{\bR})\bigg\vert_{\bmu=\bmu_{t+1}} \equiv \frac1n\sum_{i=1}^n\bbE_{R_i|\bX_i} \left\{(R_i\bX_i-\bmu_{t+1})(R_i\bX_i-\bmu_{t+1})^\top\right\}\]
at the current parameter values.
With $\bmu$ held fixed at $\bmu_{t+1}$, maximizing~\eqref{eq:exp-llk2} w.r.t. $\bLambda$ and $\bPsi$ is equivalent to maximizing the
log-likelihood of a  Gaussian factor model where the mean is profiled
out and the ``sample covariance'' matrix is $\widetilde\bS_t$.
%Now, if $\bmu$ is held fixed at $\mu_{t+1}$, $Q_P(\cdot;\cdot)$ is
%equivalent to the log-likelihood of a Gaussian factor model with
%``sample covariance matrix'' $\bbE_2\bS_2,$ and thus
%So, maximizing $Q_P(\cdot;\cdot)$ with respect to $\bLambda$ and $\bPsi$, with $\bmu$ fixed at $\bmu_{t+1}$, is equivalent to fitting a Gaussian factor model to the covariance matrix $\widetilde\bS_t$. 
Therefore, we jointly update $\bLambda$ and $\bPsi$ by following~\citet{daietal20} and profiling out $\bLambda$, using %. Specifically, we use
the common (in Gaussian factor analysis) and computationally useful
identifiability constraint on $\bLambda$ that the signal
matrix $\Gamma = \bLambda^\top\bPsi^{-1}\bLambda$ is diagonal with
non-increasing diagonal entries. This scale-invariant constraint is
%completely
determined %except for changes in 
up to sign in the columns of $\bLambda$. Under this constraint,
$\bLambda$ is profiled out from
$Q_P(\bmu_{t+1},\bLambda,\bPsi;\bmu_{t+1},\bLambda_t,\bPsi_t)$ for a
given $\bPsi$ as per the following
\begin{proposition}\label{proposition:profileout-ch3}
  For a positive-definite diagonal matrix $\bPsi$, let $\theta_1 \geq
  \theta_2 \geq \cdots \geq \theta_q$, be the $q$ largest eigenvalues
  of ${\bW} = \bPsi^{-1/2}\widetilde\bS_t\bPsi^{-1/2}$. Let the columns of
  $\bV_q$ store the eigenvectors corresponding to these eigenvalues. Then the function $Q_P(\bmu_{t+1},\bLambda,\bPsi;\bmu_{t+1},\bLambda_t,\bPsi_t)$ is maximized w.r.t. $\bLambda$ at ${\hat\bLambda} = \bPsi^{1/2}{\bV}_q\bm\Delta,$ where $\bm\Delta$ is a $q\times q$ diagonal matrix with $i$th diagonal entry as $[\max(\theta_i-1,0)]^{1/2}.$ The profile Q-function is %given by
\begin{equation}\label{eqn:profilelikelihood-ch3}
% \begin{split}
Q_p(\bPsi) \doteq c' - \frac n2 \{\log\det\bPsi + \Tr \bPsi^{-1}\widetilde\bS_t
 + \sum_{i=1}^q(\log\theta_i - \theta_i + 1)\},
% \end{split}
\end{equation}
where $c'$ is a constant %that does not depend on 
free of $\bPsi.$ Further, %the gradient of $Q_p(\bPsi)$ is
$\nabla Q_p(\bPsi) = -\shalf n~\mathrm{diag}({\hat{\bLambda}}{\hat{\bLambda}}^\top + \bPsi - \widetilde\bS_t).$
\end{proposition}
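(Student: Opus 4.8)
The plan is to turn the $\bLambda$-maximization into a classical eigenvalue problem by whitening. Write $\bSigma=\bLambda\bLambda^\top+\bPsi$ and substitute $\bLambda=\bPsi^{1/2}\bL$, so that $\log\det\bSigma=\log\det\bPsi+\log\det(\bI+\bL\bL^\top)$ and $\Tr\bSigma^{-1}\widetilde\bS_t=\Tr(\bI+\bL\bL^\top)^{-1}\bW$ with $\bW=\bPsi^{-1/2}\widetilde\bS_t\bPsi^{-1/2}$. Hence, for fixed $\bPsi$, maximizing $Q_P(\bmu_{t+1},\bLambda,\bPsi;\bmu_{t+1},\bLambda_t,\bPsi_t)$ over $\bLambda$ is equivalent to minimizing $\phi(\bL):=\log\det(\bI+\bL\bL^\top)+\Tr(\bI+\bL\bL^\top)^{-1}\bW$, a function of $\bL$ only through $\bL\bL^\top$; the requirement that $\bLambda^\top\bPsi^{-1}\bLambda=\bL^\top\bL$ be diagonal with non-increasing entries is thus a harmless choice of representative within each orbit. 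I would then parametrize $\bL=\bP\bm\Delta$ with $\bP^\top\bP=\bI_q$ and $\bm\Delta=\diag(\delta_1,\ldots,\delta_q)$, $\delta_1^2\ge\cdots\ge\delta_q^2\ge0$. By the determinant identity $\det(\bI_p+\bL\bL^\top)=\det(\bI_q+\bL^\top\bL)$ one gets $\log\det(\bI+\bL\bL^\top)=\sum_{i=1}^q\log(1+\delta_i^2)$, and by the Sherman--Morrison--Woodbury identity $\Tr(\bI+\bL\bL^\top)^{-1}\bW=\Tr\bW-\sum_{i=1}^q\frac{\delta_i^2}{1+\delta_i^2}\bp_i^\top\bW\bp_i$, where $\bp_i$ is the $i$th column of $\bP$; so $\phi(\bL)=\Tr\bW+\sum_{i=1}^q\big[\log(1+\delta_i^2)-\tfrac{\delta_i^2}{1+\delta_i^2}\bp_i^\top\bW\bp_i\big]$.

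The hard part will be the joint minimization over the orientation $\bP$ and the scales $\bm\Delta$. For fixed $\bm\Delta$ with $\delta_1^2\ge\cdots\ge\delta_q^2\ge0$, the weights $w_i=\delta_i^2/(1+\delta_i^2)$ are non-negative and non-increasing, so the $\bP$-subproblem is to maximize $\sum_i w_i\bp_i^\top\bW\bp_i$ over $\bP^\top\bP=\bI_q$. Using Abel summation, $\sum_i w_i\bp_i^\top\bW\bp_i=\sum_{k=1}^q(w_k-w_{k+1})\sum_{i=1}^k\bp_i^\top\bW\bp_i$ with $w_{q+1}:=0$, and Ky Fan's maximum principle gives $\sum_{i=1}^k\bp_i^\top\bW\bp_i\le\theta_1+\cdots+\theta_k$ for each $k$, with simultaneous equality when the $\bp_i$ are orthonormal eigenvectors of $\bW$ for the $q$ largest eigenvalues $\theta_1\ge\cdots\ge\theta_q$, taken in order. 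Since the maximizer $\bP=\bV_q$ does not depend on the particular (ordered) weights, it is optimal uniformly in $\bm\Delta$; fixing $\bP=\bV_q$, the residual problem $\min_{\bm\Delta}\sum_i\big[\log(1+\delta_i^2)-\tfrac{\delta_i^2}{1+\delta_i^2}\theta_i\big]$ is separable, and the scalar function $x\mapsto\log(1+x)-\theta_i x/(1+x)$ has derivative $((1+x)-\theta_i)/(1+x)^2$, hence minimizer $x=\max(\theta_i-1,0)$; this sequence is automatically non-increasing in $i$, consistent with the constraint. Back-substitution gives $\hat\bLambda=\bPsi^{1/2}\bV_q\bm\Delta$ with $\bm\Delta_{ii}=[\max(\theta_i-1,0)]^{1/2}$, and, using $\log(1+\delta_i^2)-\theta_i\delta_i^2/(1+\delta_i^2)=\log\theta_i-\theta_i+1$ (valid for $\theta_i>1$, the regime in which $\bLambda$ genuinely has rank $q$), together with $\Tr\bW=\Tr\bPsi^{-1}\widetilde\bS_t$ and $\log\det\bSigma=\log\det\bPsi+\sum_i\log(1+\delta_i^2)$, one recovers \eqref{eqn:profilelikelihood-ch3} with $c'=c$.

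For the final claim I would differentiate the closed form of $Q_p(\bPsi)$ in the diagonal entries $\psi_j$. The only non-elementary ingredient is $\partial\theta_i/\partial\psi_j$: since $\bW\bv_i=\theta_i\bv_i$ with $\|\bv_i\|=1$, first-order eigenvalue perturbation gives $\partial\theta_i/\partial\psi_j=\bv_i^\top(\partial\bW/\partial\psi_j)\bv_i=-\theta_i v_{ij}^2/\psi_j$, where $v_{ij}$ is the $j$th entry of $\bv_i$ and one uses $\widetilde\bS_t\bPsi^{-1/2}\bv_i=\theta_i\bPsi^{1/2}\bv_i$ to simplify. Combining this with $\partial_{\psi_j}\log\det\bPsi=\psi_j^{-1}$ and $\partial_{\psi_j}\Tr\bPsi^{-1}\widetilde\bS_t=-(\widetilde\bS_t)_{jj}\psi_j^{-2}$, the contributions of the $\sum_i(\log\theta_i-\theta_i+1)$ terms telescope against $(\hat\bLambda\hat\bLambda^\top)_{jj}=\psi_j\sum_i(\theta_i-1)v_{ij}^2$, and collecting terms yields $\nabla Q_p(\bPsi)=-\shalf n\,\diag(\hat\bLambda\hat\bLambda^\top+\bPsi-\widetilde\bS_t)$ (the overall positive diagonal scaling is immaterial for the stationary-point characterization that is used subsequently). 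Alternatively, and perhaps more cleanly, one can avoid the eigenvalue derivatives by the envelope theorem: because $\hat\bLambda$ maximizes $Q_P$ over $\bLambda$ and $Q_P$ is invariant under $\bLambda\mapsto\bLambda\bQ$ for orthogonal $\bQ$, $\nabla_{\bPsi}Q_p(\bPsi)=\big[\nabla_{\bPsi}Q_P(\bmu_{t+1},\bLambda,\bPsi;\bmu_{t+1},\bLambda_t,\bPsi_t)\big]_{\bLambda=\hat\bLambda}$, which is immediate from $Q_P=c-\shalf n\{\log\det\bSigma+\Tr\bSigma^{-1}\widetilde\bS_t\}$.
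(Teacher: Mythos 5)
Your argument is correct and lands on the same $\hat\bLambda$ and the same profile function, but by a genuinely different route than the paper. The paper's proof (Section~\ref{sec:supp-proposition3-ch3}) works from the first-order conditions: it writes the score equation $\bLambda(\bI_q+\bLambda^\top\bPsi^{-1}\bLambda)=\widetilde\bS_t\bPsi^{-1}\bLambda$, solves it through the eigen-decomposition of $\bW$, exhibits the stationary point $\hat\bLambda=\bPsi^{1/2}\bV_q\bm\Delta$ (under $\mathbf{D}_q>\mathbf{I}_q$), and then evaluates $Q_P$ there via matrix identities; global optimality in $\bLambda$ is never argued and is implicitly the classical Lawley--J\"oreskog fact. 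You instead whiten via $\bLambda=\bPsi^{1/2}\bL$, reduce to minimizing $\log\det(\bI+\bL\bL^\top)+\Tr(\bI+\bL\bL^\top)^{-1}\bW$, and solve that problem globally: the SVD-type parametrization $\bL=\bP\bm\Delta$, Abel summation plus Ky Fan for the orientation (valid uniformly over ordered weights, which is the key point), and a separable scalar minimization giving $\delta_i^2=\max(\theta_i-1,0)$. This buys an actual proof that $\hat\bLambda$ is the \emph{global} maximizer over all $\bLambda$, which the paper's plug-in derivation leaves implicit, at the price of slightly heavier variational machinery; both derivations share the caveat that the displayed term $\sum_{i\le q}(\log\theta_i-\theta_i+1)$ is exact only when $\theta_i>1$ for all $i\le q$ (otherwise those summands are $0$), which you flag and the paper assumes. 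On the gradient, note that the paper's supplement does not derive the stated formula at all, and your perturbation computation is essentially right: writing $\hat\bSigma=\hat\bLambda\hat\bLambda^\top+\bPsi$, the exact derivative in the diagonal entries of $\bPsi$ is $-\shalf n\,\bPsi^{-2}\diag(\hat\bSigma-\widetilde\bS_t)$, i.e., the proposition's expression up to the positive diagonal factor $\bPsi^{-2}$ you mention only parenthetically; equivalently, the envelope-theorem route gives $-\shalf n\,\diag\{\hat\bSigma^{-1}(\hat\bSigma-\widetilde\bS_t)\hat\bSigma^{-1}\}$, which coincides with the $\bPsi^{-1}(\hat\bSigma-\widetilde\bS_t)\bPsi^{-1}$ form precisely because $\hat\bLambda$ satisfies the stationarity condition. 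Since that factor is positive definite it does not affect the first-order optimality check the paper relies on, but you should state the scaling explicitly rather than wave it off, as it is the one place where your derivation and the literal statement of the proposition differ.
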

\begin{proof}
See Section \ref{sec:supp-proposition3-ch3}.
\end{proof}
\noindent %Following \citet{daietal20}, 
Our algorithm here involves profiling in the second  CM-step, so we
call it FADS-P.  
We use the limited-memory Broyden-Fletcher-Goldfarb-Shanno
quasi-Newton algorithm \citep{byrdetal95} with box-constraints
(L-BFGS-B) to maximize \eqref{eqn:profilelikelihood-ch3}. Instead of
the exact Hessian matrix, this quasi-Newton algorithm uses values of
$Q_p(\Psi)$ and $\nabla Q_p(\Psi)$ from the last few (typically five) iterations to
find an effective approximation to the exact Newton-step, thereby
reducing storage costs from $O(p^2)$ to $O(p).$  Having obtained
$\bPsi_{t+1} = \arg\max Q_p(\bPsi)$, we use Proposition
\ref{proposition:profileout-ch3} to compute the eigenvalues and eigenvectors of 
$\bPsi_{t+1}^{-1/2}\widetilde\bS_t\bPsi_{t+1}^{-1/2}$ and 
update $\bLambda_{t+1} = \bPsi_{t+1}^{1/2}{\bV}_q\bm\Delta.$

We conclude with a few remarks on the CM-update of $\Psi$.
The function $Q_p(\bPsi)$ in
\eqref{eqn:profilelikelihood-ch3} depends on $\widetilde\bS_t$ only through the
$q$ largest eigenvalues of ${\bW}$, so %in order to compute
for $Q_p(\bPsi)$ and $\nabla Q_p(\bPsi)$, we need to compute only the
$q$ largest eigenvalues and corresponding eigenvectors of ${\bW}$. % and the corresponding eigenvectors. 
For $q\ll\min(n,p)$, the largest eigenvalues and eigenvectors are
speedily computed using the restarted Lanczos algorithm, with constraints on ${\bPsi}$
({\em e.g.}, ${\bPsi}=\sigmaold^2{\bI}_p,$ $\sigmaold^2 >0$) easily
incorporated. Also, $\nabla Q_p(\bPsi)$ is available in closed-form, enabling
a check for first-order optimality. % and ensuring high accuracy.
%Finally, $Q_p(\bPsi)$ is expressed in terms of $\widetilde\bS_t$, but in %. However, ML estimators are scale-equivariant, so at 
%each iteration of FADS we estimate $\bLambda$ and $\bPsi$ using the correlation matrix, which allows numerically robust optimization over a fixed, bounded rectangle $(0,1)^p$ independent of the scale of the data.
%The ML estimators are scale-invariant, so we scale them back to $\widetilde\bS_t$.
% A particular advantage of using the ``sample correlation matrix'' is that $Q_p(\bPsi)$ then needs to be optimized over a fixed bounded rectangle $(0,1)^p$ that does not depend on the data and is conceivably numerically robust.
%
%Thus, we have developed a faster AECM algorithm involving a different
%partitioning of the parameter space into $\{\bmu\}$ and
%$\{\bPsi,\bLambda\}$ for the CM-steps.
Finally, $Q_p(\bPsi)$ is expressed in terms of $\widetilde\bS_t$, but ML estimators are scale-equivariant, so 
each iteration of FADS-P estimates $\bLambda$ and $\bPsi$ using the
correlation matrix %, which allows numerically robust optimization over a fixed, bounded rectangle $(0,1)^p$ independent of the scale of the data.
 and  scales the estimates back to that of $\widetilde\bS_t$.
\subsubsection{Implementation via fast statistical computations}
\label{sec:comp}
We want our FADs algorithms to also be used for datasets with large
$n$ or $p$, as in our applications, so we now discuss and investigate
practical ways to speed up computations.
\paragraph{Computation of log-likelihood and conditional expectations}
Both the log-likelihood \eqref{eqn:pdfPN} and the conditional expectations require computing  \eqref{eqn:integral} for $k\in\{p-1,p,p+1\}$. %=p-1,p$ and $p+1.$
To reduce clutter, we write $\mI_k\equiv\mI_k(m,v).$ Integration by parts yields the recurrence relation \citep{pukkilaandrao88}
\begin{equation}\label{eqn:recI}
 \mI_{k+2} = m \mI_{k+1} + (k+1)v \mI_k.
\end{equation}
However, this recurrence relation is vulnerable to numerical overflow
for large $p$, so  we derive (Section~\ref{sec:supp-int}) a
recurrence relation for the ratio
%, which is presumably less vulnerable to overflow,
\begin{eqnarray}\label{eqn:recRatio}
\dfrac{\mI_{k+2}}{\mI_{k+1}} =  m + (k+1)v \dfrac{\mI_{k}}{\mI_{k+1}}.
\end{eqnarray}
Together, \eqref{eqn:recI}  and \eqref{eqn:recRatio} also yield
$\mI_{k+2}/\mI_{k} = (k+1)v + 
m\mI_{k+1}/\mI_{k}.$ When $m>0,$ the recurrence relation
\eqref{eqn:recRatio} is numerically stable and computed very quickly,
even for large $k.$ However, for $m<0,$ both 
\eqref{eqn:recI} and \eqref{eqn:recRatio} are unstable and often yield
negative values for $\mI_k$ with $k >10.$ So, 
for $m < 0$, we use quadrature to compute $\mI_k$, for
$k\in\{p-1,p,p+1\}$, using the fact that the integrand in
\eqref{eqn:integral} is strictly
log-concave and maximized at $\varrho = (m + \sqrt{m^2 + 4vk})/2$. We write
\begin{equation}
\label{gk.int}
  \log \mI_k = k\log \varrho - \frac{1}{2v}(\varrho-m)^2 + \log
  \int_{0}^\infty \left(\frac{R}{\varrho}\right)^k
  \exp\left\{-\frac{1}{2v}(R-\varrho)(R+\varrho-2m)\right\}dR.
\end{equation}
The integrand is bounded above by unity so we can easily
find ({\em e.g.}, using bisection) two quantities $0 <
\underline{\varrho} < \overline{\varrho}$ where the integrand is less than some small value ({\em e.g.}, $10^{-15}$). The integration in~\eqref{gk.int} is by non-adaptive
Gauss-Kronrod quadrature over %the interval
$[\underline{\varrho},\overline{\varrho}]$ with 10, 21, 43 and 87
points \citep{piessensetal83}: in our experiments, 43 points gave high numerical accuracy.
\paragraph{Computation of partial eigenvalues and eigenvectors}
%\label{sec:lanczos-ch3}
We can obtain the $q$
%The computation of $Q_p(\cdot)$ and $\nabla Q_p(\cdot)$ needs the $q$
largest eigenvalues and eigenvectors of $\mathbf{W}$ %, which can be obtained using
via the implicitly restarted Lanczos algorithm~\citep{sorensen92}. Suppose that $m = \max\{2q + 1, 20\}$ and that $\mathbf{f}_1 \in \bbR^p$ is any vector with $\|\mathbf{f}_1\| = 1$ and initialize $\mathbf{F}_1 = \mathbf{f}_1.$
We then employ the Lanczos iterations \citep{duttaandmondal15} as follows.
For $k=1,2,\ldots,m$,
%we employ the following steps:
\begin{itemize}
\item Compute ${\bu}_k = {\bW\bff}_k$ and $\alpha_k = {\bff}_k^\top{\bu}_k$.
\item Compute ${\br}_k = {\bu}_k - \alpha_k{\bff}_k - \beta_{k-1}{\bff}_{k-1}$ (assuming $\beta_0=0$ and ${\bff}_0 = {\bzero}$).
\item Let $\beta_k = \|{\br}_k\|$ and if $k<q$ and $\beta_k\ne 0,$ compute ${\bff}_{k+1} = \frac{{\br}_k}{\beta_k}$ and set ${\bF}_{k+1}=[{\bF}_k,{\bff}_{k+1}]$.
\end{itemize}
Suppose that ${\bT}_m$ is the $m\times m$ symmetric tridiagonal
matrix with diagonal entries $\alpha_1,\alpha_2,\ldots,\alpha_m$ and $j$th off-diagonal % the $(j,j+1)$th and $(j+1,j)$th
entries $\beta_j$ for $j=1,\ldots,m-1$.
% and all other entries as zero.
We compute the eigenvalues $e_1 > e_2 > \cdots > e_m$ of ${\bT}_m$ with eigenvectors ${\bg}_1,{\bg}_2,\ldots,{\bg}_m$ via a Sturm sequencing algorithm \citep{wilkinson1958}. Also let ${\bv}_j = {\bF}_m{\bg}_j,$ for $1\leq j \leq m.$ The $e_j$'s and ${\bv}_j$'s are called Ritz values and Ritz vectors of ${\bW}$. It can be shown that $\|{\bW\bv}_j - {\bv}_je_j\| = \beta_{m}|g_{j,m}|,$ for $g_{j,m}$ the $m$th entry of vector ${\bg}_j,$ for $j=1,2,\ldots,m.$ The algorithm stops if
\begin{equation}\label{eqn:lanczosStop}
 \beta_{m}\max_{1\leq j \leq m}|g_{j,m}| < \delta
\end{equation}
for some prespecified tolerance $\delta$ and $e_1,\ldots,e_q$ and ${\bv}_1,\bv_2,\ldots,{\bv}_q$ are accurate approximations of the $q$ largest eigenvalues and corresponding eigenvectors of ${\bW}$.

However, in practice, more iterations are needed for the Ritz vectors
and Ritz values to converge to the eigenvalues and eigenvectors of ${\bW}$, so~\citet{sorensen92} suggests implicitly restarting the
Lanczos algorithm and also shifting the spectrum of the symmetric tridiagonal matrices iteratively to force
the new residuals ${\br}_m$ to zero, thereby accelerating the
convergence rate. So we %achieve this objective by
compute the QR-decompositions: ${\bT}_{m} - e_{j}{\bI}_m = {\bQ}_j{\tilde{\bR}}_j,$ for $j=q+1,\ldots,m,$ let $\tilde{{\bQ}} = {\bQ}_{q+1}{\bQ}_{q+2}\cdots{\bQ}_{m}$ and reset
${\bF}_{m} = {\bF}_m\tilde{{\bQ}}$ and ${\bT}_{m} = \tilde{{\bQ}}^\top
{\bT}_m \tilde{{\bQ}}.$ Then % then showed that 
\begin{equation}\label{eqn:restart}
 {\bW\bF}_{q} = {\bF}_q{\bT}_q + \beta^*{\bff}_{q+1}{\be}_q^\top
\end{equation}
where $\beta^*$ is the $(q+1,q)$th entry of ${\bT}_m,$ ${\be}_q$ is
the $q$th canonical basis vector in $\bbR^q,$ and ${\bT}_q$ is the
$q\times q$ principal sub-matrix of ${\bT}_m$~\citep{sorensen92}. Therefore,
\eqref{eqn:restart} is itself a $q$th-order Lanczos factorization of
${\bW}.$ Next, %we restart the Lanczos algorithm from the $q+1$st step, that is, 
we ``restart'' the Lanczos iterations from $k=q+1,\ldots,m$ instead of
$1$ through $m$, terminating if 
\eqref{eqn:lanczosStop} is satisfied, and restarting the algorithm
otherwise. 

The only way ${\bW}$ enters our algorithm is through matrix-vector products ${\bW\bff}_k$ that  can be computed without storing ${\bW}$ or 
$\widetilde\bS_t$. Additionally,  $\bW$ is symmetric, so  we do not
need separate computations for  the left 
 and right  singular vectors in partial singular value decomposition using the
Lanczos algorithm (as needed in \citet{daietal20}), yielding substantial
savings in terms of both compute time and storage.
% In fact, these matrix-vector products can be computed without
% even having to store $\bS_t$ explicitly.
Overall, our algorithm calculates the $q$ largest eigenvalues and
eigenvectors with $O(qnp)$ computational cost and $O(qp)$
additional memory.
\subsubsection{Initialization}
\label{sec:init}
As with most iterative algorithms, initial values can significantly
impact performance of our EM algorithms. % which are designed to
                                % converge to local maxima in the
                                % vicinity of their
                                % initialization. Thus,
We devise methods for initializing $(\bmu,\bLambda,\bPsi)$, borrowing
ideas from~\citet{maitra09, maitra13}. % to initalize $(\bmu,\bLambda,\bPsi)$.
%Specifically, we generated 1000 randomly initial values for these parameters.
Our  $\bmu$ was simulated from $\mN_p(\bzero,\bI)$ and
$l_2$-normalized, while the diagonal entries of $\bPsi$ were
each i.i.d. $\mU(0.2,0.8)$ draws and the elements of 
$\bLambda$ were  i.i.d. $\mN(0,1)$ realizations.
Starting with $M$ %triplet was obtained by simulating diagonal
                   %entries in $\bPsi\sim$
                   %i.i.d. $\mathcal{U}(0.2,0.8)$ and entries in
                   %$\bmu$ and $\bLambda\sim$
                   %i.i.d. $\mathcal{N}(0,1)$. Our $\bmu$ was then
                   %$l_2$-normalized.
such initial values, the AECM
algorithm was run for $J$ ``short'' iterations  after which only the $L\leq M$ streams with the highest
log-likelihood values were iterated all the way to convergence. The final
estimates are those obtained from the run achieving the highest final log-likelihood values.
% were generated in the same way the true parameter values were
% generated. Then the two algorithms were run for 10 iterations with
% each of these After 10 iterations,   and the ones associated with
% the highest 10 final values of the observed log-likelihood after 10
% iterations were selected to run till convergence and the optimal
% initialization was determined as the one produced the highest final
% values.
In this paper, we used $(M,J,L) = (1000,10,10)$.

\subsection{Choosing $q$}%the number of factors}
The factor model explains variability in a large number of variables
through a small number ($q$) of latent factors, so the choice of $q$ is
important. There is little theoretical work on the selection of $q$
for Gaussian factor models. The  Bayesian information criterion (BIC)
of~\citet{schwarz78} is one possibility, however it does not always
perform well in  high dimensions~\citep{chenandchen08}. We use
%\begin{comment}
%Some possible options are listed below,
%\begin{enumerate}
% \item Bayesian information criteria (BIC): Minimize
% \[\mathrm{BIC}(q) = -2\log\hat{\ell}_q + pq\log(n)\]
% where $\hat{\ell}_q$ is the maximum likelihood value with $q$ factors.
% \item 
%end{comment}
eBIC that minimizes
%\[
$\mathrm{eBIC}(q) = -2\log\hat{\ell}_q +  pq \{\log(n) + 2\gamma 
\log(p)\},$
%\]
 with $\gamma = \max\{1-1/(2\log_n(p)) , 0\},$ 
% \end{enumerate}
%In high-dimensional Gaussian factor models, empirical evidence suggests that both BIC and eBIC perform well \citep{daietal20}, even under high-noise factor models such as those discussed in \citet{owenandwang15}. In our  simulation studies, both BIC and eBIC selected the correct $q$ in all experiments. However, in real data analyses, the eBIC gives more interpretable results, as we shall see in Section \ref{sec:app-ch3}.
which performs well in our experiments.
\subsection{Estimating the factor scores}
\label{sec:est.scores}
Once $q$ has been chosen and the ML estimators of the parameters obtained, we
may compute factor scores for use in subsequent
analyses. For example, these scores can be used to rank each
observational unit on the factors, for post-hoc clustering
\citep{argelaguet2018multi,meng2016mocluster,mo2018fully,shen2009integrative},
in scRNA-seq analysis \citep{grun2015single,saelens2019comparison}, or
they can be used as covariates in generalized linear regression
models \citep{distefano2009understanding}, for example, to account for
population structure in genome-wide association studies
\citep{patterson2006population,price2006principal,novembre2008interpreting}. 
%In order to estimate the factor scores, we 
We extend the approach of \citet[p 226--231]{thurstone1935vectors} to
estimate the factor score for the $i$th observation by minimizing the
expected weighted squared error loss $\bbE\{ \|\bPsi^{-1/2}(\bY_i -
\mu - \bLambda{\bZ}_i)\|^2 | \bX_i\}.$ The resulting solution is the conditional expectation
of Bartlett score with complete data: 
$  \widehat{\bZ_i} =
  (\bLambda^\top\bPsi^{-1}\bLambda)^{-1}\bLambda^\top
  \bPsi^{-1}\{\bbE_{R_i|\bX_i}(R_i)\bX_i - \bmu\}.$
%\]
We use the unobserved $\bY_i$s instead of $\bX_i$s in the
minimization problem because $\Psi$ and $\Lambda$ are estimated on the
latent scale. As a result, the factor scores are also estimated on the
latent scale. In practice, the ML estimates of the parameters are
plugged in, and the value of the conditional expectation $\bbE_{R_i|\bX_i}(R_i)$ at the ML estimates is available as a byproduct of our FADS algorithms.

% \subsection{Asymptotic properties of the EM estimates}
\subsection{Standard errors of the ML estimates}
The standard errors of the ML estimates can be obtained by using the missing information principle \citep{louis1982} when $n \ge (q+2)p.$ To that end, suppose that $\bTheta = \{\bmu,\bLambda,\bPsi\}$. The observed information at the ML estimates is given by,
    $\bI_{\bX}(\bTheta) = (1/n)\sum_{i=1}^{n}\nabla\bq_i\nabla\bq_{i}^\top$,
where for $i=1,2,\ldots,n$ $\nabla\bq_i = \bbE_{R_i\mid\bX_i}\big({\partial \ell(\bY_i; \bTheta)}/{\partial \bTheta}\big)$ is the complete data score statistic and
\begin{equation*}
\begin{split}
   \frac{\partial \ell(\bY_i; \bTheta)}{\partial \bmu} 
    &= (\bLambda\bLambda^\top+\bPsi)^{-1}(R_i\bX_i-\bmu)\\
    \frac{\partial \ell(\bY_i; \bTheta)}{\partial \bLambda} 
    &= -\frac{1}{2}\left\{
    (\bLambda\bLambda^\top+\bPsi)^{-1}\bLambda - (\bLambda\bLambda^\top+\bPsi)^{-1}(R_i\bX_i-\bmu)(R_i\bX_i-\bmu)^\top(\bLambda\bLambda^\top+\bPsi)^{-1}\bLambda
    \right\}\\
     \frac{\partial \ell(\bY_i; \bTheta)}{\partial \bPsi}
    &= -\frac{1}{2}\mathrm{diag}\left\{
    (\bLambda\bLambda^\top+\bPsi)^{-1}-(\bLambda\bLambda^\top+\bPsi)^{-1}(R_i\bX_i-\bmu)(R_i\bX_i-\bmu)^\top(\bLambda\bLambda^\top+\bPsi)^{-1}
    \right\}.
\end{split}
\end{equation*}
In practice, the ML estimates of the parameters $\bmu,\bLambda$ and
$\bPsi$ are plugged into the above. Further, % the conditional
% expectations
$\bbE(R_i\mid\bX_i)$ and $\bbE(R^2_i\mid\bX_i)$ at the ML
estimates are available as byproducts of the AECM algorithm so these
calculations are very easily obtained in the course of the FADS
calculations. 

\section{Performance evaluations} % 3 pages 
\label{sec:sim-ch3}

% In this section, we compare the performances of FADS-D and FADS-P in terms of convergence frequency, computation times and accuracy in estimating important identifiable parameters and their functions.
We evaluated our FADS algorithms through  simulation %a comprehensive %series of simulation experiments.
experiments that  assessed estimation accuracy and consistency as well
as computing speed for a range of $(n,p,q)$
 settings. %In this section, we compare convergence frequencies, computation times and accuracies in estimating important identifiable parametrics functions of FADS-D and FADS-P.

\subsection{Experimental setup}
\label{sec:sim-setup-ch3}
We simulated 100 datasets for each $(n,p)\in\{(300,30),(1000,100)\}$ and $q\in\{3,5\}.$
Our strategy for each
dataset was to set the diagonal elements of  $\bPsi$ as
i.i.d. $\mathcal{U}(0.2,0.8)$ and entries in
$\Lambda$ as i.i.d. $\mathcal{N}(0,1)$ pseudo-random deviates. Entries of $\mu$ were
simulated independently from $\mathcal{N}(0,1)$ and 
$l_2$-normalized. A separate evaluation -- to mimic the $p\gg n$ setting of the
application in Section~\ref{sec:tgca.intro} -- used 100 simulated
datasets with $(p,q)=(5123,12)$ and $n\in\{380,500\}$,
with the true $\Psi, \Lambda,\bm\mu$ set to the ML estimates obtained upon fitting FADS-P to the TCGA dataset. In all cases, we
stopped FADS-P and FADS-D when the improvement in the observed
log-likelihood did not exceed $10^{-4}$ and $\|\nabla
Q_P\|_\infty < \sqrt{\epsilon_0}$, where $\epsilon_0\approx$
$2.2\times10^{-16}$ is the machine tolerance, or if the number of
iterations exceeded $10^{4}$.  For each simulated dataset, we fitted
models with $k=1,2,\cdots,2q$ factors and chose the number of factors
using eBIC.  All experiments %in this paper used our  ~\citet{R}
%package ({\tt fad}) 
were run on a workstation with Intel E5-2640 v3 CPU clocked
@2.60 GHz and 64GB RAM.

\begin{figure}[t]
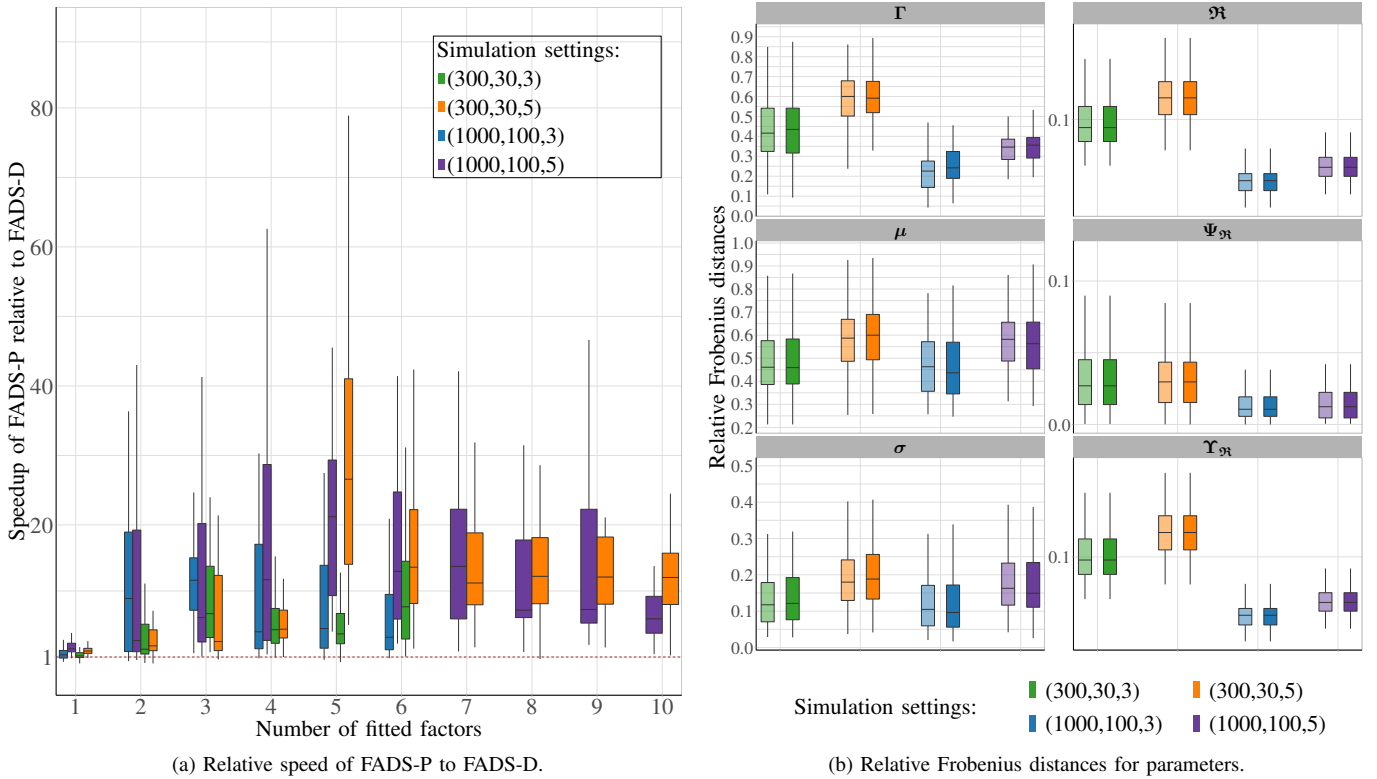

  \centering
\vspace{-0.25in}
  \mbox{
    \subfloat[Relative speed of FADS-P to FADS-D.]{\label{fig-time}
      \hspace{-0.17in}
      \resizebox{.5\textwidth}{!}{\input{figures/main/plot_time_c.tex}}
    }
%  }  \mbox{
    \subfloat[Relative Frobenius distances for parameters.]{\label{fig-est}
      \resizebox{.485\textwidth}{!}{\input{figures/main/plot_est.tex}}
    }
}
% \centering
%\caption{Randomly simulated cases: (a) relative speed of FADS-P to FADS-D; (b) relative Frobenius errors of FADS-D (light) and FADS-P (dark) for $\corr$, $\bm\Gamma$, ${\Upsilon}_{\corr}$, ${\Psi}_{\corr}$, $\bm\sigma$ and $\mu$. The triplets $n,$ $p$ and true $q$ are shown in parentheses.}
\caption{(a) Computational speed of FADS-P relative to FADS-D. (b)
  Estimation accuracy of  FADS-D (lighter shades) and FADS-P (darker
  shades). The experimental settings $(n,p,q)$ are in parentheses.
  }
\vspace{-0.1in}
\label{fig-sim}
% \end{wrapfigure}
\end{figure}

\subsection{Results} % 2 page
\label{sec:sim-res-ch3}
In our experiments, eBIC always correctly picked $q$. Since factor loadings and uniquenesses are better interpreted on the
correlation scale~\citep{mardiaetal79}, we considered the parameters
$\Lambda_\corr\! =\! \diag(\bsigma)^{-1/2}\Lambda,$
$\Psi_\corr\!=\!\diag(\bsigma)^{-1}\Psi,$ $\Upsilon_\corr\! =\!
\Lambda_\corr\Lambda_\corr^\top$ and $\corr\! =\! \Upsilon_\corr +
\Psi_\corr,$  where $\bsigma$ is the vector of marginal standard
deviations in $\Sigma.$ We compared the FADS-D and
FADS-P estimates using the relative Frobenius distances: $d_{\widehat{\corr}} =
\|\widehat{\corr}-\corr\|_F/\|\corr\|_F$, $d_{\widehat{\Upsilon}_{\corr}} = \|\widehat{\Upsilon}_{\corr}-\Upsilon_{\corr}\|_F/\|\Upsilon_{\corr}\|_F,$ $d_{\widehat{\Psi}_{\corr}} = \|\widehat{\Psi}_{\corr}-\Psi_{\corr}\|_F/\|\Psi_{\corr}\|_F,$ $d_{\widehat{\mu}} =
\|\widehat{\mu}-\mu\|_F/\|\mu\|_F = \|\widehat{\mu}-\mu\|_F,$  $d_{\widehat{\bsigma}} =
\|\widehat{\bsigma}-\bsigma\|_F/\|\bsigma\|_F$, and 
$d_{\widehat{\bm
    \Gamma}} = \|\widehat{\bm\Gamma}-{\bm\Gamma}\|_F/\|{\bm\Gamma}\|_F$,
 where $\|{\bf A}\|_F = \Tr {\bf A}^\top{\bf A}$ denotes the Frobenius norm of any matrix $\bf A$, $\widehat{\mathbf{\Gamma} }
=
\widehat{\bm{\Lambda}}^{\top}\widehat{\bm{\Psi}}^{-1}\widehat{\bm{\Lambda}}$,
$\widehat{\corr} =
\widehat{\Upsilon}_{\corr}+\widehat{\bm{\Psi}}_{\corr}$ with
$\widehat{\Upsilon}_{\corr} =
\widehat{\Lambda}_{\corr}\widehat{\Lambda}_{\corr}^{\top}$ are the ML estimates.
%\begin{wrapfigure}{r}{0.5\textwidth}

%\subsubsection{CPU time and convergence of the algorithms}

Fig.~\ref{fig-time} presents the relative speed of FADS-P to FADS-D for
the randomly simulated cases -- see  Section
\ref{sec:supp-sim-avgtime-ch3} for average CPU time. (Our reported compute
times include the common  initialization time for both
algorithms.) For  $(n,p)\in\{(300,30),(1000,100)\}$, %FADS-P was faster than
                                %FADS-D, with maximum speedup at true
                                %$q$. However, FADS-D did not always
                                %converge within $10^4$ iterations in
                                %the overfitted models. In contrast,
                                %FAD-P always converged (see Section
                                %\ref{sec:supp-sim-avgiter}), thus the
                                %speedup is underestimated because of the censoring with FADS-D.
FADS-P was faster than FADS-D, with maximum speedup occurring at true
$q$. FADS-P also generally needed far fewer iterations than FADS-D to converge. In particular, FADS-D did not converge even within
$10^4$ iterations for four  $(n,p,q) = (1000, 100, 5)$ cases. In
contrast, FADS-P always 
converged even though it required more iterations in the over-fitted
models than when the fitted $q$ did not exceed the true $q$ (Section~\ref{sec:supp-sim-avgiter}). % albeit
                                % requiring more iterations in the
                                % overfitted cases than non-overfitted
                                % cases (see Section S1.6).
Therefore, the speedup of FADS-P relative to FADS-D
was slightly underestimated because of the censored reports from FADS-D.

For the data-driven cases where $n\ll p$, FADS-D never converged within
$10^4$ iterations, but FADS-P always converged successfully. %for all
% the fitted models.
So we only report FADS-P estimates for $n\ll p$, and calculate
and compare estimation accuracy of FADS-D and FADS-P for the 
$n>p$ cases where both of them converged. (The four $n>p$ cases
where FADS-D did not converge had terminating results that were not
significantly different in estimation accuracy from those from
FADS-P.) 
%when both FADS-P and FADS-D converged.) 
%\subsubsection{Parameter estimation and model fit}
%%% For randomly simulated and convergent cases,
FADS-P and FADS-D yielded identical values of $\widehat{\corr}$,
$\widehat{\Gamma}$, $\widehat{\Upsilon}_{\corr}$ and
$\widehat{\Psi}_{\corr}$ under the best-fitted models, so the relative estimation errors %(see Figures \ref{fig-cor},\ref{fig-gamma},\ref{fig-ll-r} and \ref{fig-d-r}) 
%For randomly simulated and convergent cases, FADS-P and FADS-D yield
%identical values of $\widehat{\corr}$, $\widehat{\Gamma}$,
%$\widehat{\Upsilon}_{\corr}$ and $\widehat{\Psi}_{\corr}$ under the
%best fitted models. Thus the relative errors %
(Figure~\ref{fig-est}) %in estimating these parameters 
were also identical. The relative errors for $\mu$ and $\sigma$ were also similar
for FADS-P and FADS-D. With $n\ll p$, the relative errors of FADS-P estimates 
decreased with increasing sample size (Table~\ref{tab-sim-tcga} and
Figs.~\ref{fig-est} and ~\ref{fig-est-tcga}). 
\begin{table}
\caption{\label{tab-sim-tcga} Average relative Frobenius distances of FADS-P for the $p \gg n$ experiments.}
{\centering%
\fbox{%
\begin{tabular}{c|c|c|c|c|c|c}
$(n,p,q)$ & $\corr$ & $\bm\Gamma$ & ${\Upsilon}_{\corr}$ & ${\Psi}_{\corr}$ & $\mu$ & $\bm\sigma$\\ 
    \hline
(380,5123,12) & 0.020 & 0.330 & 0.020 & 0.186 & 1.191 & 0.693\\
    \hline
(500,5123,12) & 0.020 & 0.326 & 0.020 & 0.184 & 1.188 & 0.689
\end{tabular}}}
\end{table}
\begin{comment}
\begin{figure}
\centering
\begin{minipage}{.45\textwidth}
  \subfloat[]{\label{fig-time}\includegraphics[width = 0.99\textwidth]{figures/main/time-c-crop.pdf}}
  \end{minipage}%
  \begin{minipage}{.25\textwidth}
  \subfloat[]{\label{fig-cor}\includegraphics[width = 0.99\textwidth]{figures/main/cor-c-crop.pdf}}
    \end{minipage}%
  \begin{minipage}{.25\textwidth}
  \subfloat[]{\label{fig-gamma}\includegraphics[width = 0.99\textwidth]{figures/main/gamma-c-crop.pdf}}
\end{minipage}\\
  \begin{minipage}{.25\textwidth}
  \subfloat[]{\label{fig-ll-r}\includegraphics[width = 0.99\textwidth]{figures/main/ll-r-c-crop.pdf}}
    \end{minipage}%
  \begin{minipage}{.25\textwidth}
   \subfloat[]{\label{fig-d-r}\includegraphics[width = 0.99\textwidth]{figures/main/d-r-c-crop.pdf}}
\end{minipage}%
  \begin{minipage}{.25\textwidth}
  \subfloat[]{\label{fig-sd}\includegraphics[width = 0.99\textwidth]{figures/main/sd-c-crop.pdf}}
\end{minipage}%
  \begin{minipage}{.25\textwidth}
  \subfloat[]{\label{fig-m}\includegraphics[width = 0.99\textwidth]{figures/main/m-c-crop.pdf}}
\end{minipage}
\vspace{-0.15in}
\caption{Relative speed of FADS-P to FADS-D on simulated datasets (a). Relative Frobenius errors of FADS-P and FADS-D for (b) $\corr$, (c) $\bm\Gamma$, (d) ${\Upsilon}_{\corr}$, (e) ${\Psi}_{\corr}$, (f) $\sigma$, and (e) $\mu$.}
\label{fig-sim}
\end{figure}
\end{comment}

The results of our experiments show that FADS-P and FADS-D (when it converges) can, along with the use of eBIC, accurately recover factor
structures underlying our PN model. Overall, FADS-P provides faster
and more reliable estimates, so we use it in our applications.

\section{Real Data Applications}
\label{sec:app-ch3}
% Having evaluated our methodology, we
We applied FADS-P to the four datasets of
Section~\ref{sec:motivating}. For each dataset, we first discuss how
its processing places the observations on the unit sphere
and then fit, analyze and interpret our factor models. We also
provide diagnostic checks to verify the preference for our PN factor
model over the existing (Langevin/Fisher-von Mises) model. 
%was a more appropriate fit than a Langevin distribution. 
The Langevin distribution has hitherto been the only one 
routinely used to directly model observations on unit spheres of as
high dimensions as our applications,  
%The inverse  stereographic projected Gaussian distribution is not
%practical to use directly for such high-dimensional datasets given
%the needed reduction of the raw  data into three principal
%comonents~\citet{dortetbernadetandwicker08}.
so our diagnostic is in the form of a hypothesis test, where the null 
hypothesis is that $\bX_1,\bX_2,\ldots,\bX_n$ are i.i.d. realizations from the
$p$-variate Langevin distribution  $\mL_p(\kappa,\bnu)$ with 
mean direction vector $\bnu$ and concentration parameter $\kappa$. The
alternative hypothesis specifies each
$\bX_i\sim\mP\mN_p(\bmu,\bLambda\bLambda^\top+\bPsi)$, where
$\bmu^\top\bmu=1$, 
$\bLambda$ is a $p\times q$ matrix and $\bPsi$ is a diagonal
matrix. Our test statistic is the difference in maximized
log-likelihood ($V = \ell_a- \ell_0$) of the data using the
$\mL_p(\kappa,\bnu)$ and $\mP\mN_p(\bmu,\bLambda\bLambda^\top+\bPsi)$
models the optimal $q$ selected from the data using eBIC. We
use~\citet{barnard63}'s exact Monte Carlo method to calculate the
$p$-value of our test statistic in each application. 
Under this framework, the  alternative model is favored by a large
test statistic $V_0$ computed from the observed data compared with
the values $V_1,V_2,\ldots,V_{M}$, %where $V_i =\ell_a^{(i)}-\ell_0^{(i)}$ for $i=1,2,\ldots,M=1000$ is the 
test statistics obtained from the $M$ Monte Carlo samples generated under
the (null) $\mL_p(\kappa,\bnu)$ distribution with parameters estimated from
fitting the $\mL_p(\kappa,\bnu)$ model to the original data.  We fit
both the  $\mL_p(\kappa,\bnu)$ and
$\mP\mN_p(\bmu,\bLambda\bLambda^\top+\bPsi)$ models  %Langevin and
                                %projected normal factor model
                                %distributions
to the $i$th Monte Carlo sample, obtaining the maximized likelihood values,
$\ell_0^{(i)}$ and $\ell_a^{(i)}$, and $V_i =\ell_a^{(i)}-\ell_0^{(i)}$. %for $i=1,2,\ldots,M=1000$. 
The exact Monte Carlo $p$-value was the rank of $V_0$ in
$\{V_0,V_1,V_2,\ldots,V_{M}\}$ divided by $M+1$. The $p$-value
calculation may be accelerated~\citep{besagandclifford91}, but
this speedup was not needed here.%  in any of our applications. 

\subsection{Characterizing major themes underlying the {\tt \#MeToo} tweets}
\label{sec:app-ch3-metoo}
%Social media data usually carries vast amounts of information related
%to social groups or public events, for example, millions of messages
%disseminated by Twitter every day which are becoming increasingly
%useful for analyzing public sentiment towards specific social
%events. In this section, we investigated public conversations on the
%\#MeToo movement from Twitter data. The \#MeToo movement was started
%in 2006 by Tarana Burke and was relaunched by a tweet posted by
%Alyssa Milano on October 15, 2017, which has sparked a wordwide
%discussion about sexual harassment and sexual assault on women.
\subsubsection{Data collection and preprocessing}
\label{sec:tweets.preprocess}
We collected all 43,247 original tweets (excluding so-called re-tweets) in
English that contained the {\tt \#MeToo} hashtag %in the ten-day period
between December 03, 2018 at 02:06:15 (UTC) and  December 13, 2018 at
01:36:20 (UTC). These tweets originally had 1,165,762 unique
words. Standard text preprocessing put the dataset through (1)
unitization and tokenization, (2) standardization and cleansing, (3)
stop word removal, and (4) stemming and lemmatization
\citep{anandarajanetal18}. In addition, infrequently-occurring words
(appearing in less than 0.25\% of tweets) were removed from the
lexicon, following~\citet{dhillonandmodha01}. We also
removed one-word tweets because they contribute no
information on relationships between terms. The resulting collection
had $n=31,385$ unique tweets and $p=721$ unique words (see
Section~\ref{sec:supp-wordlist}).  
%For tf-idf, the frequency of a word in individual document is
%weighted by log(n/# of documents containing the word), in the
%document collection, so the importance increase when the frequency of
%a word in individual document increase and the importance decrease
%when the frequency of the word in the document collection increase.
(Henceforth, we use tweets to refer to this reduced dataset.)
We applied the commonly used term frequency-inverse document
frequency (\textit{tf-idf}) weighting~\citep{anandarajanetal18}  to modulate the relative frequency of each word by tweet length.  
%We considered the importance of each word in a tweet individually and
%in the document collection, we applied the commonly used term
%frequency-inverse document frequency weighting (\textit{tf-idf})
%\citep{anandarajanetal18} to the data, in which
This procedure weighted $d_{ij}$, or the raw frequency of
the $j$th word, in the $i$th tweet by  $\log(n/d_{\cdot j})$ where $d_{\cdot
  j}
=\sum_{i=1}^nd_{ij}$ is the number of tweets with the $j$th word. % In this way, this weighting approach balances the importance
      % of each word both in the individual tweet as well as in the
      % collection of all the tweets.
We therefore have a \textit{tf-idf} weighted document-term matrix
(DTM). Each tweet vector was then  $l_2$-normalized, placing it on the unit
sphere. The $l_2$-normalization mitigates the effect of
differing tweet lengths~\citep{singhaletal96} and also allows for the use
of cosine similarity that performs better than the Euclidean distance metric %on the original data matrix for
on sparse text data~\citep{lebanon06}. % and is routinely  used.% in this context.
The $l_2$-normalized data $\bX_1,\bX_2,\ldots,\bX_n$ is our dataset
for analysis. 
\subsubsection{Results and analysis}
\label{sec:tweets.results}
We applied FADS-P, with $q=0,1,2,\ldots,15$, to the processed dataset of
Section~\ref{sec:tweets.preprocess}. The eBIC chose the eight-factor model, which was significantly preferred over the Langevin model
($p\mbox{-value}<0.001$). Fig.~\ref{fig-mt-l}a displays the
estimated uniquenesses for the fitted model,
\begin{figure}
\vspace{-.4in}
  \centering
\begin{minipage}{.95\textwidth}
\vspace{-.6in}
\mbox{
 \subfloat[Uniquenesses]{\label{fig-mt-d}\includegraphics[width =
   0.33\textwidth]{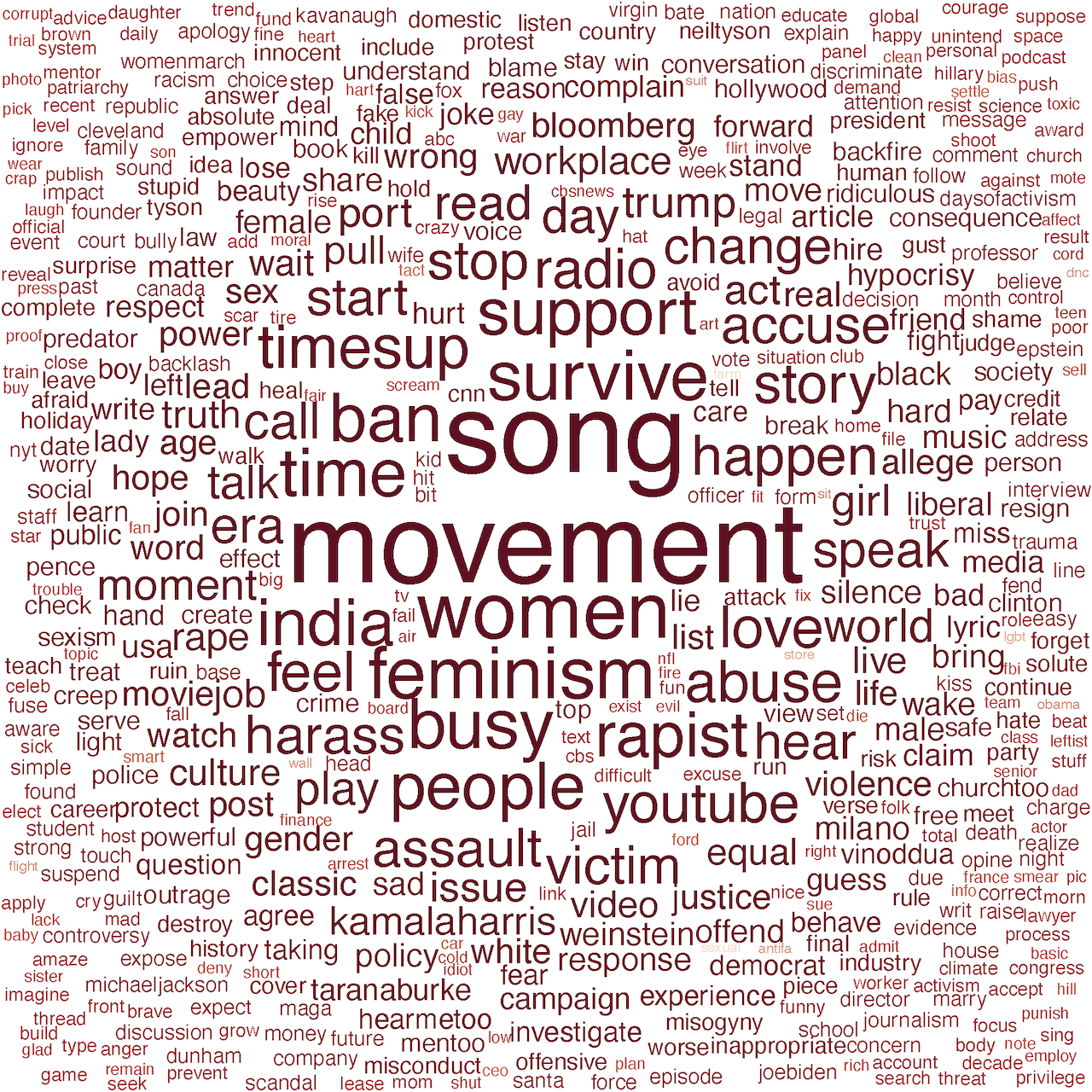}}
 \subfloat[Dec. 3 Bloomberg article.]{\label{fig-mt-l1}\includegraphics[width = 0.33\textwidth]{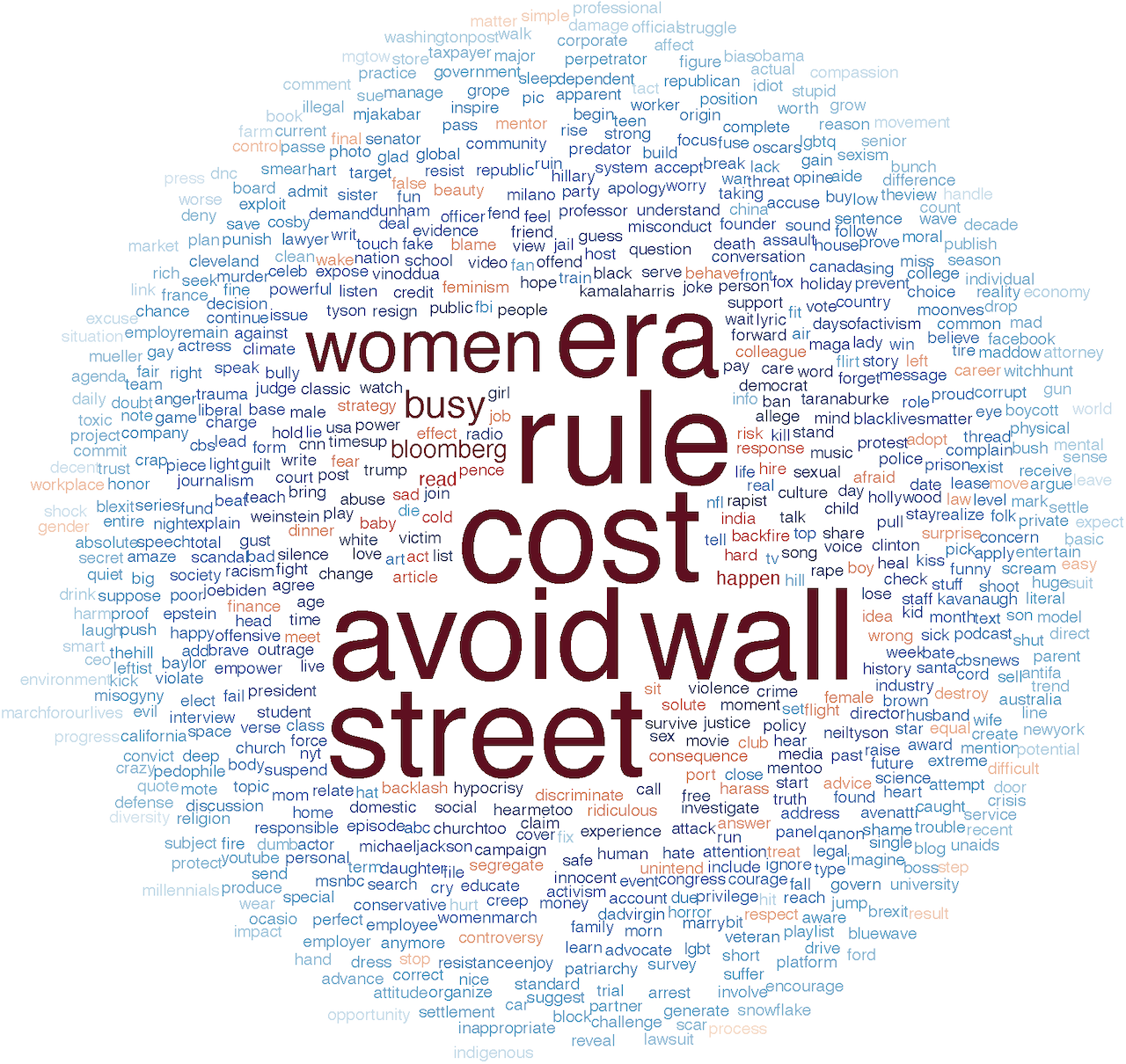}}
  \subfloat[Banned Christmas song.
  ]{\label{fig-mt-l2}\includegraphics[width = 0.33\textwidth]{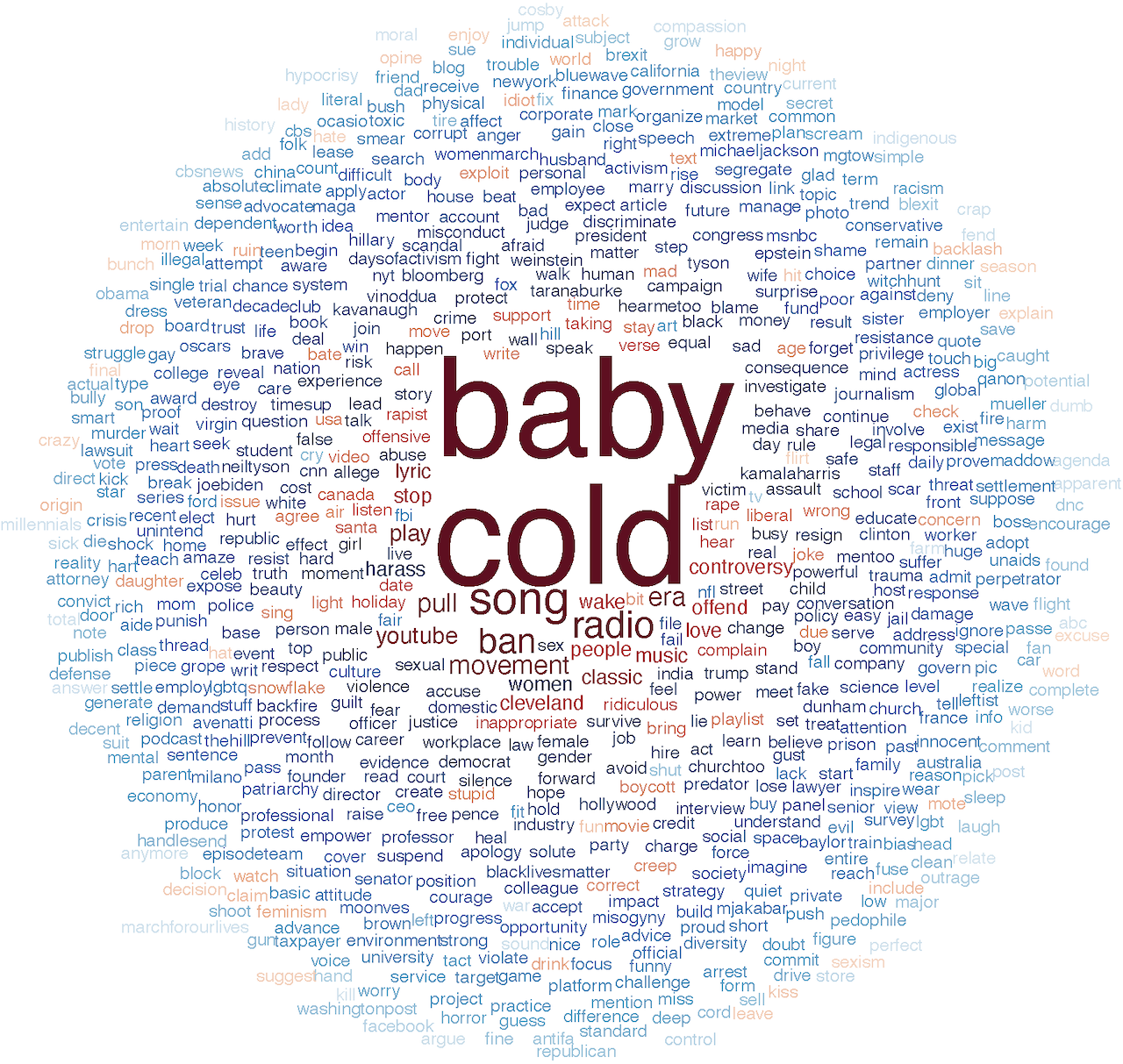}}}
\mbox{
  \subfloat[Sexual harassment/assault.]{\label{fig-mt-l3}\includegraphics[width = 0.33\textwidth]{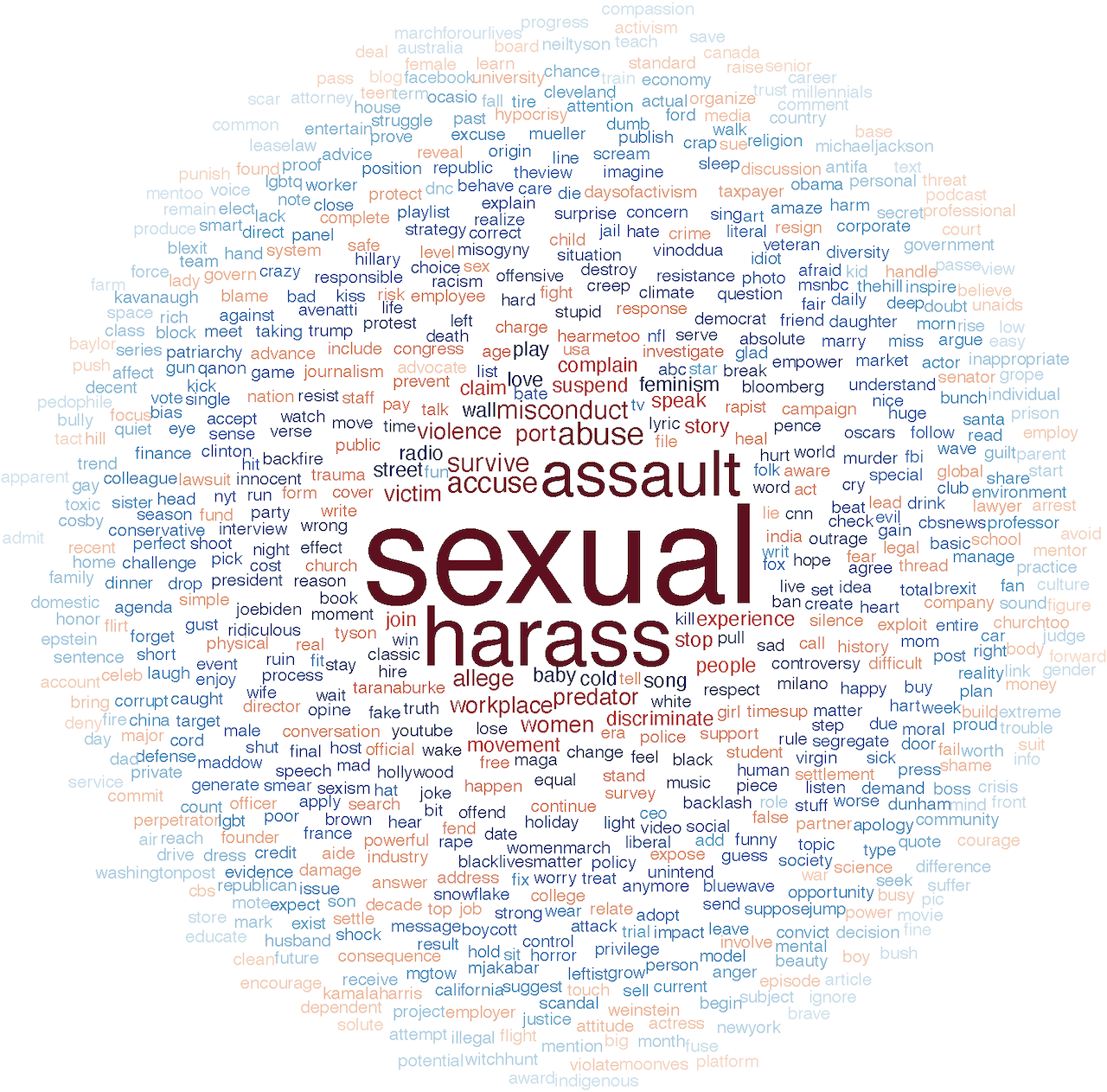}}
  \subfloat[Dec. 5 Bloomberg article.]{\label{fig-mt-l4}\includegraphics[width = 0.33\textwidth]{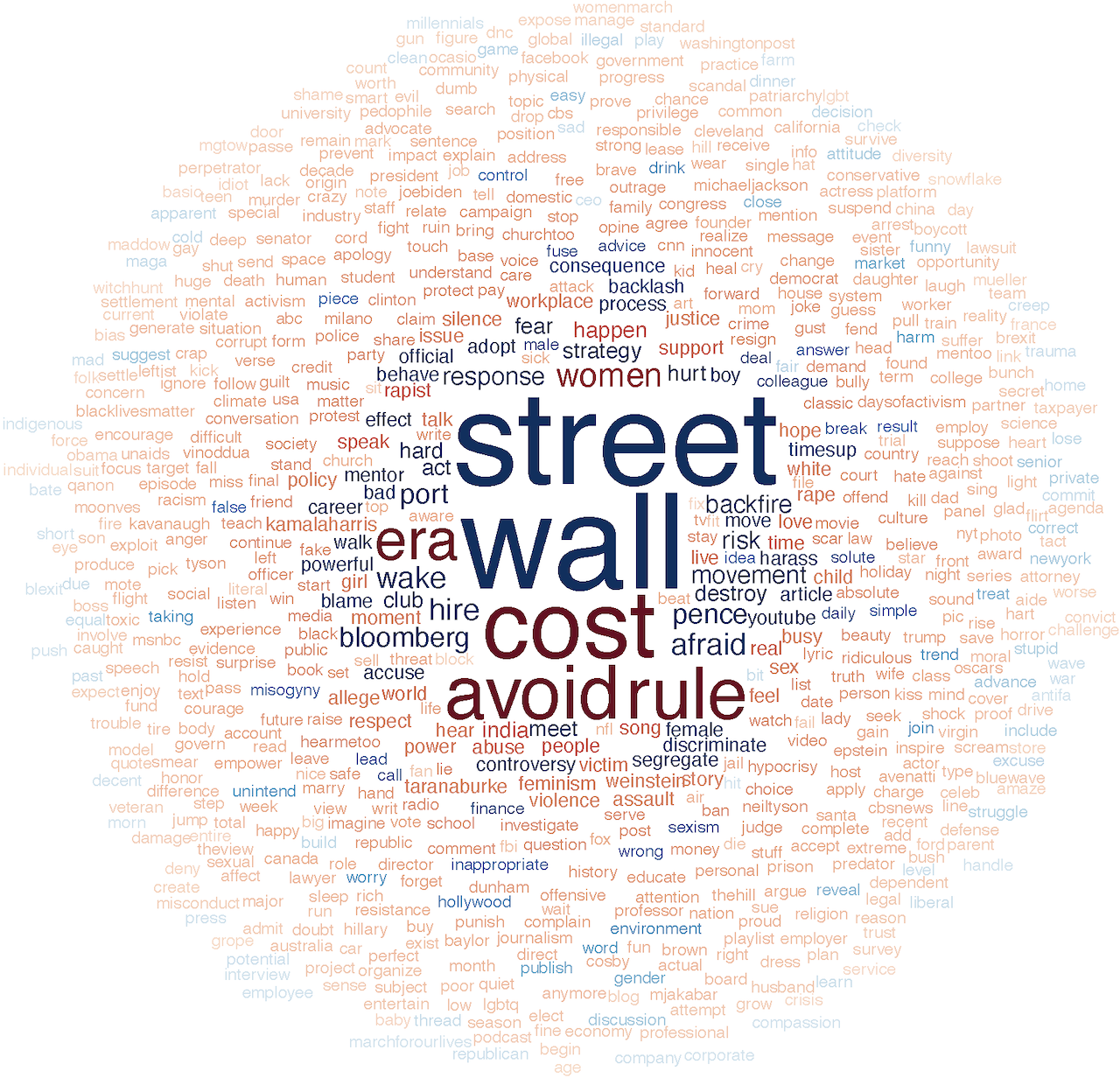}}
  \subfloat[Politics, Policy and {\tt \#MeToo}.]{\label{fig-mt-l5}\includegraphics[width =
    0.33\textwidth]{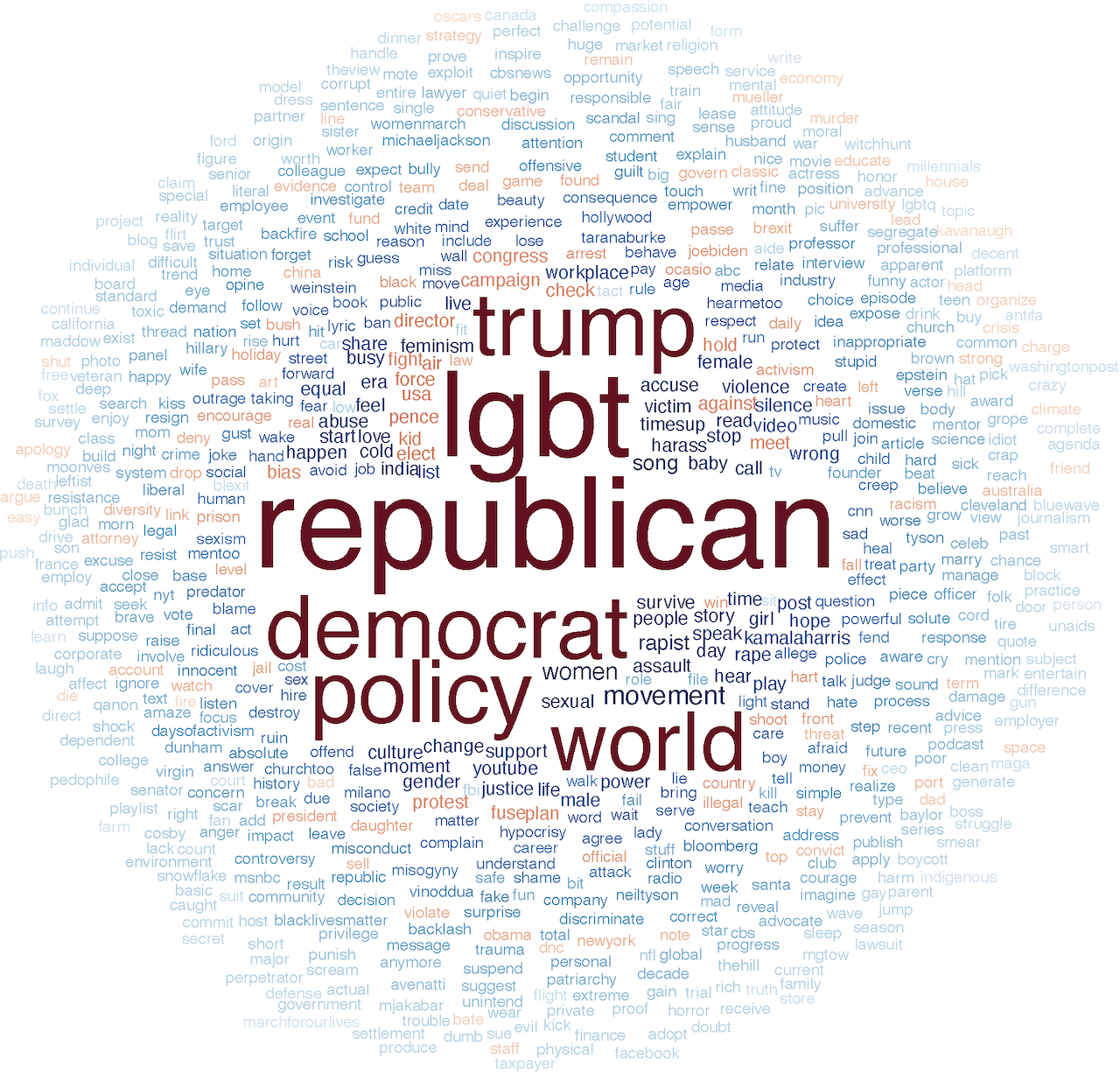}}}
\mbox{
\subfloat[Female workplace exclusion.]{\label{fig-mt-l6}\includegraphics[width =
    0.33\textwidth]{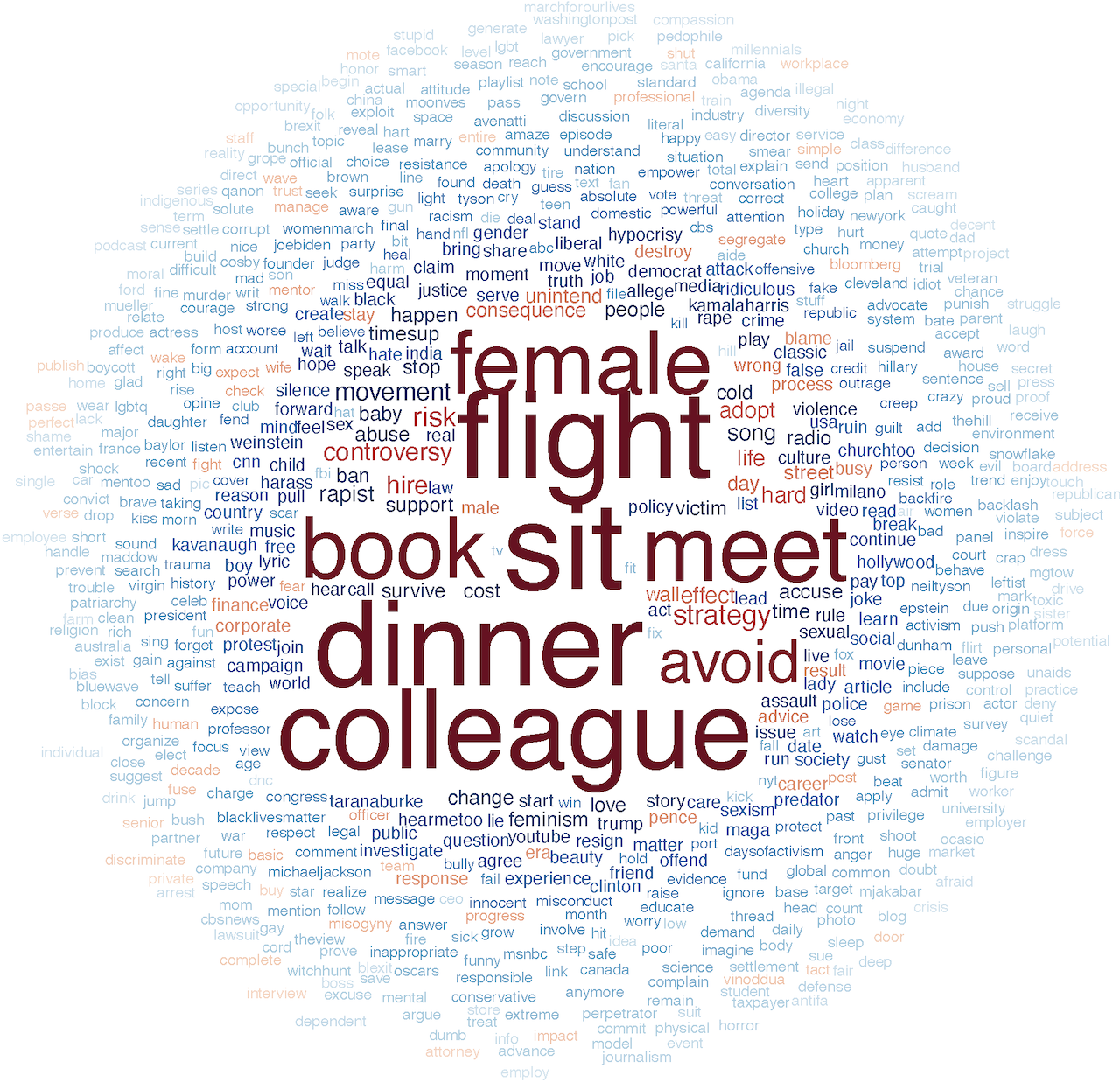}}
  \subfloat[{\tt \#MeToo} across society.]{\label{fig-mt-l7}\includegraphics[width = 0.33\textwidth]{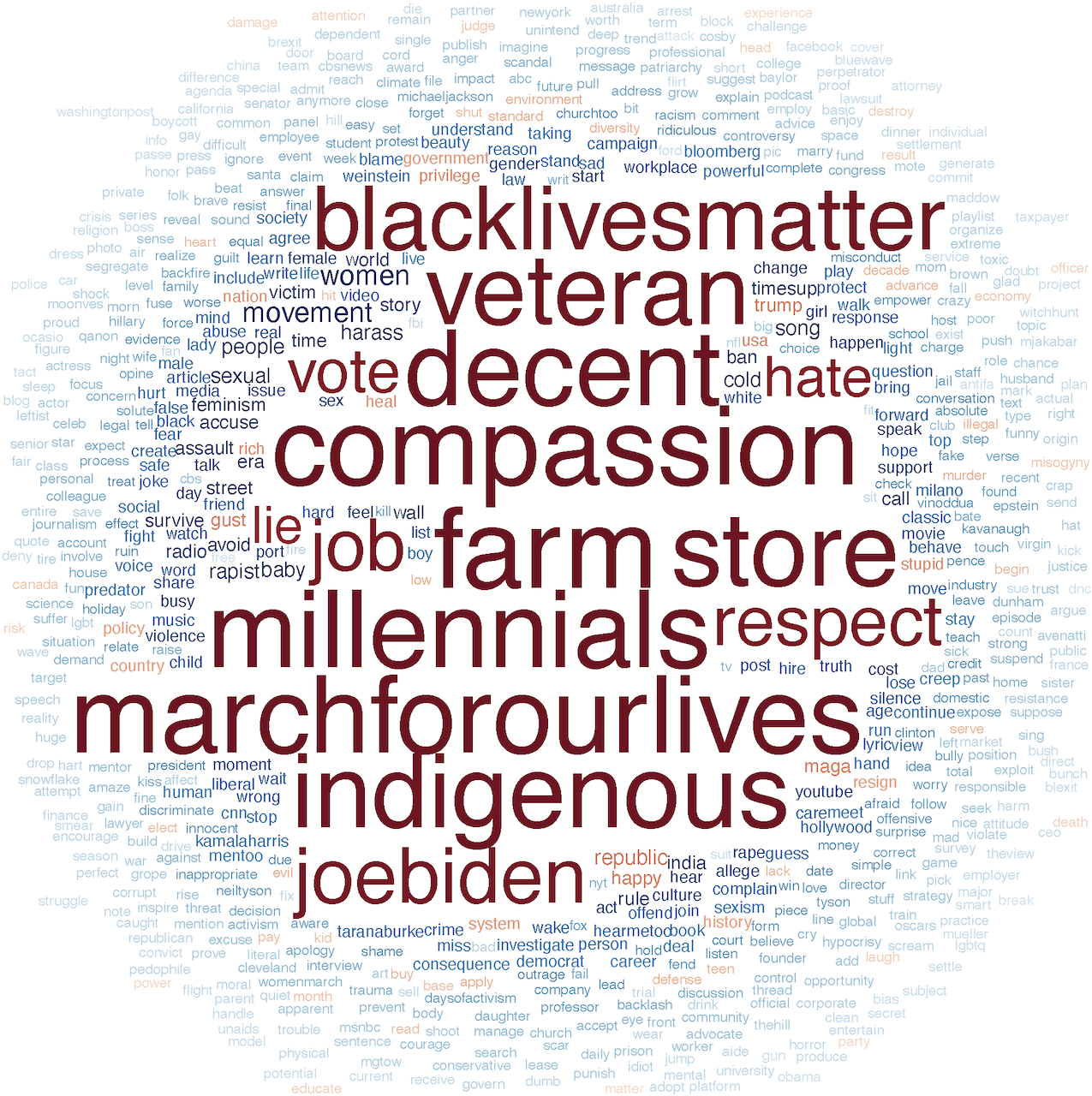}}
 \subfloat[Far-right take on {\tt \#MeToo.}]{\label{fig-mt-l8}\includegraphics[width = 0.33\textwidth]{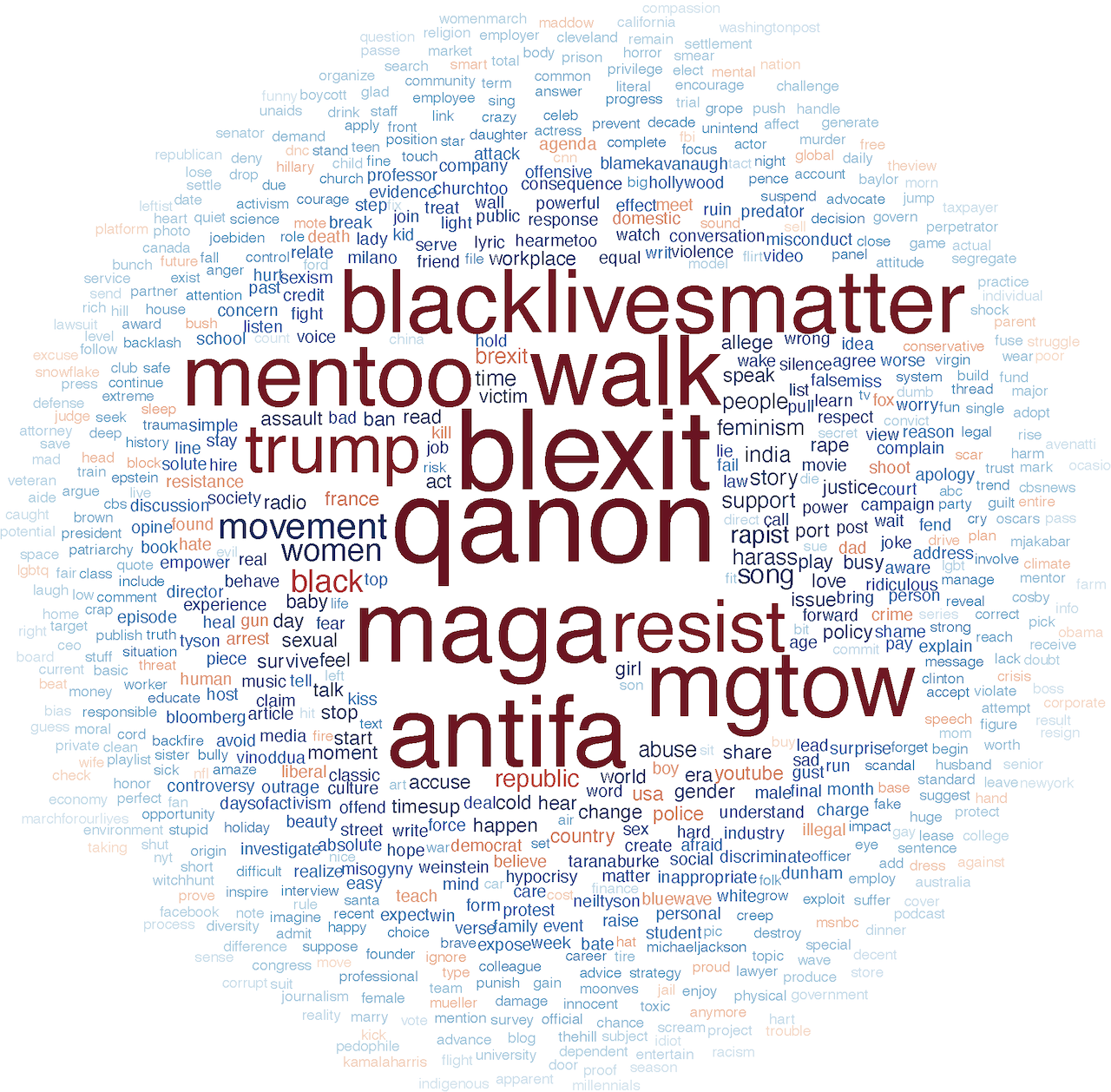}}}
\end{minipage}
\centering
\begin{minipage}{.04\textwidth}
  \raisebox{.09\height}{\includegraphics[width=\textwidth]{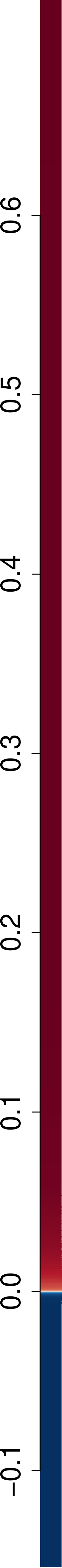}
  }
\end{minipage}
\vspace{-0.2in}
\caption{(a) Estimated uniquenesses (with magnitude 
  proportional to shading and size of word font) from the eight-factor
  PN model fitted to the {\tt \#MeToo} 
  dataset and (b)--(i) the fitted factor loadings after  quartimax
  rotation (here, word color corresponds to the sign and font size and
  shading correspond to magnitude of 
  factor loadings). We characterize each factor loading through a
  descriptive sub-caption.}   
\label{fig-mt-l}
\vspace{-0.15in}
\end{figure}
and is similar to the relative frequency of
words contributing to the main discussion topics in the {\tt
  \#MeToo} movement during the given time period. 
We analyzed the eight factor loadings from the fitted model using
quartimax rotation that simplifies interpretation by minimizing
both the number of heavily loaded features on each factor and the number
of factors needed to explain a feature~\citep{costelloandosborne05}. 
%Figure~\ref{fig-mt-d}(a) displays a {\em word cloud} for the fitted
%uniquenesses, which reflects the unique variances for the words and
%reveals the main objects and focuses of the Twitter's discussion
%about the \#MeToo movement. 
Figs.~\ref{fig-mt-l}b--i provide
{\em word clouds} of the eight factor loadings, ordered by the
proportion of variance in $\bSigma$ explained. %, for each of them.
%here, word color and size correspond to the loading weights on the
%factor. We also display (Fig~\ref{fig-mt-l}(i)) the uniquenesses for
%each word (darker shading and larger sizes signify larger values).
% using the quartimax rotation that provides a simplied interpretation
% by minimizing the number of features loading highly on a factor as
% well as the number of factors needed to explain a feature. The  
The word clouds provide insight into the dominant words that
characterize each factor and help describe the 
variability in the {\tt \#MeToo} tweets. We now discuss and interpret the eight
factors.
%tells different trending topics related to political and social
%events during the \#MeToo movement that explain the variations in the tweets. 

The first, fourth and sixth factors pertain to words appearing in
tweets discussing different aspects of sexual harassment in the
workplace. For instance, the first factor  is dominated by words that
appear in tweets discussing the December 3, 2018 Bloomberg
article~\citep{tanandporzecanski18a} titled ``Wall Street Rule
For The {\tt \#MeToo} Era: Avoid Women At All Cost" that outlines 
exclusionary strategies reportedly adopted by male Wall Street
employees to avoid allegations in the {\tt \#MeToo} era, but that
simultaneously raise obstacles for professional women 
and exacerbate gender segregation. We notice similar trends in the
sixth factor that heavily weights words appearing in tweets discussing
dilemmas faced by women in the workplace as their male colleagues
limit professional contact ({\em e.g.}, by excluding them from dinner
outings, solo meetings, sitting together while flying) to avoid
accusations of professional sexual misconduct. So, though the words and
topics are similar in these two factors, the first factor is
focused on the Bloomberg article and avoidance strategies
adopted by men on Wall Street while the sixth factor relates to
avoidance issues in the general workplace. The related fourth factor
is a contrast between the dominant terms ``wall'' 
and ``street'' on one hand and ``cost'', ``avoid'', ``era'', ``rule''
and ``women'' on the other. Tweets with these words are generally about the December 5, 2018 Bloomberg article ``NYC officials blast
Wall Street’s ice-out of women in wake of \#MeToo'' \citep{tanandporzecanski18b} in response to the  article
associated with the first factor. These tweets mainly criticise Wall
Street men's reported strategy of avoiding women in the 
workplace, and instead school them on appropriate behaviours in a
professional setting. % when working with women in a professional setting. 

The second factor prominently loads  words that are in tweets reacting
to the cancellation by several radio stations, in the last week of November 2018, of the 1940s classic Christmas song
``Baby It's Cold Outside", in response to listener complaints
about its lyrics describing inappropriate and manipulative male
behaviors toward women. Tweets with these words generally
support the ban, though some others argue that the song was written in a different era. 

The third factor primarily weights words conveying sexual assault,
harassment, abuse and violence, and, to a lesser extent, ``women", ``allege", ``survive" and ``workplace". We interpret this 
factor to  reflect the main objectives of the {\tt \#MeToo}
movement to expose sexual assault and harassment by men in power, especially in the workplace, and to provide support to survivors. 
%via the exposure of sexual misconduct allegations and the widespread conversations on social media.  

The remaining factors relate to politics and movements. The fifth factor has terms related to politics or policies,
mostly appearing in tweets discussing the just concluded 2018 midterm 
elections in the United States, where female voters, female candidates and
the {\tt \#MeToo} movement were generally viewed to have played a major role. %is highly associated with tweets talking about the activities and strategies of the Republican Party and the Democratic Party during the 2018 midterm election, where \#MeToo movement has been viewed as a partisan tool. 
%that, being afraid of potential sexual misconduct complaints, male colleagues tend to avoid having dinners, sitting together on flights or meeting alone with female colleagues as a consequence of the \#MeToo movement. 
The seventh factor is dominated by words that are in tweets related
to movements for social justice and reform. Examples are tweets
related to gender discrimination and harassment issues across
generations and professions (farm workers, veterans, indigenous
people, etc) or discussions on socio-political matters or allied
movements like ``Black Lives Matter'' or ``March for our Lives'', where
millenials have been quite vocal and active.
%discussions on and that millennials (birth years between 1981 and 1996) are more willing to share their voices and compassion towards important social events including the movements such as \#MeToo, \#blacklivesmatter and \#marchforourlives (A student-led demonstration on March 14, 2018 that supports legislation to prevent gun violence in US). 
Conversely, the eighth factor is overwhelmingly weighted by words
related to right-wing, 
misogynistic, white supremacist and conspiracy theory-based groups
(\#MAGA, QAnon, \#Blexit, \#MenToo, Men Going Their Own Way, etc)
and their views on \#MeToo, ``Black Lives Matter'' and other
anti-fascist and social justice movements.  

%\#MeToo backlash,  including the criticism from the political
%activist Candace Owens who launched the \#Blexit movement in late
%2018 (A social media campaign that encourages African Americans to
%abandon the Democratic Party and register as Republicans) and from
%MGTOW (Men Going Their Own Way, an anti-feminist, misogynist and
%mostly online community), and women’s rights events responding to
%\#MeToo while disturbed by Antifa protesters in Europe.

%We conclude our analysis and discussion of the results on this
%dataset with a few { \color{red}{remarks.} }
Z. Zhu has asked why we do not simply fit a
(generative) Gaussian factor model to the tweets dataset before
$l_2$-normalization. The reason for not doing so is that the
unnormalized tweet vectors cannot be assumed to all come from the same
normal distribution. The processed tweets, after routine
$l_2$-normalization, are assumed to be 
i.i.d. from a PN factor model. The only other currently
realistic model for such data is the Langevin
distribution, but the hypothesis test reported on earlier fit our projected PN 
factor model significantly better than the Langevin model for this
dataset. Our fitted model had eight factors that, upon quartimax
rotation, provided insight into the main
contributors of variability in the tweets.

\subsection{Characterizing the pre-teen brain at rest}
\label{sec:app-ch3-fmri}
\subsubsection{Data preprocessing}
\label{sec:fmri.preprocess}
The  data are from resting-state
measurements~\citep{nebeletal14} of 33 typically developing children
between 8 and 12 years of age. Here, participants were instructed
to relax and fixate on a crosshair while fMRI scans
were acquired %using a single-shot, partially parallel (SENSE) gradient-recalled echo planar sequence over a seven-minute time
over seven minutes, at intervals of 2.5s. The first 30s of data were
discarded to let the scanner magnetization achieve a steady state,
yielding 156 time points.  For each registered, motion-corrected
dataset, the AFNI software 
package~\citep{cox96,cox12} used standard preprocessing techniques to regress out nuisance parameters. The residual time series for each
subject at each 
of 75,589 in-brain voxels was normalized by its respective temporal
mean and standard deviation~\citep{chenandwang19}, resulting in
normalized time series on the surface of a 156-dimensional sphere. The
normalized time series at each voxel were then summarized over the 33
pre-teens as the mean direction time series vector at each voxel. At this stage, we have a dataset $\bD$ of $n=75,589$ 156-dimensional 
observations on the unit sphere. However, regression (in this case, of
the nuisance parameters) yields a singular matrix of residuals, so
$\bD$ is projected to a 69-dimensional space by means of a matrix of
orthogonal rows spanning the null space of the regression hat
matrix. (Any lower-dimensional orthogonal projection of a unit norm
vector is also a unit norm vector in lower dimensional space if the
projection vector is orthogonal to the null space of the vector being
projected.) We thus have a dataset on the 69-dimensional unit sphere
of $n=75,589$ observations. 

%Next, we apply FADS to data  from \citet{Nebel2014td} on a
%resting-state fMRI study conducted with 52 typically developing (TD)
%children between 8 and 12 years of age. During testing, participants
%were instructed to relax and fixate on a crosshair while the
%resting-state fMRI scans were acquired using a single-shot, partially
%parallel (SENSE) gradient-recalled echo planar sequence with 156 time
%points. Measurements were taken for $64\times 76\times 60$ image
%volumes at each time point. The final dataset is 156-dimensional with
%normalized 75589 observation vectors.  

%Since the dataset was preprocessed by transformation which projected the data onto a singular space and resulted in a sample covariance with 87 trivial eigenvalues, we first transformed the data according to the 69 eigenvectors that correspond to the first 69 non-trivial eigenvalues, then we fit factor models to the data in the lower dimensional space, where factor loadings are weights on the original scale. The factor analysis was implemented using FADS for numbers of factors up to 10. We used eBIC to determine the optimal $q$ ($q_{o}$) and obtained a 2-factor model.

\subsubsection{Results and analysis}
\label{sec:results.fmri}
FADS-P with eBIC chose the 2-factor model
from among $q=0,1,2,\ldots,10$. The PN factor model fit the
data significantly better ($p\mbox{-value}<0.001$) than its Langevin cousin. We backprojected the factor loadings to the 
original 156-dimensional space and used oblimin rotation for
interpretable results that we now discuss.

Fig.~\ref{fig-td-scores}a displays the time series of the
\begin{figure}[h]
  \centering
\mbox{
  \subfloat[Time series of factor loadings.]{\label{fig-td-trace}
  \hspace{-0.18in}
    \resizebox{.445\textwidth}{!}{\input{figures/main/fmri-td-tobL-trace}}
  }
  \subfloat[ACF of factor loadings.]{\label{fig-td-acf}
  \hspace{-0.1in}
    \resizebox{.275\textwidth}{!}{\input{figures/main/fmri-td-tobL-acf}}
  }
  \subfloat[PACF of factor loadings.]{\label{fig-td-pacf}
    \hspace{-0.1in}
    \resizebox{.275\textwidth}{!}{\input{figures/main/fmri-td-tobL-pacf}}
  }
}
\mbox{
\hspace{-0.03\textwidth}
    \subfloat{
    \begin{minipage}[b][][t]{0.99\textwidth}
      \mbox{
  \setcounter{subfigure}{1}
\vspace{-0.1in}
  \subfloat[First factor scores.]{
    \begin{minipage}[b][][t]{\textwidth} % [b] bottom [t] top
      \mbox{\subfloat{\includegraphics[width=.49\textwidth]{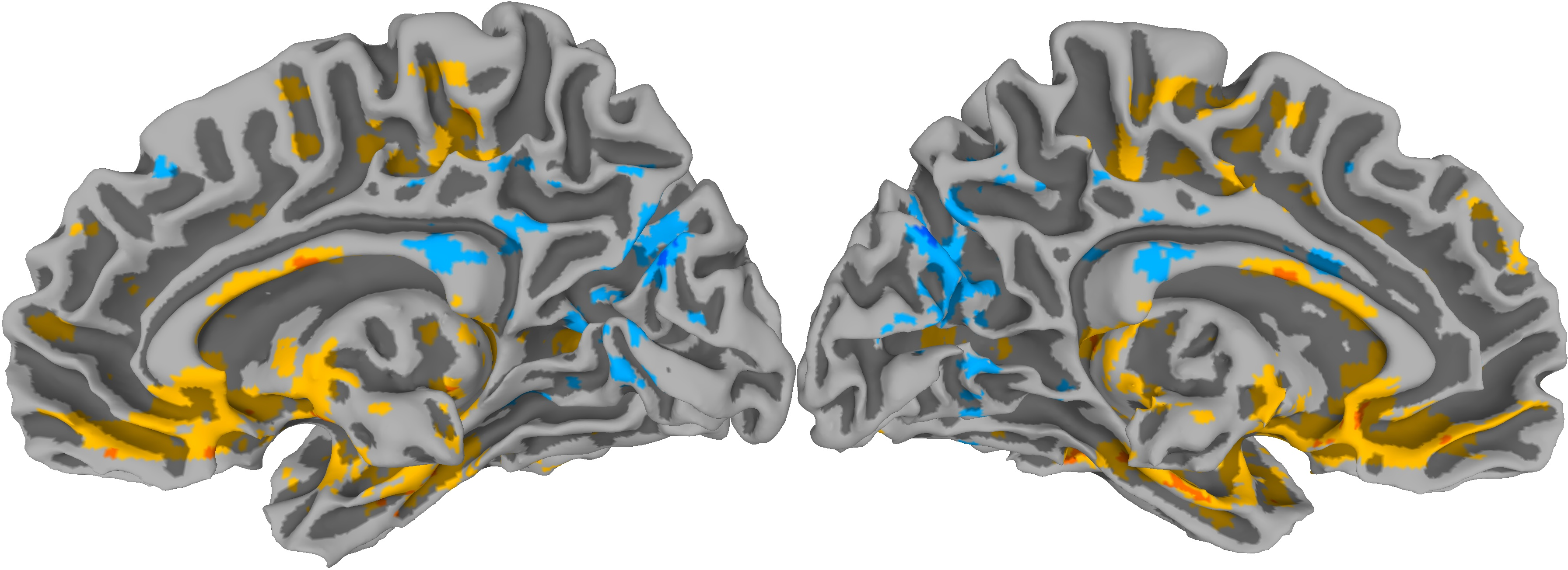}}
        \subfloat{\includegraphics[width=.45\textwidth]{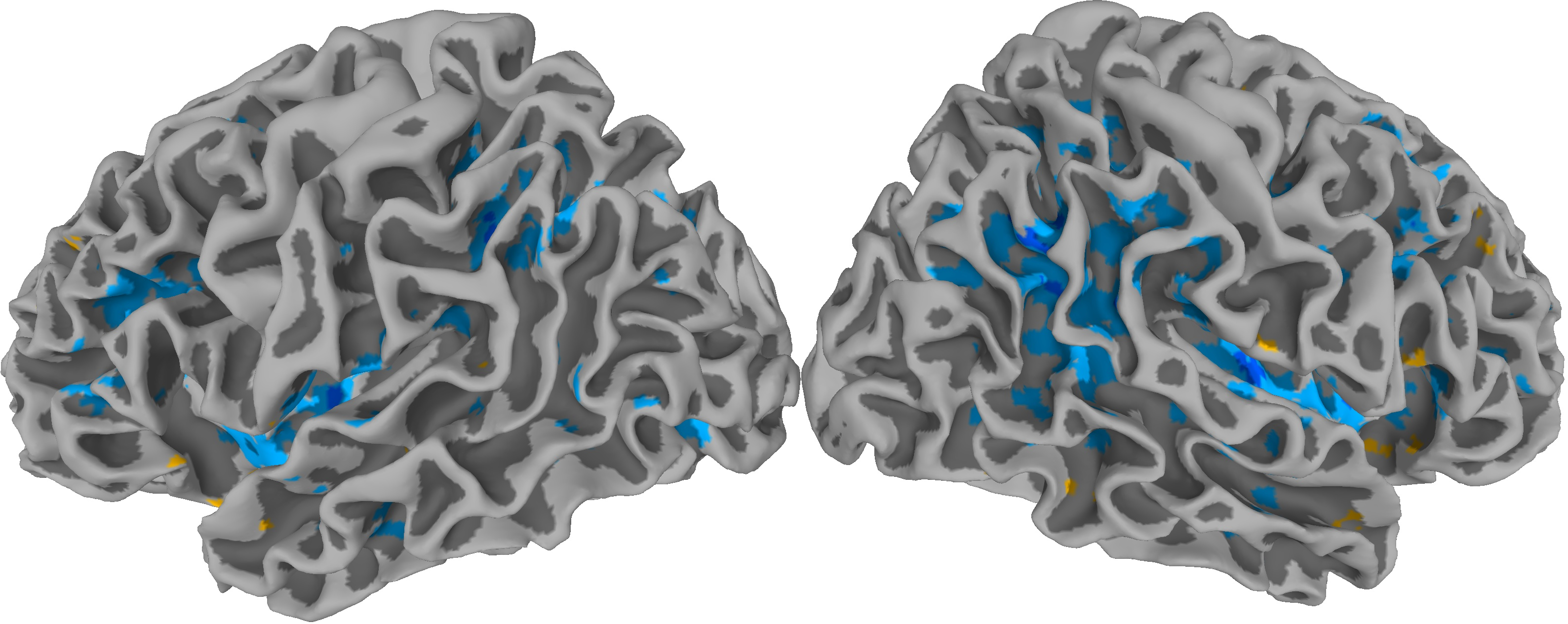}}
      }
%\vspace{-0.1in}
      \end{minipage}}%
  }
\mbox{
\vspace{-0.1in}
  \setcounter{subfigure}{2}
  \subfloat[Second factor scores.]{
    \begin{minipage}[b][][t]{\textwidth} % [b] bottom [t] top
      \mbox{\subfloat{\includegraphics[width=.49\textwidth]{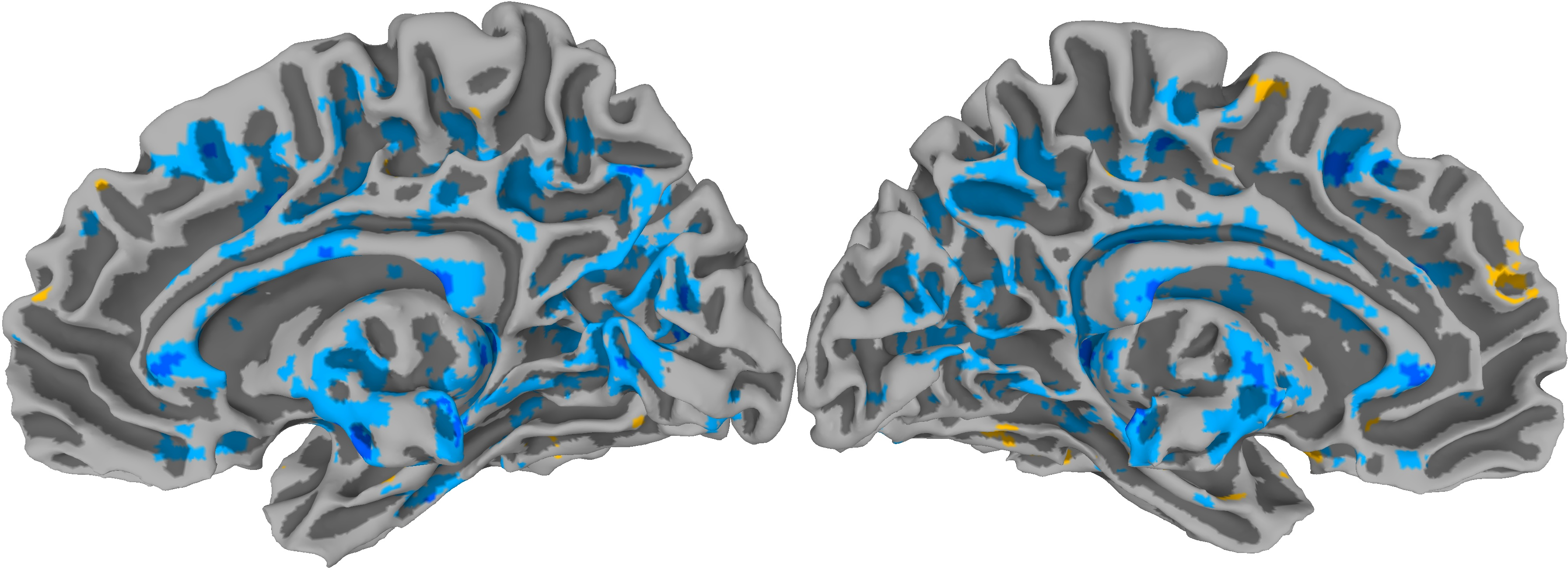}}
        \subfloat{\includegraphics[width=.45\textwidth]{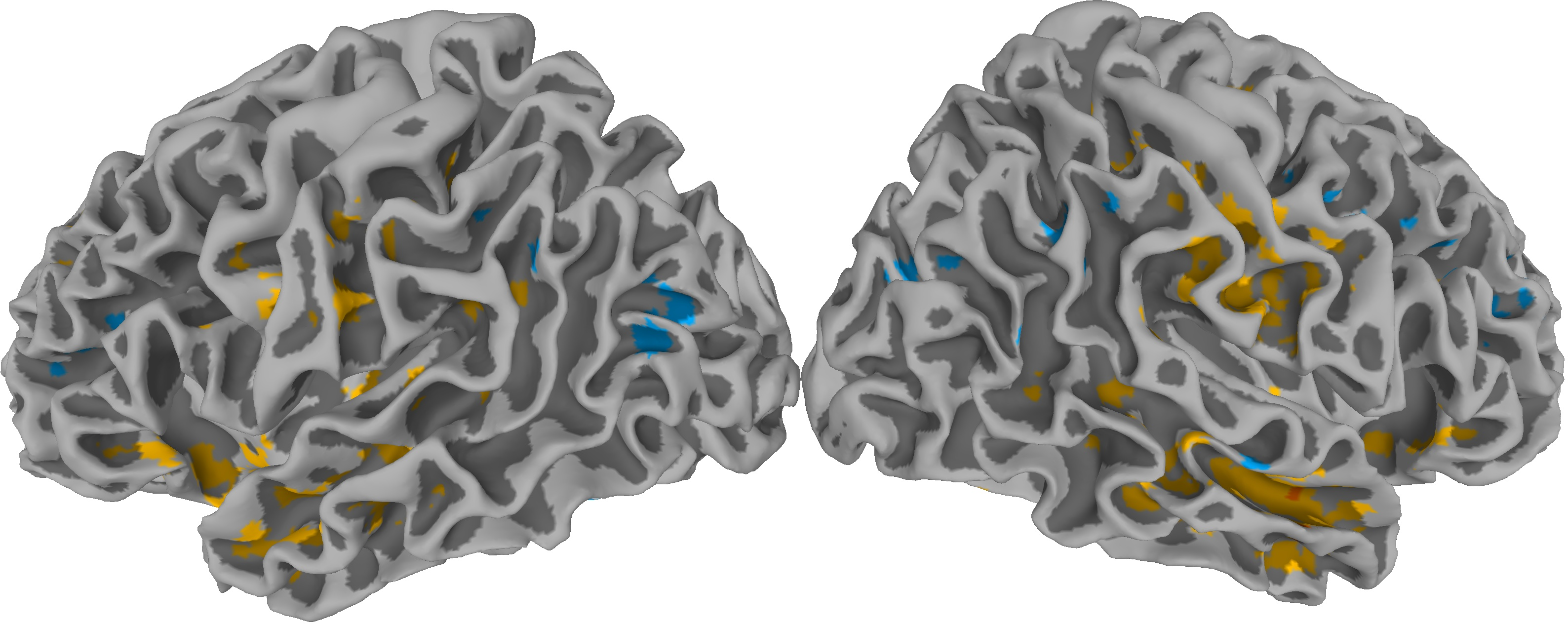}}
      }
%\vspace{-0.1in}
      \end{minipage}}%
\vspace{-0.1in}
}
\end{minipage}
}
\hspace{-0.03\textwidth}
 \raisebox{.03\height}{\subfloat{\includegraphics[width=.042\textwidth]{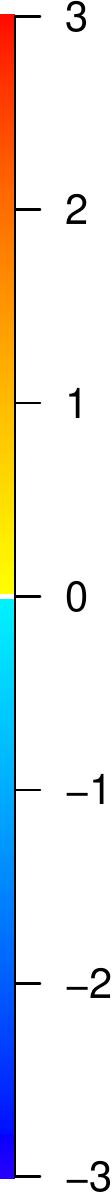}}}
}
\vspace{-0.1in}
  \caption{Fitted factor loadings (a) time series, (b) ACF and (c)
    PACF of the pre-teen brain at rest show distinct characteristics. (d,e) The two predicted factor scores distribute
    to interpretable regions.}
  \label{fig-td-scores}
\end{figure}
oblimin-rotated factor loadings in 156-dimensional space, while
Figs.~\ref{fig-td-scores}b,c plot the
auto-correlation function (ACF) and partial ACF
(PACF) along with $95\%$ confidence 
intervals (broken lines). % of the two fitted factor loading
                               % series projected back to the original
                               % 156-dimensional space, rotated using
                               % the oblimin rotation for a simple
                               % structure.
The first factor loadings show larger amplitude fluctuations
between the tenth and fortieth time points than elsewhere. The second
factor loadings fluctuate more frequently than the first, with a
clearer seasonal pattern spanning approximately 15 time-points (37.5s)
in both the time series (Fig.~\ref{fig-td-scores}a) and
the ACF (Fig.~\ref{fig-td-acf}). The large spikes in the first
few lags of the ACF  for both factor loadings, followed by a
cycle of decreasing (in magnitude) alternating negative and positive
correlations, indicate a higher-order autoregression
structure and a strong long-term seasonal component for the second
factor loading. The PACF  (Fig.~\ref{fig-td-pacf}) 
also suggests presence of moving average terms, long-term dependencies and
non-stationary components in both  factor loadings. % as indicated by
                                % the large spikes at the first few
                                % lags, alternating between positive
                                % and negative correlations.  
In summary, it appears that the variability over time of the in-brain
voxels of the pre-teen brain at rest can be summarized in terms of two
temporal components, both nonstationary and periodic, but one having 
longer-range dependence than the other. We now discuss the factor
scores. 

Figs.~\ref{fig-td-scores}d,e provide voxel-wise displays of the two
factor scores %in both the left and right hemispheres %, the latent
% factors
estimated as per Section~\ref{sec:est.scores}. The
first factor scores are high in magnitude in the 
visual and sensory cortex, in areas of emotional and higher mental function (concentration, planning, judgment,
emotional expression, creativity and inhibition) and in the motor
function areas of the cerebellum that
coordinate movement and balance. The second factor scores have
non-negligible values in areas of the cerebral cortex involving
visual, higher 
mental function, sensory association, olfactory and motor functions
that initiate 
and orient the voluntary movement of muscles and the
eye~\citep{sukel19}. We see that the factor scores represent fairly functionally distinctive brain regions.

\subsection{Handwritten digits}
\label{sec:app-ch3-digits}
\subsubsection{Preprocessing}
\label{sec:digits.preprocess}
As per Section~\ref{sec:intro.digits}, the dataset is of 70,000
784-pixel bitmap images. Fig.~\ref{fig-digits-pic} displays 20
random specimens of each digit and shows  variation in
handwriting between (and within) digits. It is this variability that
we want to characterize by  our PN factor model. We removed
pixels with the same value across all coordinates because they contribute no
information on variability. Further, because of the
discretized (bitmapped) values in the dataset, we jittered each of
them by adding i.i.d. %independent normal zero-mean pseudo-random
                   %deviates of standard deviation 0.1. 
$\mN(0,0.1^2)$ pseudo-random deviates.
The jittered images are essentially indistinguishable from the originals 
(Fig.~\ref{fig:jittered-digits}) and were $l_2$-normalized for use in
our analysis.

\subsubsection{Results and Analysis}
\begin{figure}
  \vspace{-0.15in}
  \mbox{
    \subfloat[Handwritten digit specimens.]{\includegraphics[width=0.3265\textwidth]{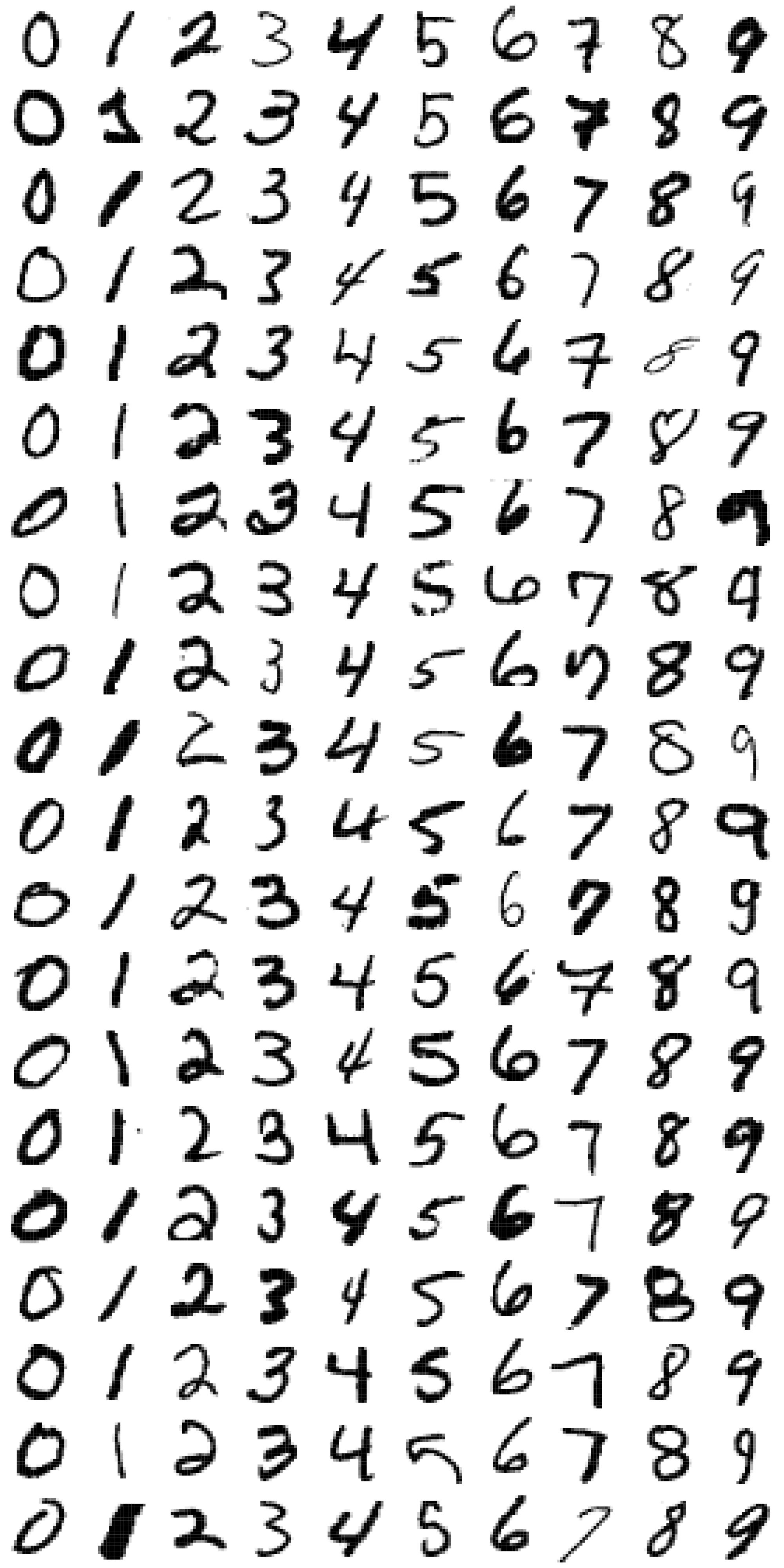}
      \label{fig-digits-pic}}}
  \mbox{
    \setcounter{subfigure}{0}
    %\subfloat{\includegraphics[width=0.03\textwidth,height=6.75cm]{figures/main/col-digits-red-fa-qm.pdf}}
    \subfloat{
  \resizebox{.0523\textwidth}{!}{\input{figures/main/col-messy-red}}}
    \subfloat[Factor loadings and correlated digits.]{\includegraphics[width = 0.583\textwidth]{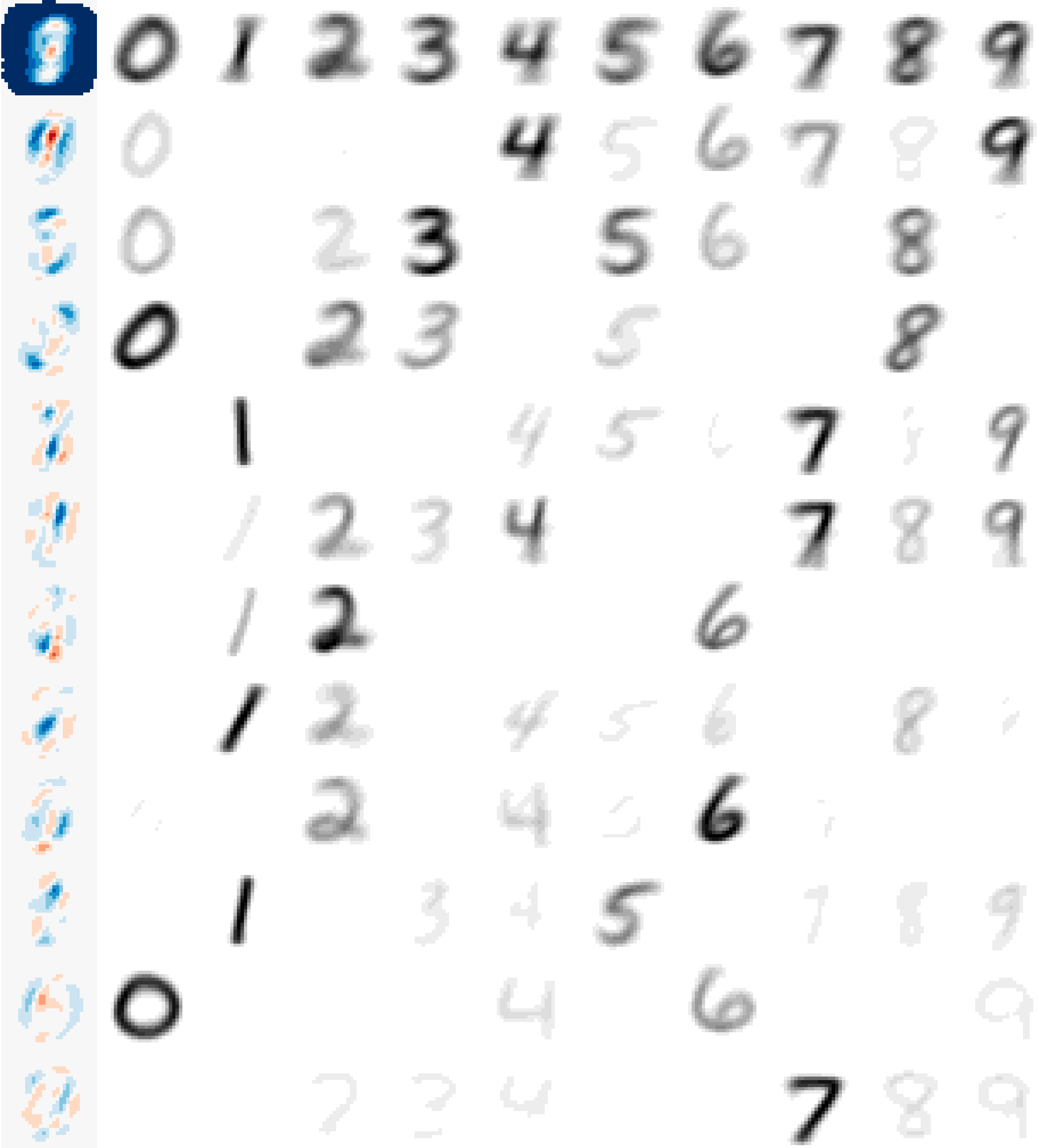}
      \label{fig-digits-m}}
  }
\caption{%
MNIST digits: (a) sample and (b) results. In (b), the first column has the
factor loadings, while the rest have the weighted mean of the
digits most correlated with the loadings of each factor. 
These diplays show   ``0'' as primarily related to the 4th and 11th factor loadings,
    ``1'' with the 5th, 8th, and 10th factors,
    ``2'' with the 4th and 6th--9th factors,
    ``3'' with the 3rd and 4th factors,
    ``4'' with the 2nd and 6th factors,
    ``5'' with the 3rd and 10th factors,
    ``6'' with the 2nd, 3rd, 7th, 9th and 11th factors,
    ``7'' with the 2nd, 5th, 6th, and 12th factors,
    ``8'' with the 3rd and 4th factors, and
    ``9'' with the 2nd, 5th, and 6th factors.
    }
% with a italic handwriting type correlated to the 8th factors
% compared with the other two. Digit
\vspace{-0.02\textheight}
\label{fig-digits}
\end{figure}
\label{sec:digits.results}
The PN factor model was fit using FADS-P for  $q=0,1,\ldots,15$: 
eBIC chose the $12$-factor model that was found to 
%In the third application, we are interested in the pattern recognition of handwritten digits from $0,1,\ldots,9$ based on 70,000 examples saved as images written by 500 different writers. The data are available from the Modified National Institute of Standards and Technology (MNIST) database \citep{LeCun2019digits}. We assumed the 70,000 examples are independent samples, and each scaled and normalized binary image of the handwritten digits are all $28\times28$ pixels. Features with all 0 values across collected samples were removed. Since the data contain 256 variables in discrete values, we added a small random noise that are normally distributed with mean 0 and standard deviation $0.1/255$ to each data point and then projected the data onto sphere by $l_2$ normalization. Figure \ref{fig-digits-pic} displays 20 random samples of the digits writings from the original dataset where different patterns and variations are shown within and across the ten digits and we are interest in characterizes the variability by the factor model.
 fit the  data significantly
better ($p\mbox{-value}\allowbreak <0.001$) than the Langevin. The
factor loadings after quartimax rotation are displayed, in decreasing
order of their contribution to the variance, in
Fig.~\ref{fig-digits-m} (first 
column) -- dropped pixels were set to zero for the factor loadings. 
For interpretability, we also used absolute correlation to assign each
handwritten digit to its closest factor loading.
Fig.~\ref{fig-digits-m} displays the relative-frequency-weighted mean of
the assigned images for each factor-digit combination. We now discuss
and interpret the factors.

The first factor in Fig.~\ref{fig-digits-m}~(upper left)
is largely an overall contrast between  foreground and background
pixels over all the digits, and consequently is strongly correlated
with samples from each digit.  
%In order to associate factors with digits, we focus on the factor
%loadings in the fitted correlation parameter which are rotated by
%quartimax rotation for a simplest factor matrix structure and compute
%the correlation values between the original observations and the
%factor loadings. Fig~\ref{fig-digits-m} displays images of each
%factor loadings (the first column), along with the pixel-wise
%weighted means for the group of digit observations assigned to that
%factor based on the maximum absolute correlation, as a result, digits
%with dominated mean values (in dark gray color) shape the factor
%loadings image.
The remaining 11 factor loadings are contrasts between different sets
of pixels in the imaging regions. Indeed, each of them are correlated
with different kinds of digits (see rows, and caption of
Fig.~\ref{fig-digits-m}). For example, the second factor 
loading has high association primarily with ``4'' and ``9'' and to a lesser extent ``6'', ``7'' and ``0''. Images
having high correlations with this factor loading have high values at
many of the same pixel locations. Also, the 12th factor loading is primarily
associated with large handwritten ``7''s spanning the entire
image. 
%Similar interpretations of the   factor loadings can be made for each of the factor loadings. 
Some digits, such as ``2'' and ``6'', are strongly associated with more factor loadings than others, probably because there are more handwriting styles (in terms of pixels used) for these digits.
D. M. Ommen has pointed out that the third and
eleventh factors are, strikingly, largely associated
with upright-written digits, potentially indicating left-handed
writers. (The MNIST dataset does  not have information on handedness,
so we are unable to verify this conjecture.)

%Viewed in terms of digits (see columns Fig~\ref{fig-digits-m}), 
D. Nettleton has asked about  using traditional Gaussian
factor analysis on this dataset, that is, essentially ignoring the 
constraints imposed by the digitization of a bitmap image. Doing so
yielded poor results, unlike FADS that not only accounts for the
imaging data acquisition 
process but also identified interpretable factors that capture the
variability in handwriting styles both between and within digits. We
remark that our methodology has not accounted for 
spatial context or pixel neighborhood. Incorporating these
aspects may improve our results further.

% The digits ``7'' is largely associated with the Digit 8 is primarily
% related to the 3rd and 4th factors. 
%Based on the first row in Fig~\ref{fig-digits-m} , all the digit groups assign to the first factor have non-negligible weighted means, indicating that the first factor loadings are highly correlated with all the digits. For the rest 11 factors, digit 0 is primarily related to the 4th and 11th factor loadings. Digit 1 is associated with the 5th, 8th, and 10th factor loadings with a italic handwriting type correlated to the 8th factors compared with the other two. Digit 2 is highly correlated with the 4th, 6th, 7th and 9th factors loadings with two distinct writing styles. Digit 3 is associated with the 3rd factor. Digit 4 can be identified from the 2nd and 6th factor loadings. Digits 5 are correlated with the 3rd and 10th factors, while 6 are mainly associated with the 2nd, 3rd, 7th, 9th and 11th factor loadings and each of the associated digit samples gives a different image on average level. Digit 8 is primarily related to the 3rd and 4th factors. The 4th and 5th factor loadings are both explained by digits 7 and 9, while the 5th factor loadings are closely related to observations where digits are written in a bigger font. The 12th factor loadings is primarily dominated by digit 7 with sample images that were written in more flatten shapes.

\subsection{TCGA dataset}
\label{sec:app-ch3-rnaseq}
\subsubsection{Data collection and preprocessing}
We use an mRNA sequence (RNA-seq) dataset comprised of 775 primary breast cancer (BC) tumors from the TCGA effort~\citep{weinsteinetal13}, prepared and released in the Bioconductor package \texttt{curatedTCGAData}~\citep{ramos20}.
RNA-seq leverages high-throughput next-generation sequencing
technology~\citep{metzker10} to examine the presence and quantity of
mRNA from each gene in a biological sample by sequencing
random deoxyribonucleic acid fragments derived from the sampled
mRNA~\citep{wang2008rnaseq}. 
During data preparation, the sampled sequence fragments were mapped to
20,502 genes, producing a count of fragments for each gene. Since the
total number of fragments sequenced (coverage) is not 
biologically meaningful, we scaled the counts in each experiment to
sum to unity, and then screened out low expression genes, by
discarding genes with mean scaled expressions below the third
quartile, producing our dataset on 5,123 genes.
After rescaling the 5,123 surviving gene proportions to
sum to unity in each of the 775 experiments,
we applied a square root transformation to place the compositional data on the unit sphere, as in~\citet{daiandmuller18}.

\subsubsection{Results and analysis}
\begin{comment}
\begin{table}
\caption{\label{tab-rna-pval}Family-Wise Error Rate (FWER)-adjusted $p$-value of association between factor and survival.}
{\centering%
\fbox{%
\begin{tabular}{c|c|c|c|c|c|c|c|c|c|c|c|c}
Factor &  1 & 2& 3& 4& 5& 6& 7& 8&9&10&11&12\\ \hline
$p$-value & 0.935 & 0.115 & 0.935 & 0.325 & 0.935 & 0.005 & 0.325 & 0.325 & 0.325 & 0.760 & 0.935 & 0.377
\end{tabular}
}}
\end{table}
\end{comment}
FADS with eBIC selected a 12-factor model when fitting up to 15 factors.
The first eight of these factors explained 95\% of the variation in the data.
This model also significantly ($p\mbox{-value}<0.001$) out-performed the Langevin model in fitting the processed data. %(Fig~\ref{fig-rna-12fa}).
We used quartimax rotation for further analysis.
To determine if any factor could be prognostic, we fitted the Cox proportional-hazards model \citep{cox1972} to the survival time of the 775 observations using each factor score in turn as the predictor.
A likelihood ratio test indicated significant association of the sixth
factor with survival (Holm's corrected $p$-value of 0.005). % (Table~\ref{tab-rna-pval}).
Fig.~\ref{fig:survival} displays survival curves with the $95\%$
confidence intervals, which are estimated using the Kaplan–Meier
method \citep{kaplanmeier58}, and obtained after thresholding on the
median value (1.09925) of the sixth factor scores. The two survival
curves are significantly different ($p$-value of 0.00069 via a logrank test~\citep{Mantel1966}). Further, high values of the sixth factor correspond to increased survival, especially between years 2--8. 
\begin{comment}
\begin{figure}%{r}{0.5\textwidth}
    \centering
    %\includegraphics[width=.5\textwidth]{figures/main/survival_plot.pdf}
    \resizebox{0.55\linewidth}{!}{
    {\input{figures/main/tcga_surv}} 
    }
    \caption{Kaplan-Meier survival plot split by median value of the sixth factor. Orange cases are above and blue are below the median.}
    \label{fig:survival}
\end{figure}
\begin{figure}[h]
    \centering
    \resizebox{0.75\linewidth}{!}{
    {\input{figures/main/tcga_gsea}}
    }
    \caption{Enrichment plot}
    \label{fig:gsea}
\end{figure}
\end{comment}
\begin{figure}
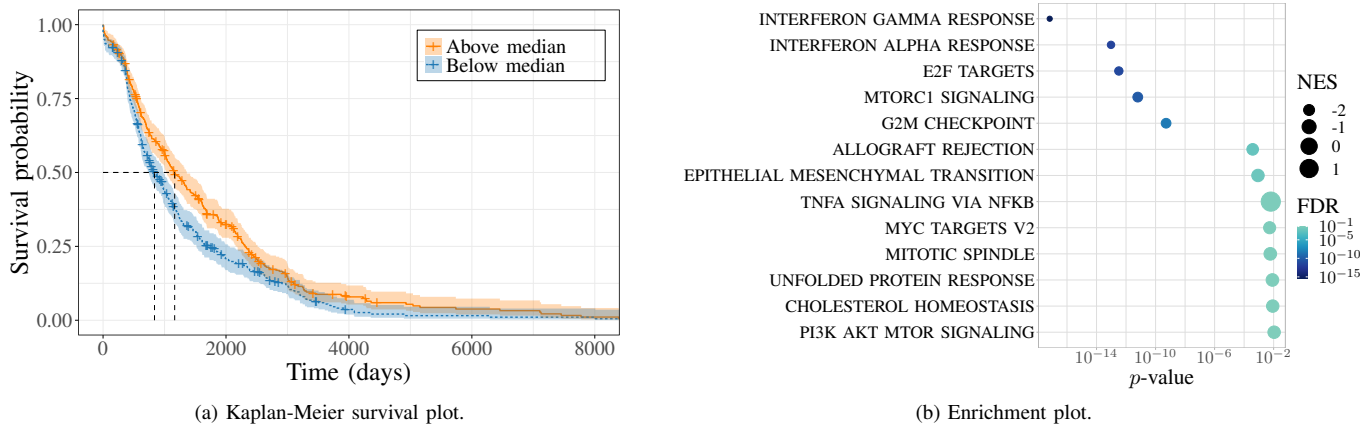

  \centering
  \mbox{
    \subfloat[Kaplan-Meier survival plot.]{\label{fig:survival}
    \hspace{-0.18in}
      \resizebox{.46\textwidth}{!}{\input{figures/main/tcga_surv}}
    }
    \hspace{-0.12in}
    \subfloat[Enrichment plot.]{\label{fig:gsea}
      \resizebox{.54\textwidth}{!}{\input{figures/main/tcga_gsea}}
    }
}
\caption{(a) Kaplan-Meier survival plot split by median value of the
  sixth factor. Orange cases are above and blue are below the
  median. The horizontal and vertical dashed lines indicate the median
  survival times for the two groups. (b) Enrichment plot. Gene sets are labeled on y-axis.}
\label{fig-tcga}
\end{figure}

The sixth factor can be associated with genes to identify possible prognostic biomarkers or biological processes to hypothesize a mechanism for altered survival. % could be used to identify genes or biological processes associated with prognosis.
In particular, it is common to identify the biological processes
associated with gene-level statistics using gene set enrichment
analysis (GSEA)~\citep{subramanian2005}.
We performed GSEA using the sixth factor loadings, on the correlation scale, as the gene-level statistic and an enrichment $p$-value was calculated for each of the 50 hallmark gene sets in the MSigDB collections, version 5~\citep{Liberzon2015}. 
GSEA found the sixth factor enriched for 13 of the 50 hallmark gene
sets with false discovery rate (FDR) adjusted p-values $<0.05$. %, collapsed into 31 biological pathways. %
Figure~\ref{fig:gsea} presents the normalized enrichment scores (NES)
and FDR-adjusted $p$-values of these thirteen selected gene sets. % associated with adjusted $p$-values less than $10^{-10}$. % (ordered by increasing $p$-values). % [KSD] Don't think we want these plots.
%In each plot, GSEA proceeds along the list of $5,123$ genes ranked based on the sixth factor loading, keeping a running score that increases when a gene belongs to the biological process gene set and otherwise decreases. The maximum absolute score value is recorded as the enrichment score for the biological pathway. It can be seen that a substantial portion of the genes in the enriched gene sets tend to gather around one end of the ranked list.
%(See Section \ref{sec:supp-rna-gsea} for enrichment plots of the remaining 25 enriched biological pathways.)

\begin{comment}
\begin{figure}[h]
\centering
I want a new plot here to replace all six previous plots. See \url{https://journals.sagepub.com/na101/home/literatum/publisher/sage/journals/content/tcta/2021/tcta_20/1533033821992081/20210207/images/large/10.1177_1533033821992081-fig2.jpeg}.
Also, we could produce a heatmap of the highly correlated genes in any of the 13 significant gene sets and their loadings on factor 6 to see quickly the similarity between gene sets.
The genes not in a gene set would be background colored.
Hopefully patterns will still be visible.
%  \resizebox{1\linewidth}{!}{
%  \input{figures/main/enrichment-pathway1-6}
%  }
\caption{
Visualization of the thirteen significantly enriched gene sets.
All but TNFA\_SIGNALING\_VIA\_NFKB negatively correlates with the sixth factor, 
suggesting activation of the twelve other gene sets correlates with poor prognosis.
% plots show genes with high or low values of the sixth factor are enriched in biological pathways: (1) co-translational  protein  targeting  to  membrane, (2) nuclear-transcribed  mRNA catabolic  process, nonsense-mediated  decay,  (3) viral  gene  expression, (4) translational  initiation, (5) cytokine-mediated signaling pathway and (6)  immune  effector process.
}
\label{fig-rna-es-1-6}
\end{figure}
%(Table~\ref{tab-rna-es-1-6}) 
\end{comment}

We examine the five gene sets with FDR-adjusted $p$-values below $10^{-8}$, comparing our findings to the literature.
The \verb#INTERFERON_GAMMA_RESPONSE# and
\verb#INTERFERON_ALPHA_RESPONSE# gene sets are immunity-related and
substantially overlapped hallmark gene sets (Jaccard index $0.33$).
%These two gene sets share several genes (Jaccard index $0.33$), especially among those most negatively correlated with the sixth factor (two-sample Wilcoxon test of equal loadings on genes in the intersection vs.~non-intersection genes $p\text{-value} = 0.0008$).
Though the role of the immune response in BC is far from clear~\citep{Gatti-Mays2019}, the constellation of genes activated in response to the interferon-$\gamma$ cell-signaling cytokine have recently been associated with poor prognosis in another cohort of BC patients~\citep{Heimes2020}. 
The top three loaded genes in \textit{both} gene sets are IFI27, ISG15, and RSAD2.
	IFI27 % has lower expression in BC metasteses~\citep{Szekely2018} and 
		physically associates with estrogen receptor $\alpha$, an important mediator of proliferation in BC~\citep{Cervantes-Badillo2020},
	ISG15 modifies other proteins through ISGylation and correlates with lymphovascular invasion and poor prognosis~\citep{Bektas2008,Kariri2021},
	and RSAD2, known to inhibit virus replication~\citep{Rivera-Serrano2020}, associates with tumor stage and lymph node metasteses~\citep{Tang2020}.
The \verb#E2F_TARGETS# and \verb#G2M_CHECKPOINT# gene sets are involved in cell proliferation, a commonly upregulated function in tumor samples~\citep{Chen2021}.
E2F transcription factors participate in cell cycle control, and several have recently been associated with BC prognosis~\citep{Sun2019}, % in TCGA and other datasets.
while the entire \verb#G2M_CHECKPOINT# hallmark gene set has been deemed a prognostic biomarker of metastasis in one type of BC~\citep{Oshi2020}.	% in another TCGA dataset (may overlap) and another dataset
Transcription factor MYBL2 is the most (negatively) loaded gene on the sixth factor in \textit{both} gene sets, 
	while TOP2A is the second most (negatively) loaded gene among E2F targets and the third most (negatively) loaded G2M checkpoint gene.
MYBL2 directly targets several kinesin motor proteins that are correlated with poor BC prognosis~\citep{Wolter2017}.
TOP2A is a DNA topoisomerase whose expression has been correlated with poor BC prognosis~\citep{Ogino2020,Chen2021}. %Ogino2020 not TCGA, Chen2021 verified on TCGA
In fact, it is a direct target of some chemotherapies, including a trial where its expression was successfully used to guide treatment~\citep{Li2021}.
The gene set \verb#MTORC1_SIGNALING# consists of genes upregulated
through activation of MTORC1, a protein complex responsible for protein synthesis %when cell resources support it.
	that has also been targeted by chemotherapeutic agents for BC~\citep{Seto2012}.
MTORC1 appears frequently in the BC literature as several prognostic markers are in MTORC1-related gene sets~\citep{Qiu2018, Takeshita2019, Wang2020, Cai2021},
Here, SQLE is the most (negatively) loaded gene on the MTORC1 gene set and the second most (negatively) loaded gene on the G2M gene set.
It is an enzyme involved in cholesterol synthesis, with an impact on proliferation, that has been associated with poor prognosis~\citep{Brown2016,Kim2021b}.
Thus, FADS finds %several axes of variation in BC primary tumors, but the 
	a single, highly interpretable, apparently proliferation-related, axis of variation significantly associated with BC survival.

\section{Discussion}
\label{sec:disc}
This paper develops methodology for exploratory factor
analysis of data on a sphere, with a view to explaining the
variability in the coordinates in terms of a few underlying (latent)
factors. Such datasets arise in applications involving text documents,
time series image volumes of the brain, digitized
images, or compositional gene expression data, that we investigate in this paper. 
Our approach uses the PN distribution~\citep{mardiaandjupp00}
which is the distribution of a normalized $\mN_p(\bmu,\bSigma)$ random
vector, with the restriction that $\bmu\in{\mathcal S}_{p-1}$ for
identifiability. Our factor model is introduced through a
decomposition of the generative MVN model. We develop
computationally fast estimation methods for the parameters using two
versions of the AECM algorithm. The first, called FADS-D, uses the
latent factors and lengths of the random vectors of our generative model
as the missing variables, while the second approach, called FADS-P,
profiles out the latent factors
in the generative model. We also provided uncertainty estimators for
our models. Our methods are implemented in the developmental version
of  our \texttt{fad} package in  R~\citep{R} that is available at
\url{https://github.com/somakd/fad}.
Simulation results show fast and excellent
performance, especially of FADS-P with the number of factors
well-estimated using eBIC. Analysis of datasets from four
motivating applications provided interpretable insights into the
underlying mechanisms governing variability in each case. 

There are a number of points that could benefit from further
attention. For instance, our PN factor model (and indeed the MVN
factor model itself) could benefit from
results on consistency of the eBIC. Further, our model could also be used in regression and 
classification settings, replacing, for instance, the Fisher
distribution, in the development of the spherical regression model
of~\citet{downs03}, with appropriate 
modifications. Such an approach would allow us to characterize resting
state brains in autistic and typically developing children, for
example. Other extensions to be investigated include Bayesian
methods for estimation, as well as including sparsity structures on
$\bLambda$. It may also be of interest to incorporate the PN model in
settings that involve a combination of $p$ linear variables and $q$
variables on $\mathcal S_{q-1}$. From an applications perspective, we
may also be interested in using factor scores in genome-wide
association studies to account for population structure. Therefore, we
see that there are many directions for further investigation. 
 \section*{Acknowledgments}
 The  research was
supported in  part by the United States Department
  of Agriculture (USDA)/National Institute of Food and
  Agriculture (NIFA), Hatch projects IOW03617 and IOW03717. The
  content of this paper however is 
  solely the responsibility of the  authors and does not represent the
  official views of the USDA. 
 % The authors would like to thank...
% % Can use something like this to put references on a page
% % by themselves when using endfloat and the captionsoff option.
\ifCLASSOPTIONcaptionsoff
  \newpage
\fi
% trigger a \newpage just before the given reference
% number - used to balance the columns on the last page
% adjust value as needed - may need to be readjusted if
% the document is modified later
%\IEEEtriggeratref{8}
% The "triggered" command can be changed if desired:
%\IEEEtriggercmd{\enlargethispage{-5in}}
% references section
% can use a bibliography generated by BibTeX as a .bbl file
% BibTeX documentation can be easily obtained at:
% http://mirror.ctan.org/biblio/bibtex/contrib/doc/
% The IEEEtran BibTeX style support page is at:
% http://www.michaelshell.org/tex/ieeetran/bibtex/
\bibliographystyle{IEEEtran}
% argument is your BibTeX string definitions and bibliography database(s)
\bibliography{references,acm}

% Generated by IEEEtran.bst, version: 1.14 (2015/08/26)
\begin{thebibliography}{100}
\providecommand{\url}[1]{#1}
\csname url@samestyle\endcsname
\providecommand{\newblock}{\relax}
\providecommand{\bibinfo}[2]{#2}
\providecommand{\BIBentrySTDinterwordspacing}{\spaceskip=0pt\relax}
\providecommand{\BIBentryALTinterwordstretchfactor}{4}
\providecommand{\BIBentryALTinterwordspacing}{\spaceskip=\fontdimen2\font plus
\BIBentryALTinterwordstretchfactor\fontdimen3\font minus
  \fontdimen4\font\relax}
\providecommand{\BIBforeignlanguage}[2]{{%
\expandafter\ifx\csname l@#1\endcsname\relax
\typeout{** WARNING: IEEEtran.bst: No hyphenation pattern has been}%
\typeout{** loaded for the language `#1'. Using the pattern for}%
\typeout{** the default language instead.}%
\else
\language=\csname l@#1\endcsname
\fi
#2}}
\providecommand{\BIBdecl}{\relax}
\BIBdecl

\bibitem{watson83}
G.~S. Watson, \emph{Statistics on Spheres}, ser. University of Arkansas Lecture
  Notes in the Mathematical Sciences.\hskip 1em plus 0.5em minus 0.4em\relax
  Wiley Interscience, 1983.

\bibitem{fisheretal93}
N.~Fisher, T.~Lewis, and B.~Embleton, \emph{Statistical Analysis of Spherical
  Data}.\hskip 1em plus 0.5em minus 0.4em\relax Cambridge University Press,
  1993.

\bibitem{jammalamadakaandsengupta01}
S.~R. Jammalamadaka and A.~SenGupta, \emph{Topics in Circular
  Statistics}.\hskip 1em plus 0.5em minus 0.4em\relax World Scientific, 2001.

\bibitem{mardiaandjupp00}
K.~V. Mardia and P.~E. Jupp, \emph{Directional Statistics}.\hskip 1em plus
  0.5em minus 0.4em\relax New York: Wiley, 2000.

\bibitem{vonmises18}
R.~{von Mises}, ``\"uber die ``ganzzahligkeit'' der atomgewichte und verwandte
  fragen,'' \emph{Physikalische Zeitung}, vol.~19, pp. 490--500, 1918.

\bibitem{fisher53}
R.~A. Fisher, ``Dispersion on a sphere,'' \emph{Proc. R. Soc. Lond. A}, vol.
  217, pp. 295--305, 1953.

\bibitem{senguptaandmaitra98}
A.~SenGupta and R.~Maitra, ``On best equivariance and admissibility of
  simultaneous {MLE} for mean direction vectors of several {L}angevin
  distributions,'' \emph{Annals of the Institute of Statistical Mathematics},
  vol.~50, no.~4, pp. 715--27, 1998.

\bibitem{saltonandbuckley88}
G.~Salton and C.~Buckley, ``Term-weigting approaches in automatic text
  retrieval,'' \emph{Information Processing and Management}, vol.~27, no.~5,
  pp. 513--523, 1988.

\bibitem{kolda97}
T.~G. Kolda, ``Limited-memory matrix methods with applications,'' Ph.D.
  dissertation, The Applied Mathematics Program, University of Maryland,
  College Park, Maryland, 1997.

\bibitem{dhillonandmodha01}
I.~S. Dhillon and D.~S. Modha, ``Concept decompositions for large sparse text
  data using clustering,'' \emph{Machine Learning}, vol.~42, pp. 143--175,
  2001.

\bibitem{singhaletal96}
A.~Singhal, C.~Buckley, M.~Mitra, and G.~Salton, ``Pivoted document length
  normalization,'' in \emph{{ACM SIGIR'96}}, 1996, pp. 21--29.

\bibitem{banerjeeetal05}
A.~Banerjee, I.~S. Dhillon, J.~Ghosh, and S.~Sra, ``Clustering on the unit
  hypesphere using {von Mises-Fisher} distributions,'' \emph{Journal of Machine
  Learning Research}, vol.~6, pp. 1345--1382, 2005.

\bibitem{biswaletal95}
B.~Biswal, F.~Z. Yetkin, V.~M. Haughton, and J.~S. Hyde, ``Functional
  connectivity in the motor cortex of resting human brain using echo-planar
  {MRI},'' \emph{Magnetic Resonance in Medicine}, vol.~34, no.~4, pp. 537--541,
  1995.

\bibitem{biswal12}
B.~B. Biswal, ``Resting state {fMRI}: A personal history,'' \emph{NeuroImage},
  vol.~62, no.~2, pp. 938 -- 944, 2012.

\bibitem{belliveauetal91}
J.~W. Belliveau, D.~N. Kennedy, R.~C. McKinstry, B.~R. Buchbinder, R.~M.
  Weisskoff, M.~S. Cohen, J.~M. Vevea, T.~J. Brady, and B.~R. Rosen,
  ``Functional mapping of the human visual cortex by magnetic resonance
  imaging,'' \emph{Science}, vol. 254, pp. 716--719, 1991.

\bibitem{kwongetal92}
K.~K. Kwong, J.~W. Belliveau, D.~A. Chesler, I.~E. Goldberg, R.~M. Weisskoff,
  B.~P. Poncelet, D.~N. Kennedy, B.~E. Hoppel, M.~S. Cohen, R.~Turner, H.-M.
  Cheng, T.~J. Brady, and B.~R. Rosen, ``Dynamic magnetic resonance imaging of
  human brain activity during primary sensory stimulation,'' \emph{Proceedings
  of the National Academy of Sciences of the United States of America},
  vol.~89, pp. 5675--5679, 1992.

\bibitem{bandettinietal93}
P.~A. Bandettini, A.~Jesmanowicz, E.~C. Wong, and J.~S. Hyde, ``Processing
  strategies for time-course data sets in functional mri of the human brain,''
  \emph{Magnetic Resonance in Medicine}, vol.~30, pp. 161--173, 1993.

\bibitem{lindquist08}
M.~A. Lindquist, ``The statistical analysis of {fMRI} data,'' \emph{Statistical
  Science}, vol.~23, no.~4, pp. 439--464, 2008.

\bibitem{lazar08}
N.~A. Lazar, \emph{The Statistical Analysis of Functional MRI Data}.\hskip 1em
  plus 0.5em minus 0.4em\relax Springer, 2008.

\bibitem{foxetal05}
M.~D. Fox, A.~Z. Snyder, J.~L. Vincent, M.~Corbetta, D.~C. Van~Essen, and M.~E.
  Raichle, ``The human brain is intrinsically organized into dynamic,
  anticorrelated functional networks,'' \emph{Proceedings of the National
  Academy of Sciences}, vol. 102, no.~27, pp. 9673--9678, 2005.

\bibitem{coleetal10}
D.~Cole, S.~Smith, and C.~Beckmann, ``Advances and pitfalls in the analysis and
  interpretation of resting-state {fMRI} data,'' \emph{Frontiers in Systems
  Neuroscience}, vol.~4, p.~8, 2010.

\bibitem{maitraandramler10}
R.~Maitra and I.~P. Ramler, ``A $k$-mean-directions algorithm for efficient
  clustering of data on a sphere,'' \emph{Journal of Computational and
  Graphical Statistics}, vol.~19, no.~2, pp. 377--396, 2010.

\bibitem{LeCun2019digits}
\BIBentryALTinterwordspacing
Y.~LeCun, C.~Cortes, and C.~J.~C. Burges, ``The mnist database,'' 2019.
  [Online]. Available: \url{http://yann.lecun.com/exdb/mnist/}
\BIBentrySTDinterwordspacing

\bibitem{daiandmuller18}
X.~Dai and H.-G. M\"uller, ``Principal component analysis for functional data
  on {R}iemannian manifolds and spheres,'' \emph{Annals of Statistics},
  vol.~46, no.~6B, pp. 3334--3361, 12 2018.

\bibitem{bose2019chaudhari}
K.~Bose and P.~Chaudhari, \emph{Unravelling Cancer Signaling Pathways: A
  Multidisciplinary Approach}.\hskip 1em plus 0.5em minus 0.4em\relax Springer,
  2019.

\bibitem{mardiaetal79}
K.~V. Mardia, J.~T. Kent, and J.~M. Bibby, \emph{Multivariate analysis}.\hskip
  1em plus 0.5em minus 0.4em\relax New York: Academic Press, 1979.

\bibitem{dortetbernadetandwicker08}
J.~Dortet-Bernadet and N.~Wicker, ``Model-based clustering on the unit sphere
  with an illustration using gene expression profiles,'' \emph{Biostatistics},
  vol.~9, no.~1, pp. 66--80, 2008.

\bibitem{bingham74}
C.~Bingham, ``An antipodally symmetric distribution on the sphere,''
  \emph{Annals of Statistics}, vol.~2, pp. 1201--1225, 1974.

\bibitem{kent82}
J.~T. Kent, ``The {F}isher–{B}ingham distribution on the sphere,''
  \emph{Journal of the Royal Statistical Society}, vol.~44, pp. 71--80, 1982.

\bibitem{dryden05}
I.~Dryden, ``Statistical analysis on high-dimensional spheres and shape
  spaces,'' \emph{Annals of Statistics}, vol.~33, pp. 1643--1665, 09 2005.

\bibitem{kimetal17}
B.~Kim, S.~Huckemann, J.~Schulz, and S.~Jung, ``Small sphere distributions for
  directional data with application to medical imaging,'' \emph{Scandinavian
  Journal of Statistics}, pp. 432--444, 05 2017.

\bibitem{wangandgelfand13}
F.~Wang and A.~E. Gelfand, ``Directional data analysis under the general
  projected normal distribution,'' \emph{Statistical Methodology}, vol.~10, pp.
  113--127, 07 2013.

\bibitem{stumpfhauseretal17}
D.~Hernandez-Stumpfhauser, F.~Breidt, and M.~Woerd, ``The general projected
  normal distribution of arbitrary dimension: Modeling and {B}ayesian
  inference,'' \emph{Bayesian Analysis}, vol.~12, pp. 113--133, 01 2016.

\bibitem{mengandvandyk97}
X.~Meng and D.~van Dyk, ``The {EM} algorithm --- an old folk-song sung to a
  fast new tune (with discussion),'' \emph{Journal of the Royal Statistical
  Society B}, vol.~59, pp. 511--567, 1997.

\bibitem{varadhanandroland08}
R.~Varadhan and C.~Roland, ``Simple and globally convergent methods for
  accelerating the convergence of any em algorithm,'' \emph{Scandinavian
  Journal of Statistics}, vol.~35, pp. 335--353, 06 2008.

\bibitem{chenandchen08}
J.~Chen and Z.~Chen, ``Extended bayesian information criteria for model
  selection with large model spaces,'' \emph{Biometrika}, vol.~95, no.~3, pp.
  759--771, 2008.

\bibitem{pukkilaandrao88}
T.~M. Pukkila and C.~R. Rao, ``Pattern recognition based on scale invariant
  discriminant functions,'' \emph{Information sciences}, vol.~45, no.~3, pp.
  379--389, 1988.

\bibitem{paineetal18}
P.~J. Paine, S.~Preston, M.~Tsagris, and A.~Wood, ``An elliptically symmetric
  angular gaussian distribution,'' \emph{Statistics and Computing}, vol.~28,
  pp. 689--697, 05 2018.

\bibitem{lawley40}
D.~N. Lawley, ``The estimation of factor loadings by the method of maximum
  likelihood,'' \emph{Proceedings of the Royal Society of Edinburgh}, vol.~60,
  no.~1, p. 64–82, 1940.

\bibitem{joreskog66}
K.~G. J\"oreskog, ``Some contributions to maximum likelihood factor analysis,''
  \emph{ETS Research Bulletin Series}, no.~2, pp. 1--54, 1966.

\bibitem{lawleyetal71}
D.~N. Lawley, A.~E. Maxwell, and A.~Ernest,
  \emph{\BIBforeignlanguage{English}{Factor analysis as a statistical method}},
  2nd~ed.\hskip 1em plus 0.5em minus 0.4em\relax New York : American Elsevier
  Publishing Company, 1971.

\bibitem{anderson03}
T.~W. Anderson, \emph{An Introduction to multivariate statistical analysis},
  ser. Wiley Series in Probability and Statistics.\hskip 1em plus 0.5em minus
  0.4em\relax Wiley, 2003.

\bibitem{owenandwang15}
A.~B. Owen and J.~Wang, ``Bi-cross-validation for factor analysis,''
  \emph{Statistical Science}, vol.~31, pp. 119--139, 03 2015.

\bibitem{daietal20}
F.~Dai, S.~Dutta, and R.~Maitra, ``A matrix-free likelihood method for
  exploratory factor analysis of high-dimensional gaussian data,''
  \emph{Journal of Computational and Graphical Statistics}, vol.~29, no.~3, pp.
  675--680, 2020.

\bibitem{costelloandosborne05}
A.~B. Costello and J.~Osborne, ``Best practices in exploratory factor analysis:
  Four recommendations for getting the most from your analysis,''
  \emph{Practical Assessment, Research and Evaluation}, vol.~10, pp. 1--9, 01
  2005.

\bibitem{rubinandthayer82}
D.~Rubin and D.~Thayer, ``Em algorithms for ml factor analysis,''
  \emph{Psychometrika}, vol.~47, pp. 69--76, 02 1982.

\bibitem{hendersonandsearle81}
\BIBentryALTinterwordspacing
H.~V. Henderson and S.~R. Searle, ``On deriving the inverse of a sum of
  matrices,'' \emph{SIAM Review}, vol.~23, no.~1, pp. 53--60, 1981. [Online].
  Available: \url{http://www.jstor.org/stable/2029838}
\BIBentrySTDinterwordspacing

\bibitem{byrdetal95}
R.~H. Byrd, J.~N. P.~Lu, and C.~Zhu, ``A limited memory algorithm for bound
  constrained optimization,'' \emph{SIAM Journal on Scientific Computing},
  vol.~16, pp. 1190--1208, 1995.

\bibitem{piessensetal83}
R.~Piessens, E.~de~Doncker-Kapenga, C.~W. {\"U}berhuber, and D.~K. Kahaner,
  \emph{Quadpack: a subroutine package for automatic integration}.\hskip 1em
  plus 0.5em minus 0.4em\relax Springer-Verlag, 1983, vol.~1.

\bibitem{sorensen92}
D.~C. Sorensen, ``Implicit application of polynomial filters in a $k$-step
  arnoldi method,'' \emph{{SIAM} Journal on Matrix Analysis and Applications},
  vol.~13, no.~1, pp. 357--385, 1992.

\bibitem{duttaandmondal15}
S.~Dutta and D.~Mondal, ``An h-likelihood method for spatial mixed linear model
  based on intrinsic autoregressions,'' \emph{Journal of the Royal Statistical
  Society: Series B (Statistical Methodology)}, vol.~77, pp. 699--726, 09 2015.

\bibitem{wilkinson1958}
J.~H. Wilkinson, ``The calculation of the eigenvectors of codiagonal
  matrices,'' \emph{Computer Journal}, vol.~1, pp. 90--96, 02 1958.

\bibitem{maitra09}
R.~Maitra, ``Initializing partition-optimization algorithms,'' \emph{IEEE/ACM
  Transactions on Computational Biology and Bioinformatics}, vol.~6, pp.
  144--157, 2009.

\bibitem{maitra13}
------, ``On the expectation-maximization algorithm for {R}ice-{R}ayleigh
  mixtures with application to estimating the noise parameter in magnitude {MR}
  datasets,'' \emph{Sankhy\=a: The Indian Journal of Statistics, Series {B}},
  vol.~75, no.~2, p. 293–318, 2013.

\bibitem{schwarz78}
G.~Schwarz, ``Estimating the dimensions of a model,'' \emph{Annals of
  Statistics}, vol.~6, pp. 461--464, 1978.

\bibitem{argelaguet2018multi}
R.~Argelaguet, B.~Velten, D.~Arnol, S.~Dietrich, T.~Zenz, J.~C. Marioni,
  F.~Buettner, W.~Huber, and O.~Stegle, ``Multi-omics factor analysis--a
  framework for unsupervised integration of multi-omics data sets,''
  \emph{Molecular Systems Biology}, vol.~14, no.~6, p. e8124, 2018.

\bibitem{meng2016mocluster}
C.~Meng, D.~Helm, M.~Frejno, and B.~Kuster, ``{moCluster}: identifying joint
  patterns across multiple omics data sets,'' \emph{Journal of Proteome
  Research}, vol.~15, no.~3, pp. 755--765, 2016.

\bibitem{mo2018fully}
Q.~Mo, R.~Shen, C.~Guo, M.~Vannucci, K.~S. Chan, and S.~G. Hilsenbeck, ``A
  fully bayesian latent variable model for integrative clustering analysis of
  multi-type omics data,'' \emph{Biostatistics}, vol.~19, no.~1, pp. 71--86,
  2018.

\bibitem{shen2009integrative}
R.~Shen, A.~B. Olshen, and M.~Ladanyi, ``Integrative clustering of multiple
  genomic data types using a joint latent variable model with application to
  breast and lung cancer subtype analysis,'' \emph{Bioinformatics}, vol.~25,
  no.~22, pp. 2906--2912, 2009.

\bibitem{grun2015single}
D.~Gr{\"u}n, A.~Lyubimova, L.~Kester, K.~Wiebrands, O.~Basak, N.~Sasaki,
  H.~Clevers, and A.~Van~Oudenaarden, ``Single-cell messenger {RNA} sequencing
  reveals rare intestinal cell types,'' \emph{Nature}, vol. 525, no. 7568, pp.
  251--255, 2015.

\bibitem{saelens2019comparison}
W.~Saelens, R.~Cannoodt, H.~Todorov, and Y.~Saeys, ``A comparison of
  single-cell trajectory inference methods,'' \emph{Nature Biotechnology},
  vol.~37, no.~5, pp. 547--554, 2019.

\bibitem{distefano2009understanding}
C.~DiStefano, M.~Zhu, and D.~Mindrila, ``Understanding and using factor scores:
  Considerations for the applied researcher,'' \emph{Practical Assessment,
  Research, and Evaluation}, vol.~14, no.~1, p.~20, 2009.

\bibitem{patterson2006population}
N.~Patterson, A.~L. Price, and D.~Reich, ``Population structure and
  eigenanalysis,'' \emph{PLoS Genetics}, vol.~2, no.~12, p. e190, 2006.

\bibitem{price2006principal}
A.~L. Price, N.~J. Patterson, R.~M. Plenge, M.~E. Weinblatt, N.~A. Shadick, and
  D.~Reich, ``Principal components analysis corrects for stratification in
  genome-wide association studies,'' \emph{Nature Genetics}, vol.~38, no.~8,
  pp. 904--909, 2006.

\bibitem{novembre2008interpreting}
J.~Novembre and M.~Stephens, ``Interpreting principal component analyses of
  spatial population genetic variation,'' \emph{Nature Genetics}, vol.~40,
  no.~5, pp. 646--649, 2008.

\bibitem{thurstone1935vectors}
L.~L. Thurstone, \emph{The vectors of mind: Multiple-factor analysis for the
  isolation of primary traits.}\hskip 1em plus 0.5em minus 0.4em\relax
  University of Chicago Press, 1935.

\bibitem{louis1982}
T.~A. Louis, ``Finding the observed information matrix when using the {EM}
  algorithm,'' \emph{Journal of the Royal Statistical Society, Series B
  (Methodological)}, vol.~44, no.~2, pp. 226--233, 1982.

\bibitem{barnard63}
G.~A. Barnard, ``Discussion of ``the spectral analysis of point processes" by
  m. s. bartlett,'' \emph{Journal of the Royal Statistical Society: Series B},
  vol.~25, pp. 264--294, 1963.

\bibitem{besagandclifford91}
J.~Besag and P.~Clifford, ``Sequential {M}onte {C}arlo p-values.''
  \emph{Biometrika}, vol.~78, pp. 301--304, 1991.

\bibitem{anandarajanetal18}
M.~Anandarajan, C.~Hill, and T.~Nolan, \emph{Practical text analytics:
  Maximizing the value of text data}, 1st~ed.\hskip 1em plus 0.5em minus
  0.4em\relax Springer Publishing Company, Incorporated, 2018.

\bibitem{lebanon06}
G.~Lebanon, ``Metric learning for text documents,'' \emph{{IEEE} Transactions
  on Pattern Analysis and Machine Intelligence}, vol.~28, pp. 497--508, 05
  2006.

\bibitem{tanandporzecanski18a}
G.~Tan and K.~Porzecanski, ``Wall street rule for the \#metoo era: Avoid women
  at all cost,'' \emph{Bloomberg}, 12 2018.

\bibitem{tanandporzecanski18b}
------, ``{NYC} officials blast {W}all {S}treet's ice-out of women in wake of
  \#metoo,'' \emph{Bloomberg}, 12 2018.

\bibitem{nebeletal14}
M.~B. Nebel, S.~Joel, J.~Muschelli, A.~Barber, B.~Caffo, J.~Pekar, and
  S.~Mostofsky, ``Disruption of functional organization within the primary
  motor cortex in children with autism,'' \emph{Human Brain Mapping}, vol.~35,
  pp. 567--580, 02 2014.

\bibitem{cox96}
R.~W. Cox, ``Afni: software for analysis and visualization of functional
  magnetic resonance neuroimages,'' \emph{Computers and Biomedical research},
  vol.~29, no.~3, pp. 162--173, 1996.

\bibitem{cox12}
------, ``{AFNI}: What a long strange trip it has been,'' \emph{NeuroImage},
  vol.~62, pp. 743--747, 2012.

\bibitem{chenandwang19}
C.-J. Chen and J.-L. Wang, ``A new approach for functional connectivity via
  alignment of blood oxygen level-dependent signals,'' \emph{Brain
  Connectivity}, vol.~9, no.~6, pp. 464--474, 2019.

\bibitem{sukel19}
\BIBentryALTinterwordspacing
K.~Sukel, \emph{Neuroanatomy: The Basics}, 2019. [Online]. Available:
  \url{https://www.dana.org/article/neuroanatomy-the-basics/}
\BIBentrySTDinterwordspacing

\bibitem{weinsteinetal13}
J.~N. Weinstein, E.~A. Collisson, G.~B. Mills, K.~R.~M. Shaw, B.~A. Ozenberger,
  K.~Ellrott, I.~Shmulevich, C.~Sander, and J.~M. Stuart, ``The cancer genome
  atlas pan-cancer analysis project,'' \emph{Nature Genetics}, vol.~45, no.
  1010, p. 1113–1120, 2013.

\bibitem{ramos20}
M.~Ramos, \emph{curatedTCGAData: Curated Data From The Cancer Genome Atlas
  (TCGA) as MultiAssayExperiment Objects}, 2020, r package version 1.10.0.

\bibitem{metzker10}
M.~L. Metzker, ``Sequencing technologies - the next generation.'' \emph{Nature
  Review Genetics}, vol.~11, p. 31–46, 2010.

\bibitem{wang2008rnaseq}
Z.~Wang, M.~Gerstein, and M.~Snyder, ``Rna-seq: A revolutionary tool for
  transcriptomics,'' \emph{Nature reviews. Genetics}, vol.~10, pp. 57--63, 12
  2008.

\bibitem{cox1972}
D.~R. Cox, ``Regression models and life-tables,'' \emph{Journal of the Royal
  Statistical Society, Series B (Methodological)}, vol.~34, no.~2, pp.
  187--220, 1972.

\bibitem{kaplanmeier58}
E.~L. Kaplan and P.~Meier, ``Nonparametric estimation from incomplete
  observations,'' \emph{Journal of the American Statistical Association},
  vol.~53, no. 282, pp. 457--481, 1958.

\bibitem{Mantel1966}
N.~Mantel, ``Evaluation of survival data and two new rank order statistics
  arising in its consideration,'' \emph{Cancer Chemotherapy Reports}, vol.~50,
  no.~3, pp. 163--170, 1966.

\bibitem{subramanian2005}
A.~Subramanian, P.~Tamayo, V.~K. Mootha, S.~Mukherjee, B.~L. Ebert, M.~A.
  Gillette, A.~Paulovich, S.~L. Pomeroy, and T.~R. Golub, ``Gene set enrichment
  analysis: a knowledge-based approach for interpreting genome-wide expression
  profiles.'' \emph{Proceedings of the National Academy of Sciences}, vol. 102,
  pp. 15\,545--15\,550, 10 2005.

\bibitem{Liberzon2015}
A.~Liberzon, C.~Birger, H.~Thorvaldsd{\'o}ttir, M.~Ghandi, J.~P. Mesirov, and
  P.~Tamayo, ``The {Molecular} {Signatures} {Database} ({MSigDB}) hallmark gene
  set collection,'' \emph{Cell Systems}, vol.~1, no.~6, pp. 417--425, 12 2015.

\bibitem{Gatti-Mays2019}
M.~E. Gatti-Mays, J.~M. Balko, S.~R. Gameiro, H.~D. Bear, S.~Prabhakaran,
  J.~Fukui, M.~L. Disis, R.~Nanda, J.~L. Gulley, K.~Kalinsky, H.~A. Sater,
  J.~A. Sparano, D.~Cescon, D.~B. Page, H.~McArthur, S.~Adams, and E.~A.
  Mittendorf, ``If we build it they will come: Targeting the immune response to
  breast cancer,'' \emph{npj Breast Cancer}, vol.~5, no.~1, pp. 1--13, 10 2019.

\bibitem{Heimes2020}
A.-S. Heimes, F.~H{\"a}rtner, K.~Almstedt, S.~Krajnak, A.~Lebrecht, M.~J.
  Battista, K.~Edlund, W.~Brenner, A.~Hasenburg, U.~Sahin, M.~Gehrmann, J.~G.
  Hengstler, and M.~Schmidt, ``Prognostic significance of interferon-$\gamma$
  and its signaling pathway in early breast cancer depends on the molecular
  subtypes,'' \emph{International Journal of Molecular Sciences}, vol.~21, no.
  1919, p. 7178, 1 2020.

\bibitem{Cervantes-Badillo2020}
M.~G. Cervantes-Badillo, A.~Paredes-Villa, V.~G{\'o}mez-Romero,
  R.~Cervantes-Rold{\'a}n, L.~E. Arias-Romero, O.~Villamar-Cruz,
  M.~Gonz{\'a}lez-Montiel, T.~Barrios-Garc{\'i}a, A.~J. Cabrera-Quintero,
  G.~Rodr{\'i}guez-G{\'o}mez, L.~Cancino-Villeda, A.~Zentella-Dehesa, and
  A.~Le{'o}n-Del-Rio, ``{IFI27/ISG12} downregulates estrogen receptor $\alpha$
  transactivation by facilitating its interaction with {CRM1/XPO1} in breast
  cancer cells,'' \emph{Frontiers in Endocrinology}, vol.~11, p. 568375, 2020.

\bibitem{Bektas2008}
N.~Bektas, E.~Noetzel, J.~Veeck, M.~F. Press, G.~Kristiansen, A.~Naami,
  A.~Hartmann, A.~Dimmler, M.~W. Beckmann, R.~Kn{\"u}chel, P.~A. Fasching, and
  E.~Dahl, ``The ubiquitin-like molecule interferon-stimulated gene 15
  ({ISG15}) is a potential prognostic marker in human breast cancer,''
  \emph{Breast Cancer Research}, vol.~10, no.~4, p. R58, Jul 2008.

\bibitem{Kariri2021}
Y.~A. Kariri, M.~Alsaleem, C.~Joseph, S.~Alsaeed, A.~Aljohani, S.~Shiino, O.~J.
  Mohammed, M.~S. Toss, A.~R. Green, and E.~A. Rakha, ``The prognostic
  significance of interferon-stimulated gene 15 ({ISG15}) in invasive breast
  cancer,'' \emph{Breast Cancer Research and Treatment}, vol. 185, no.~2, pp.
  293--305, 1 2021.

\bibitem{Rivera-Serrano2020}
E.~E. Rivera-Serrano, A.~S. Gizzi, J.~J. Arnold, T.~L. Grove, S.~C. Almo, and
  C.~E. Cameron, ``Viperin reveals its true function,'' \emph{Annual Review of
  Virology}, vol.~7, no.~1, pp. 421--446, 9 2020.

\bibitem{Tang2020}
J.~Tang, Q.~Yang, Q.~Cui, D.~Zhang, D.~Kong, X.~Liao, J.~Ren, Y.~Gong, and
  G.~Wu, ``Weighted gene correlation network analysis identifies {RSAD2},
  {HERC5}, and {CCL8} as prognostic candidates for breast cancer,''
  \emph{Journal of Cellular Physiology}, vol. 235, no.~1, pp. 394--407, 2020.

\bibitem{Chen2021}
G.~Chen, M.~Yu, J.~Cao, H.~Zhao, Y.~Dai, Y.~Cong, and G.~Qiao, ``Identification
  of candidate biomarkers correlated with poor prognosis of breast cancer based
  on bioinformatics analysis,'' \emph{Bioengineered}, vol.~12, no.~1, pp.
  5149--5161, 1 2021.

\bibitem{Sun2019}
C.~C. Sun, S.~J. Li, W.~Hu, J.~Zhang, Q.~Zhou, C.~Liu, L.~L. Li, Y.~Y.
  Songyang, F.~Zhang, Z.~L. Chen, G.~Li, Z.~Y. Bi, F.~Y. Gong, T.~Bo, Z.~P.
  Yuan, W.~D. Hu, B.~T. Zhan, Q.~Zhang, Q.~Q. He, and D.~J. Li, ``Comprehensive
  analysis of the expression and prognosis for {E2Fs} in human breast cancer,''
  \emph{Molecular Therapy}, vol.~27, no.~6, pp. 1153--1165, 6 2019.

\bibitem{Oshi2020}
M.~Oshi, H.~Takahashi, Y.~Tokumaru, Y.~Li, O.~M. Rashid, R.~Matsuyama, I.~Endo,
  and K.~Takabe, ``{G2M} cell cycle pathway score as a prognostic biomarker of
  metastasis in estrogen receptor ({ER})-positive breast cancer,''
  \emph{International Journal of Molecular Sciences}, vol.~21, no.~8, p. 2921,
  2020.

\bibitem{Wolter2017}
P.~Wolter, S.~Hanselmann, G.~Pattschull, E.~Schruf, and S.~Gaubatz, ``Central
  spindle proteins and mitotic kinesins are direct transcriptional targets of
  {MuvB}, {B-MYB} and {FOXM1} in breast cancer cell lines and are potential
  targets for therapy,'' \emph{Oncotarget}, vol.~8, no.~7, pp.
  11\,160--11\,172, 1 2017.

\bibitem{Ogino2020}
M.~Ogino, T.~Fujii, Y.~Nakazawa, T.~Higuchi, Y.~Koibuchi, T.~Oyama,
  J.~Horiguchi, and K.~Shirabe, ``Implications of topoisomerase ({TOP1} and
  {TOP2$\alpha$}) expression in patients with breast cancer,'' \emph{In Vivo},
  vol.~34, no.~6, pp. 3483--3487, 11 2020.

\bibitem{Li2021}
J.~Li, P.~Sun, T.~Huang, S.~He, L.~Li, and G.~Xue, ``Individualized
  chemotherapy guided by the expression of {ERCC1}, {RRM1}, {TUBB3}, {TYMS} and
  {TOP2A} genes versus classic chemotherapy in the treatment of breast cancer:
  A comparative effectiveness study,'' \emph{Oncology Letters}, vol.~21, no.~1,
  p.~21, 1 2021.

\bibitem{Seto2012}
B.~Seto, ``Rapamycin and {mTOR}: a serendipitous discovery and implications for
  breast cancer,'' \emph{Clinical and Translational Medicine}, vol.~1, no.~1,
  p. e29, 2012.

\bibitem{Qiu2018}
Z.~Qiu, Y.~Li, B.~Zeng, X.~Guan, and H.~Li, ``Downregulated {CDKN1C}/{p57kip2}
  drives tumorigenesis and associates with poor overall survival in breast
  cancer,'' \emph{Biochemical and Biophysical Research Communications}, vol.
  497, no.~1, pp. 187--193, 2 2018.

\bibitem{Takeshita2019}
T.~Takeshita, M.~Asaoka, E.~Katsuta, S.~J. Photiadis, S.~Narayanan, L.~Yan, and
  K.~Takabe, ``High expression of polo-like kinase 1 is associated with {TP53}
  inactivation, {DNA} repair deficiency, and worse prognosis in {ER} positive
  {Her2} negative breast cancer,'' \emph{American Journal of Translational
  Research}, vol.~11, no.~10, pp. 6507--6521, 10 2019.

\bibitem{Wang2020}
M.~Wang, M.~Dai, Y.-s. Wu, Z.~Yi, Y.~Li, and G.~Ren, ``Immunoglobulin
  superfamily member 10 is a novel prognostic biomarker for breast cancer,''
  \emph{PeerJ}, vol.~8, p. e10128, 10 2020.

\bibitem{Cai2021}
M.~Cai, H.~Li, R.~Chen, and X.~Zhou, ``{MRPL13} promotes tumor cell
  proliferation, migration and {EMT} process in breast cancer through the
  {PI3K}-{AKT}-{mTOR} pathway,'' \emph{Cancer Management and Research},
  vol.~13, pp. 2009--2024, 2 2021.

\bibitem{Brown2016}
D.~N. Brown, I.~Caffa, G.~Cirmena, D.~Piras, A.~Garuti, M.~Gallo, S.~Alberti,
  A.~Nencioni, A.~Ballestrero, and G.~Zoppoli, ``Squalene epoxidase is a bona
  fide oncogene by amplification with clinical relevance in breast cancer,''
  \emph{Scientific Reports}, vol.~6, no.~1, p. 19435, 1 2016.

\bibitem{Kim2021b}
N.~I. Kim, M.~H. Park, S.-S. Kweon, N.~Cho, and J.~S. Lee, ``Squalene epoxidase
  expression is associated with breast tumor progression and with a poor
  prognosis in breast cancer,'' \emph{Oncology Letters}, vol.~21, no.~4, p.
  259, 4 2021.

\bibitem{R}
\BIBentryALTinterwordspacing
{R Development Core Team}, \emph{R: A Language and Environment for Statistical
  Computing}, R Foundation for Statistical Computing, Vienna, Austria, 2018.
  [Online]. Available: \url{http://www.R-project.org}
\BIBentrySTDinterwordspacing

\bibitem{downs03}
T.~D. Downs, ``Spherical regression,'' \emph{Biometrika}, vol.~90, no.~3, pp.
  655--668, 2003.

\bibitem{Anderson2003}
T.~W. Anderson, \emph{An Introduction to multivariate statistical analysis},
  ser. Wiley Series in Probability and Statistics.\hskip 1em plus 0.5em minus
  0.4em\relax Wiley, 2003.

\end{thebibliography}
% <OR> manually copy in the resultant .bbl file
% set second argument of \begin to the number of references
% (used to reserve space for the reference number labels box)
% \begin{thebibliography}{1}

% \bibitem{IEEEhowto:kopka}
% H.~Kopka and P.~W. Daly, \emph{A Guide to \LaTeX}, 3rd~ed.\hskip 1em plus
%   0.5em minus 0.4em\relax Harlow, England: Addison-Wesley, 1999.
\setcounter{equation}{0}
\setcounter{figure}{0}
\setcounter{table}{0}
\setcounter{page}{1}
\setcounter{section}{0}

% \end{thebibliography}
\renewcommand{\thesection}{S\arabic{section}}
\renewcommand{\thesubsection}{\thesection.\arabic{subsection}}
\renewcommand{\theequation}{S\arabic{equation}}
\renewcommand{\thefigure}{S\arabic{figure}}
\renewcommand{\thetable}{S\arabic{table}}
\renewcommand{\thetheorem}{S\arabic{theorem}}

% biography section
% 
% If you have an EPS/PDF photo (graphicx package needed) extra braces are
% needed around the contents of the optional argument to biography to prevent
% the LaTeX parser from getting confused when it sees the complicated
% \includegraphics command within an optional argument. (You could create
% your own custom macro containing the \includegraphics command to make things
% simpler here.)
%\begin{IEEEbiography}[{\includegraphics[width=1in,height=1.25in,clip,keepaspectratio]{mshell}}]{Michael Shell}
% or if you just want to reserve a space for a photo:
%\pagebreak
%\begin{center}
%\textbf{\large Supplement to \\
%``Factor Analysis for Data on a Sphere" \\Fan Dai, Karin Dorman, Somak Dutta and Ranjan Maitra}
%\end{center}
%%%%%%%%%% Merge with supplemental materials %%%%%%%%%%
%%%%%%%%%% Prefix a "S" to all equations, figures, tables and reset the counter %%%%%%%%%%

%%%%%%%%%% Prefix a "S" to all equations, figures, tables and reset the counter %%%%%%%%%%

\section{Supplement to Section~\ref{sec:meth-ch3}}
\label{sec:supp-ch3}

\subsection{Derivation of the projected normal density function}
\label{sec:supp-pndensity-ch3}
\begin{theorem}\label{theorem:pndensity}
Suppose $\bY\sim\normal_p(\mu,\Sigma).$ with mean vector $\mu$ and covariance matrix $\Sigma$. Then, $\bX = \bY/\|\bY\|$ follows a projected normal distribution $\sPN_{p}(\mu,\Sigma)$ and the density function of $\bX$ at $\mathbf{x} = \mathbf{y}/\|\mathbf{y}\|$ is given by,
\begin{equation}\label{supp-eqn:pdfPN}
 f(\mathbf{x};\mu,\Sigma) =
 (2\pi)^{-\frac{p}{2}}|\Sigma|^{-\frac{1}{2}}\exp{\left\{-\displaystyle{\frac{\mu^\top\Sigma^{-1}\mu}{2}} + \displaystyle{\frac{m^2}{2v}}\right\}}
 {\int_{0}^{\infty}{R^{p-1}\exp{\left\{-\displaystyle{\frac{1}{2v}}\Big(R-m\Big)^2\right\}}}\, dR},
\end{equation}
where $R = \|\mathbf{y}\|$, $m = {\mathbf{x}}^\top\Sigma^{-1}\mu/{\mathbf{x}}^\top\Sigma^{-1}{\mathbf{x}}$, $v = 1/{\mathbf{x}}^\top\Sigma^{-1}{\mathbf{x}}$.
% ${\bm\theta} = (\theta_1,\theta_2,\ldots,\theta_{p-1})$ is the representation of $\mathbf{x}$ in the spherical coordinate system and $g(\bm{\theta}) = \prod_{j=1}^{p-2}{\sin^{(p-1-j)}{\theta_{j}}}.$
\end{theorem}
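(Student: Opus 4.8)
The plan is to derive the density of $\bX = \bY/\|\bY\|$ by the standard change-of-variables from Cartesian to polar (spherical) coordinates, integrating out the radial coordinate. First I would write $\bY = R\bX$ where $R = \|\bY\| \in (0,\infty)$ and $\bX \in \sS_{p-1}$, and recall that the Lebesgue measure on $\bbR^p$ decomposes as $d\by = R^{p-1}\, dR\, d\omega(\bx)$, where $d\omega$ is the surface measure on the unit sphere. Substituting $\by = R\bx$ into the $\normal_p(\mu,\Sigma)$ density $(2\pi)^{-p/2}|\Sigma|^{-1/2}\exp\{-\tfrac12(\by-\mu)^\top\Sigma^{-1}(\by-\mu)\}$ and expanding the quadratic form gives
\[
(\by-\mu)^\top\Sigma^{-1}(\by-\mu) = R^2\,\bx^\top\Sigma^{-1}\bx - 2R\,\bx^\top\Sigma^{-1}\mu + \mu^\top\Sigma^{-1}\mu.
\]
So the joint density of $(R,\bX)$ (with respect to $dR\,d\omega$) is $(2\pi)^{-p/2}|\Sigma|^{-1/2} R^{p-1}\exp\{-\tfrac12(R^2\,\bx^\top\Sigma^{-1}\bx - 2R\,\bx^\top\Sigma^{-1}\mu + \mu^\top\Sigma^{-1}\mu)\}$, and the marginal density of $\bX$ is obtained by integrating this over $R \in (0,\infty)$.

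Next I would complete the square in $R$ inside the exponent. Writing $a = \bx^\top\Sigma^{-1}\bx$ and $b = \bx^\top\Sigma^{-1}\mu$, we have $a R^2 - 2bR = a(R - b/a)^2 - b^2/a$. Setting $v = 1/a = 1/(\bx^\top\Sigma^{-1}\bx)$ and $m = b/a = (\bx^\top\Sigma^{-1}\mu)/(\bx^\top\Sigma^{-1}\bx)$, the exponent becomes $-\tfrac{1}{2v}(R-m)^2 + \tfrac{m^2}{2v} - \tfrac12\mu^\top\Sigma^{-1}\mu$. Pulling the $R$-free factors out of the integral then yields exactly
\[
f(\bx;\mu,\Sigma) = (2\pi)^{-p/2}|\Sigma|^{-1/2}\exp\left\{-\frac{\mu^\top\Sigma^{-1}\mu}{2} + \frac{m^2}{2v}\right\}\int_0^\infty R^{p-1}\exp\left\{-\frac{1}{2v}(R-m)^2\right\}dR,
\]
which is the claimed formula \eqref{supp-eqn:pdfPN}.

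For rigor I would briefly justify the three technical points: (i) the polar-coordinate Jacobian $R^{p-1}$, which is standard but worth citing; (ii) the interchange of the order of integration / the fact that the resulting $R$-integral converges — it does because the integrand decays like a Gaussian tail times a polynomial — so Tonelli's theorem applies to the nonnegative integrand and the marginal is well-defined; and (iii) that the map $\by \mapsto (\|\by\|, \by/\|\by\|)$ is a diffeomorphism off the measure-zero set $\{\bzero\}$, so no mass is lost. None of these is a genuine obstacle; the only mild subtlety is making the polar decomposition precise for a general (non-spherical) Gaussian, but since the substitution $\by = R\bx$ is purely a coordinate change on $\bbR^p$ and is entirely independent of $\Sigma$, the Jacobian is the usual $R^{p-1}$ regardless. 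Thus I expect the proof to be essentially a direct computation, and the "hard part" is merely bookkeeping: carefully completing the square and matching the constants $m$ and $v$ to the stated definitions.
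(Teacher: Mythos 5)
Your proposal is correct and follows essentially the same route as the paper: change to polar coordinates, pick up the radial Jacobian $R^{p-1}$, integrate out $R$, and complete the square in $R$ to identify $m$ and $v$. The only cosmetic difference is that the paper writes the radial factor as the ratio of two explicit spherical-angle Jacobians, $\mathbf{J_y}/\mathbf{J_x} = R^{p-1}$, whereas you invoke the polar decomposition of Lebesgue measure directly; these are the same computation.
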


\begin{proof}
Let $\mathbf{y} = (y_1,y_2,\ldots,y_p)$ with $R = \|\mathbf{y}\|$, and $\mathbf{x} = (x_1,x_2,\ldots,x_p) = \mathbf{y}/\|\mathbf{y}\|$ so that $\|\mathbf{x}\|=1$. Define 
the representation of $\mathbf{x}$ in the spherical coordinate system as ${\bm\theta} = (\theta_1,\theta_2,\ldots,\theta_{p-1})$ so that
\begin{equation}\label{supp-eqn:coordinate-x}
   x_1 = \cos{\theta_1}\text{; }\quad  x_j = \prod_{k=1}^{j-1}{\sin{\theta_{k}}}\cos{\theta_j}, \text{ for }j = 2,3,\ldots,p-1\text{; }\quad x_p = \prod_{k=1}^{p-1}{\sin{\theta_{k}}},
\end{equation}
with the Jacobian of transformation as $\mathbf{J_x} = \prod_{j=1}^{p-2}{\sin^{(p-1-j)}{\theta_{j}}}.$ Then, we denote the transformation from $R,{\bm\theta}$ to $\mathbf{y}$ as $\mathbf{y} = \mathbf{y}(R,{\bm\theta})$ which is given by,
\begin{equation}\label{supp-eqn:coordinate-y}
   y_1 = R\cos{\theta_1}\text{; }\quad  y_j = R\prod_{k=1}^{j-1}{\sin{\theta_{k}}}\cos{\theta_j}, \text{ for }j = 2,3,\ldots,p-1\text{; }\quad y_p = R\prod_{k=1}^{p-1}{\sin{\theta_{k}}},
\end{equation}
with the Jacobian of transformation as $\mathbf{J_y} =  R^{p-1}\prod_{j=1}^{p-2}{\sin^{(p-1-j)}{\theta_{j}}}.$
Now, by the change of variables, the density function of ${\bm\theta}$ is given by,
\begin{equation}\label{supp-eqn:pdfPN-theta}
\begin{split}
 f({\bm\theta};\mu,\Sigma) = &
 \int_{0}^{\infty}{\mathbf{J_y}(2\pi)^{-\frac{p}{2}}|\Sigma|^{-\frac{1}{2}}\exp{\Big\{-\displaystyle{\frac{1}{2}}(\mathbf{y}(R,{\bm\theta})-\mu)^\top\Sigma^{-1}(\mathbf{y}(R,{\bm\theta})-\mu)\Big\}} }\, dR.
\end{split}
\end{equation}
Then, by applying the change of variables again, the density function of $\mathbf{x}$ is given by,
\begin{equation}\label{supp-eqn:pdfPN-x}
\begin{split}
 f(\mathbf{x};\mu,\Sigma) 
 = &
 \int_{0}^{\infty}{\mathbf{J_y}/\mathbf{J_x}(2\pi)^{-\frac{p}{2}}|\Sigma|^{-\frac{1}{2}}\exp{\Big\{-\displaystyle{\frac{1}{2}}(R\mathbf{x}-\mu)^\top\Sigma^{-1}(R\mathbf{x}-\mu)\Big\}} }\, dR\\
  =&
 (2\pi)^{-\frac{p}{2}}|\Sigma|^{-\frac{1}{2}}\exp{\left\{-\displaystyle{\frac{\mu^\top\Sigma^{-1}\mu}{2}} + \displaystyle{\frac{m^2}{2v}}\right\}}
 {\int_{0}^{\infty}{R^{p-1}\exp{\Big\{-\displaystyle{\frac{1}{2v}}\Big(R-m\Big)^2\Big\}}}\, dR},
\end{split}
\end{equation}
where $m = {\mathbf{x}}^\top\Sigma^{-1}\mu/{\mathbf{x}}^\top\Sigma^{-1}{\mathbf{x}}$, $v = 1/{\mathbf{x}}^\top\Sigma^{-1}{\mathbf{x}}$.

\end{proof}

\subsection{Numerical computation of the MLE of \texorpdfstring{$\mu$}{\muold}}
\label{sec:supp-proposition1&2}
\begin{theorem}\label{theorem:maxmu}
Let $\bf B$ be a $p\times p$ positive definite symmetric matrix, ${\bf b} \in {\bbR}^p$. Then, the unique global minimizer ${\bf x}_0$ that
\begin{equation}\label{eqn:maxmu-minx}
    \text{minimize }{\Tr{\bf B}^{-1}({\bf b-x})({\bf b-x})^\top} \text{ subject to } {\bf x}^\top{\bf x}=1
\end{equation}
is given by
${\bf x}_0 = ({\bf I}+\lambda_0{\bf B})^{-1}{\bf b}$, where $\lambda_0$ lies between
\begin{itemize}
    \item[(1)] $[0,{\bf b}^\top{\bf B}^{-1}{\bf b}/2]$, if ${\bf b}^\top{\bf b}-1>0$,
    \item[(2)] $\lambda = 0$, if ${\bf b}^\top{\bf b}-1=0$,
    \item[(3)] $[(|{\bf u}_m^\top{\bf b}|/2-1)/e_m,0]$, if ${\bf b}^\top{\bf b}-1<0$.
\end{itemize}
Where $e_m$, ${\bf u}_m$ are the largest eigenvalue and corresponding eigenvector of $\bf B$.
\end{theorem}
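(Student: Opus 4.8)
The plan is to solve the constrained optimization problem in~\eqref{eqn:maxmu-minx} via Lagrange multipliers, then use elementary monotonicity arguments to localize the multiplier $\lambda_0$ into the three stated intervals. First I would observe that minimizing $\Tr{\bf B}^{-1}({\bf b}-{\bf x})({\bf b}-{\bf x})^\top = ({\bf b}-{\bf x})^\top{\bf B}^{-1}({\bf b}-{\bf x})$ over the compact sphere ${\bf x}^\top{\bf x}=1$ has a solution, and that since the objective is strictly convex in ${\bf x}$ while the constraint set is the boundary of a convex body, any critical point satisfies the stationarity condition ${\bf B}^{-1}({\bf x}-{\bf b}) + \lambda{\bf x} = {\bf 0}$ for some Lagrange multiplier $\lambda$. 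Solving this linear system gives ${\bf x} = ({\bf I}+\lambda{\bf B})^{-1}{\bf b}$, which is well-defined as long as $-1/\lambda$ is not an eigenvalue of ${\bf B}$, i.e. $\lambda > -1/e_m$ where $e_m$ is the largest eigenvalue of ${\bf B}$. Substituting back into the constraint ${\bf x}^\top{\bf x}=1$ yields the scalar equation $\varphi(\lambda) := {\bf b}^\top({\bf I}+\lambda{\bf B})^{-2}{\bf b} = 1$, and the candidate minimizers correspond to its roots.

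Next I would analyze $\varphi$ on the interval $(-1/e_m,\infty)$ using the spectral decomposition ${\bf B} = \sum_i e_i{\bf u}_i{\bf u}_i^\top$, so that $\varphi(\lambda) = \sum_i ({\bf u}_i^\top{\bf b})^2/(1+\lambda e_i)^2$. Each summand is a strictly decreasing, strictly convex positive function of $\lambda$ on $(-1/e_m,\infty)$ (when the corresponding coefficient is nonzero), so $\varphi$ is itself strictly decreasing there, with $\varphi(\lambda)\to 0$ as $\lambda\to\infty$ and $\varphi(\lambda)\to\infty$ as $\lambda\downarrow -1/e_m$ (assuming ${\bf u}_m^\top{\bf b}\neq 0$; the degenerate case needs a small separate remark). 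Hence $\varphi(\lambda)=1$ has a unique root $\lambda_0$ in $(-1/e_m,\infty)$, and because the objective value at this root is the global minimum (the sphere being connected and the only critical points being these roots — and in the degenerate coefficient case one checks the minimizer directly), this $\lambda_0$ gives the desired ${\bf x}_0$. The sign of $\lambda_0$ is then read off from $\varphi(0) = {\bf b}^\top{\bf b}$: if ${\bf b}^\top{\bf b} > 1$ then $\varphi(0) > 1$ forces $\lambda_0 > 0$; if ${\bf b}^\top{\bf b} = 1$ then $\lambda_0 = 0$; if ${\bf b}^\top{\bf b} < 1$ then $\lambda_0 < 0$.

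The remaining work is to pin down the explicit endpoints. For case (1), the upper bound $\lambda_0 \le {\bf b}^\top{\bf B}^{-1}{\bf b}/2$ should follow by bounding $\varphi$ below: for $\lambda>0$, $(1+\lambda e_i)^{-2} \ge 1 - 2\lambda e_i$ (or a similar linearization), giving $\varphi(\lambda) \ge {\bf b}^\top{\bf b} - 2\lambda\,{\bf b}^\top{\bf B}{\bf b}$ — wait, I would instead bound in terms of ${\bf B}^{-1}$ via the substitution appropriate to the inequality; the cleanest route is to note $({\bf I}+\lambda{\bf B})^{-2}\succeq {\bf I} - 2\lambda{\bf B}$ in the Loewner order for $\lambda\ge 0$ only when eigenvalues are controlled, so more carefully I would use $(1+t)^{-2}\ge 1-2t$ valid for all $t\ge 0$ to get $\varphi(\lambda)\ge {\bf b}^\top{\bf b}-2\lambda\,{\bf b}^\top{\bf B}{\bf b}$, which when set $\ge 1$ gives $\lambda_0\le({\bf b}^\top{\bf b}-1)/(2{\bf b}^\top{\bf B}{\bf b})$; reconciling this with the stated bound ${\bf b}^\top{\bf B}^{-1}{\bf b}/2$ will require the correct choice of linearizing inequality and is exactly the kind of routine-but-fiddly estimate I would expect to be the main obstacle. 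For case (3), the lower bound $\lambda_0 \ge (|{\bf u}_m^\top{\bf b}|/2 - 1)/e_m$ should come from isolating the $i=m$ term: $\varphi(\lambda)\ge ({\bf u}_m^\top{\bf b})^2/(1+\lambda e_m)^2$, and requiring this to be at most $1$ (since $\varphi(\lambda_0)=1$) forces $|1+\lambda_0 e_m|\ge |{\bf u}_m^\top{\bf b}|$, hence $1+\lambda_0 e_m \le -|{\bf u}_m^\top{\bf b}|$ or $\ge |{\bf u}_m^\top{\bf b}|$; since $\lambda_0<0$ and we need $\lambda_0>-1/e_m$, a short sign analysis picks out the branch and yields the claimed bound (possibly with a factor cleanup). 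Throughout, the genuinely delicate points are (a) handling the degenerate case ${\bf u}_m^\top{\bf b}=0$, where the behavior near $-1/e_m$ changes and one must confirm the minimizer is still of the stated form, and (b) getting the constants in the interval endpoints to match precisely, which I anticipate being the bulk of the remaining effort.
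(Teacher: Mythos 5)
Your overall route---Lagrange stationarity giving ${\bf x}=({\bf I}+\lambda{\bf B})^{-1}{\bf b}$, the scalar equation $\varphi(\lambda):={\bf b}^\top({\bf I}+\lambda{\bf B})^{-2}{\bf b}=1$, strict monotonicity of $\varphi$ on $(-1/e_m,\infty)$ via the spectral decomposition, and reading the sign of $\lambda_0$ off $\varphi(0)={\bf b}^\top{\bf b}$---is exactly the paper's argument, and your case (3) treatment is sound: isolating the $m$-th term in fact gives the sharper bound $\lambda_0\ge(|{\bf u}_m^\top{\bf b}|-1)/e_m$, which implies the stated endpoint.

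The genuine gap is the upper endpoint in case (1). Your proposed linearization $(1+t)^{-2}\ge 1-2t$ is a \emph{lower} bound on $\varphi$, and a lower bound on a decreasing function can only push the root upward: from $1=\varphi(\lambda_0)\ge {\bf b}^\top{\bf b}-2\lambda_0\,{\bf b}^\top{\bf B}{\bf b}$ you get $\lambda_0\ge({\bf b}^\top{\bf b}-1)/(2\,{\bf b}^\top{\bf B}{\bf b})$, not the ``$\le$'' you wrote, and in no case the claimed cap ${\bf b}^\top{\bf B}^{-1}{\bf b}/2$. What is needed is an \emph{upper} bound on $\varphi$ for $\lambda>0$; the paper's elementary choice is $(1+\lambda e_j)^2\ge 2\lambda e_j$, whence, writing $c_j={\bf u}_j^\top{\bf b}$, one has $\varphi(\lambda)\le\frac{1}{2\lambda}\sum_j c_j^2/e_j=\frac{{\bf b}^\top{\bf B}^{-1}{\bf b}}{2\lambda}$, so $\varphi(\lambda)<1$ for every $\lambda>{\bf b}^\top{\bf B}^{-1}{\bf b}/2$ and the root must lie in $[0,{\bf b}^\top{\bf B}^{-1}{\bf b}/2]$. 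Without this (or an equivalent) upper estimate your proof does not deliver the interval asserted in case (1), which is exactly the localization the paper later needs for its bisection step. One further small point: your globality remark that ``the only critical points are these roots'' overlooks stationary points whose multipliers satisfy $\lambda<-1/e_m$; like the paper, you should invoke the second-order condition ${\bf B}^{-1}+\lambda{\bf I}\succ0$, i.e.\ $\lambda>-1/e_m$, to rule them out and secure uniqueness of the global minimizer.
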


\begin{proof}
The Lagrangian associated with \eqref{eqn:maxmu-minx} is given by,
\begin{equation}\label{eqn:lagrange}
    L({\bf x},\lambda) = \Tr{\bf B}^{-1}({\bf b-x})({\bf b-x})^\top + \lambda({\bf x}^\top{\bf x}-1).
\end{equation}
with a Lagrange multiplier $\lambda$. Setting $\nabla L = 0$ yields ${\bf x}_0 = ({\bf I}+\lambda{\bf B})^{-1}{\bf b}\text{ and } {\bf b}^\top(\mathbf{I}+\lambda{\bf B})^{-2}{\bf b}=1$.

Let $f(\lambda) = {\bf b}^\top(\mathbf{I}+\lambda{\bf B})^{-2}{\bf b}-1$, ${\bf B} = \mathbf{UMU}^\top$, ${\bf M} = \mathrm{diag}\{e_j\}, j = 1,2,\ldots,p$ and ${\bf U^\top b} =  (c_1,c_2,\ldots,c_p)^\top = \bf{c}$, then, 
\[f(\lambda) = {\bf c}^\top(\mathbf{I}+\lambda{\bf M})^{-2}{\bf c}-1 = \sum_{j=1}^{p}{\displaystyle{\frac{c_j^2}{(1+\lambda e_j)^2}}}-1\]
and
\[f'(\lambda) = -2\sum_{j=1}^{p}{\displaystyle{\frac{c_j^2e_j}{(1+\lambda e_j)^3}}}\text{ , }f''(\lambda) = 6\sum_{j=1}^{p}{\displaystyle{\frac{c_j^2e_j^2}{(1+\lambda e_j)^4}}}.\]\par
For $\lambda\geq0$, $f(\lambda)$ is a continuous function with $f'(\lambda)<0$ and
\[f(\lambda) =  \sum_{j=1}^{p}{\displaystyle{\frac{c_j^2}{(1+\lambda e_j)^2}}}-1\leq\sum_{j=1}^{p}{\displaystyle{\frac{c_j^2}{2\lambda e_j}}}-1 = \displaystyle{\frac{1}{2\lambda}}{\bf b}^\top{\bf B}^{-1}{\bf b}-1.\]\par
So,
\[\lambda>\displaystyle{\frac{\bf{b}^\top{\bf B}^{-1}\bf{b}}{2}}\Rightarrow f(\lambda)<0\].\par
If $f(0) = {\bf b}^\top{\bf b}-1>0$, we can then find the root of $f(\lambda)=0$ in $[0,{\bf b}^\top{\bf B}^{-1}{\bf b}/2].$\par
\bigskip\medskip

If ${\bf b}^\top{\bf b}-1<0$, no positive root exists. Then let $\lambda<0$, we have
\[f(\lambda) =  \sum_{j=1}^{p}{\displaystyle{\frac{c_j^2}{(1+\lambda e_j)^2}}}-1\geq \displaystyle{\frac{c^2_j}{(1+\lambda e_j)^2}}-1.\]\par
For each $j = 1,2,\ldots,p$. So,
\[\displaystyle{\frac{-\sqrt{c^2_j}-1}{e_j}}<\lambda<\displaystyle{\frac{\sqrt{c^2_j}-1}{e_j}}\Rightarrow f(\lambda)>0.\]

According to the Hessian matrix ${\bf H}_{{\bf x}} = {\bf B}^{-1} + \lambda{\bf I}$, the unique global minimizer is obtained when $\lambda>-\displaystyle{\frac{1}{e_{m}}}$, where $e_{m} = \max\{e_1,e_2,\ldots,e_p\}$. Let ${\bf u}_m$ denote the corresponding eigenvector and $c_m  = {\bf u}_m^\top{\bf b}$ in $\bf c$, then $-\displaystyle{\frac{1}{e_j}}\leq-\displaystyle{\frac{1}{e_{m}}}<\displaystyle{\frac{0.5\sqrt{c^2_m}-1}{e_{m}}}$, for $j = 1,2,\cdots,p$, and when $\lambda\geq\displaystyle{\frac{0.5\sqrt{c^2_m}-1}{e_{m}}}$, $f(\lambda)$ is continuous. Hence, we can find an negative root of $f(\lambda)=0$ in $\left[\displaystyle{\frac{0.5\sqrt{c^2_m}-1}{e_{m}}},0\right]$.
\end{proof}

\subsection{Proof of Proposition \ref{proposition:profileout-ch3}} \label{sec:supp-proposition3-ch3}
From Section \ref{sec:meth-profile-aecm}, for fixed $\mu_{t+1},$ $Q_P$ is equivalent to 
\begin{equation} \label{eq:exp-llk2-eq}
        Q_{P}(\mu_{t+1},\Lambda,\Psi;\mu_{t+1},\Lambda_t,\Psi_t) = c -\frac n2\{\log\det (\Lambda\Lambda^\top+\Psi) +\Tr (\Lambda\Lambda^\top+\Psi)^{-1}\widetilde\bS_t\},
\end{equation}
where $\widetilde\bS_t = \frac1n\sum_{i=1}^n\bbE_{R_i|\bX_i} \left\{(R_i\bX_i-\bmu_{t+1})(R_i\bX_i-\bmu_{t+1})^\top\right\}$.

From \eqref{eq:exp-llk2-eq}, the ML estimators of $\Lambda$ and $\Psi$ are obtained by solving the score equations
\begin{equation} \label{eq:system_eq-ch3}
\begin{split}
\begin{cases}
    & \Lambda(\mathbf{I}_q + \Lambda^\top\Psi^{-1}\Lambda) = 
    \widetilde\bS_t\Psi^{-1}\Lambda\\
    & \Psi = \mathrm{diag}(\widetilde\bS_t-\Lambda\Lambda^\top)
    \end{cases}
    \end{split}
\end{equation}
From $\Lambda(\mathbf{I}_q + \Lambda^\top\Psi^{-1}\Lambda) = 
   \widetilde\bS_t\Psi^{-1}\Lambda$, we have
    \begin{equation}\label{eq:profileout_Lambda-ch3}
       \Psi^{-1/2}\Lambda(\mathbf{I}_q + (\Psi^{-1/2}\Lambda)^\top\Psi^{-1/2}\Lambda) = \Psi^{-1/2}\widetilde\bS_t\Psi^{-1/2}\Psi^{-1/2}\Lambda.
    \end{equation}
Suppose that $\Psi^{-1/2}\widetilde\bS_t\Psi^{-1/2} = \mathbf{V}\mathbf{D}\mathbf{V}^\top$ and that the diagonal elements in $\mathbf{D}$ are in decreasing order with $\theta_1\geq\theta_2\geq,\cdots,\geq\theta_p$. Let $\mathbf{D} = \begin{bmatrix} 
\mathbf{D}_q & 0 \\
0 & \mathbf{D}_m 
\end{bmatrix}$ with $m=p-q$ and $\mathbf{D}_q$ containing the largest $q$ eigenvalues  $\theta_1\geq\theta_2\geq,\cdots,\geq\theta_q$. The corresponding $q$ eigenvectors form columns of the matrix $\mathbf{V}_q$ so that $\mathbf{V} = [\mathbf{V}_q,\mathbf{V}_m]$. Then, if $\mathbf{D}_q > \mathbf{I}_q$, \eqref{eq:profileout_Lambda-ch3} shows that
\begin{equation}
     \Lambda =  \Psi^{1/2}\mathbf{V}_q(\mathbf{D}_q-\mathbf{I}_q)^{1/2}.
\end{equation}
Hence, conditional on $\Psi$, $\Lambda$ is maximized at $\hat{\Lambda} = \Psi^{1/2}\mathbf{V}_q\bm{\Delta}$, where $\bm{\Delta}$ is a diagonal matrix with elements $\mathrm{max}(\theta_i-1,0)^{1/2}, i = 1,\cdots,q$.

From the construction of $\mathbf{V}_q$ and $\mathbf{V}_m$, we have $\mathbf{V}^\top_q\mathbf{V}_q = \mathbf{I}_q\text{, }
\mathbf{V}^\top_m\mathbf{V}_m = \mathbf{I}_m\text{, }
\mathbf{V}_q\mathbf{V}^\top_q + \mathbf{V}_m\mathbf{V}^\top_m = \mathbf{I}_p\text{, }\mathbf{V}^\top_q\mathbf{V}_m = \boldsymbol{0}$ and hence, $(\mathbf{V}_q\mathbf{D}_q\mathbf{V}^\top_q +  \mathbf{V}_m\mathbf{V}^\top_m)(\mathbf{V}_q\mathbf{D}^{-1}_q\mathbf{V}^\top_q +  \mathbf{V}_m\mathbf{V}^\top_m) = \mathbf{I}_p$.

Let $\mathbf{A} = \mathbf{V}_q\bm{\Delta}^2\mathbf{V}^\top_q$. Then $\mathbf{A}\mathbf{A} = \mathbf{V}_q\bm{\Delta}^4\mathbf{V}^\top_q$ and
\begin{equation}\label{eq:matrix_results1-ch3}
    \begin{split}
       |\mathbf{A}+\mathbf{I}_p| 
       = & |(\mathbf{A}+\mathbf{I}_p)\mathbf{A}|/|\mathbf{A}| \\
       = & |\mathbf{V}_q(\bm{\Delta}^4+\bm{\Delta}^2)\mathbf{V}^\top_q|/|\mathbf{V}_q\bm{\Delta}^2\mathbf{V}^\top_q| \\
       = & |\bm{\Delta}^2+\mathbf{I}_q| = \prod_{j=1}^{q}{\theta_j}
       \end{split}
\end{equation}
and
\begin{equation}\label{eq:matrix_results2-ch3}
      \begin{split}
      (\mathbf{A}+\mathbf{I}_p)^{-1} 
      = & (\mathbf{V}_q\bm{\Delta}^2\mathbf{V}^\top_q + \mathbf{V}_q\mathbf{V}^\top_q + \mathbf{V}_m\mathbf{V}^\top_m)^{-1}\\
      = & (\mathbf{V}_q(\bm{\Delta}^2+\mathbf{I}_q)\mathbf{V}^\top_q +  \mathbf{V}_m\mathbf{V}^\top_m)^{-1}\\
      = & (\mathbf{V}_q\mathbf{D}_q\mathbf{V}^\top_q +  \mathbf{V}_m\mathbf{V}^\top_m)^{-1}\\
      = & \mathbf{V}_q\mathbf{D}^{-1}_q\mathbf{V}^\top_q +  \mathbf{V}_m\mathbf{V}^\top_{m}.
    \end{split}
\end{equation}
Based on \eqref{eq:matrix_results1-ch3} and \eqref{eq:matrix_results2-ch3} and \eqref{eqn:profilelikelihood-ch3}, the profile log-likelihood is given by,
\begin{equation}\label{eq:proof_proposition1-ch3}
    \begin{split}
        Q_{p}(\Psi) 
        = c' & - \frac{n}{2}\log{|\hat{\Lambda}\hat{\Lambda}^\top + \Psi|} -\frac{n}{2}\Tr(\hat{\Lambda}\hat{\Lambda}^\top + \Psi)^{-1}\widetilde\bS_t\\
        = c' & - \frac{n}{2}\Big\{\log|\Psi^{1/2}(\mathbf{V}_q\bm{\Delta}^2\mathbf{V}^\top_q + \mathbf{I}_p)\Psi^{1/2}| + \Tr(\Psi^{1/2}(\mathbf{V}_q\bm{\Delta}^2\mathbf{V}^\top_q + \mathbf{I}_p)\Psi^{1/2})^{-1}\widetilde\bS_t\Big\}\\
        = c' & - \frac{n}{2}\Big\{ \log\det\Psi + \log{|\mathbf{V}_q\bm{\Delta}^2\mathbf{V}^\top_q + \mathbf{I}_p|} + \Tr(\mathbf{V}_q\mathbf{D}^{-1}_q\mathbf{V}^\top_q +  \mathbf{V}_m\mathbf{V}^\top_m)\Psi^{-1/2}\widetilde\bS_t\Psi^{-1/2}
        \Big\}\\
        = c' & - \frac{n}{2}\Big\{  \log\det\Psi + \sum_{j=1}^{q}{\log{\theta_j}} + \Tr\mathbf{D}^{-1}_q\mathbf{V}^\top_q\mathbf{V}\mathbf{D}\mathbf{V}^\top\mathbf{V}_q + \Tr\mathbf{V}^\top_m\mathbf{V}\mathbf{D}\mathbf{V}^\top\mathbf{V}_m\Big\}\\
        = c' & - \frac{n}{2}\Big\{  \log\det\Psi + \sum_{j=1}^{q}{\log{\theta_j}} + \Tr\mathbf{D}^{-1}_q\mathbf{D}_q + \Tr\mathbf{D}_m\Big\}\\
        = c' &- \frac{n}{2}\Big\{ \log\det\Psi + \sum_{j=1}^{q}{\log{\theta_j}} + q + \Tr\Psi^{-1}\widetilde\bS_t- \sum_{j=1}^{q}{\theta_j}\Big\},
    \end{split}
\end{equation}
where $c'$ is a constant that does not depend on $\Psi$.

\subsection{Recursive formulas}
\label{sec:supp-int}
Let $\mI_k = \int_{0}^{\infty}{R^{k}\exp\left\{-\displaystyle{\frac{1}{2v}}(R-m)^2\right\}\, dR}$ with $m\in \mathbb R$, $v>0$, $k\in\mathbb N$, then,
\begin{equation*}
    \begin{split}
 \mI_k &= \left.\frac{R^{k+1}}{k+1}\exp\left\{-\frac{1}{2v}(R-m)^2\right\}\right\vert^{\infty}_{0} -  \int_{0}^{\infty}{\frac{R^{k+1}}{k+1}\exp\left\{-\frac{1}{2v}(R-m)^2\right\}\left(-\frac{R}{v}+\frac{m}{v}\right)}\, dR\\
 &= \frac{1}{v(k+1)}\mI_{k+2} - \frac{m}{v(k+1)}\mI_{k+1}.
    \end{split}
\end{equation*}\par
That is, $\mI_{k+2} = m\mI_{k+1} + (k+1)v\mI_k$. Then, $\displaystyle{\frac{\mI_{k+2}}{\mI_{k+1}}} = m + (k+1)v
\displaystyle{\frac{\mI_{k}}{\mI_{k+1}}}
\text{ , }
\displaystyle{\frac{\mI_{k+2}}{\mI_k}} = (k+1)v + m\displaystyle{\frac{\mI_{k+1}}{\mI_k}}$.

\subsection{FADS-D: E- and CM-step computation}
\label{sec:supp-fads-d}
The conditional
expectations of $\bZ_i$, $\bZ_i\bZ_i^\top$ and $R_i, \bZ_i$ given $\bX$ are
\[\bbE_{\bZ_i|\bX_i}(\bZ_i) = \bbE_{R_i|\bX_i}\{\bbE_{\bZ_i|R_i,\bX_i}(\bZ_i)\},\quad 
\bbE_{\bZ_i|\bX_i}(\bZ_i\bZ_i^\top) = \bbE_{R_i|\bX_i}\{\bbE_{\bZ_i|R_i,\bX_i}(\bZ_i\bZ_i^\top)\},\]
\[\bbE_{\bZ_i,R_i|\bX_i}(R_i\bZ_i^\top) = \bbE_{R_i|\bX_i}\{R_i\bbE_{\bZ_i|R_i,\bX_i}(\bZ_i^\top)\}.
\]
The joint density function of $R_i,\bX_i,\bZ_i$ is given by,
\begin{equation} \label{eq:z-llk}
\begin{split}
        f_{r,x,z}(R_i,\bX_i,\bZ_i) 
        = & f_{y|z}({\bf Y}_i|\bZ_i)|{\bf J}_i|\cdot f_z(\bZ_i)\\
        \propto & \exp\{-\frac{1}{2}(R_i\bX_i-\mu-\Lambda\bZ_i)^\top\Psi^{-1}(R_i\bX_i-\mu-\Lambda\bZ_i)-\frac{1}{2}\bZ_i^\top\bZ_i\}.
\end{split}
\end{equation}
Then, by the conditional normal distribution, we have
\[\bbE_{\bZ_i|\bX_i}(\bZ_i) = \Lambda^\top\Sigma^{-1}\{\bbE_{R_i|\bX_i}(R_i)\bX_i-\mu\},
\]
\[\bbE_{\bZ_i,R_i|\bX_i}(R_i\bZ_i^\top) = 
\{\bbE_{R_i|\bX_i}(R_i^2)\bX_i^\top-\bbE_{R_i|\bX_i}(R_i)\mu^\top\}\Sigma^{-1}\Lambda,
\]
\[\bbE_{\bZ_i|\bX_i}(\bZ_i\bZ_i^\top) = 
({\bf I}+\Lambda^\top\Psi^{-1}\Lambda)^{-1} + \Lambda^\top\Sigma^{-1}\{\bbE_{R_i|\bX_i}(R_i^2)\bX_i\bX_i^\top
-\bbE_{R_i|\bX_i}(R_i)(\mu\bX_i^\top+\bX_i\mu^\top)+\mu\mu^\top\}\Sigma^{-1}\Lambda.
\]
Then, conditional on $\hat{\mu}_{t+1}$ and $\Psi_t$, the CM-step for $\Lambda_{t+1}$ is given by,
\begin{equation}% \label{eq:dem-lambda}
\begin{split}
\hat{\Lambda}_{t+1} = 
& \left\{\sum^{n}_{i=1}{\bX_i\bbE_{\bZ_i,R_i|\bX_i}(R_i\bZ_i^\top)}-\hat{\mu}_{t+1}\sum^{n}_{i=1}{\bbE_{\bZ_i|\bX_i}(\bZ_i^\top)}\right\}\left\{\sum^{n}_{i=1}{\bbE_{\bZ_i|\bX_i}(\bZ_i\bZ_i^\top)}\right\}^{-1}.
\end{split}
\end{equation}

Conditional on $\hat{\mu}_{t+1}$ and $\hat{\Lambda}_{t+1}$, the CM-step for $\Psi_{t+1}$ is given by,
\begin{equation} %\label{eq:dem-psi}
    \begin{split}
    \hat{\Psi}_{t+1}
 = & \frac{1}{n}\mathrm{diag} \left\{
 \sum^{n}_{i=1}{\bbE_{R_i|\bX_i}(R_i^2)\bX_i\bX_i^\top} - 2\hat{\mu}_{t+1}\sum^{n}_{i=1}{\bbE_{R_i|\bX_i}(R_i)\bX_i^\top} \right.\\ &\left.-2\hat{\Lambda}_{t+1}\sum^{n}_{i=1}{\bbE_{\bZ_i,R_i|\bX_i}(R_i\bZ_i)\bX_i^\top} + 2\hat{\mu}_{t+1}\sum^{n}_{i=1}{\bbE_{\bZ_i|\bX_i}(\bZ_i^\top)}\hat{\Lambda}_{t+1}^\top \right.\\
&+\left.\hat{\Lambda}_{t+1}\sum^{n}_{i=1}{\bbE_{\bZ_i|\bX_i}(\bZ_i\bZ_i^\top)}\hat{\Lambda}_{t+1}^\top + n\hat{\mu}_{t+1}\hat{\mu}_{t+1}^\top\right\}.
\end{split}
\end{equation}
The non-positive values in $\hat{\Psi}_{t+1}$ are truncated to a small positive number \citep{Anderson2003}.

% \subsection{Derivatives of the complete data log-likelihood}
% \label{sec:supp-emest-ch3}
% Suppose $\mu$ is known, the log-likelihood of $\bY_i$ is given by,
% \begin{equation} \label{eq:yi-llk}
%         \ell(\bY_i; \bTheta) = c -\frac{1}{2}\{\log\det (\bLambda\bLambda^\top+\bPsi) +\Tr
%         (\bLambda\bLambda^\top+\bPsi)^{-1}(R_i\bX_i-\bmu)(R_i\bX_i-\bmu)^\top \},
% \end{equation}
% where $\bTheta=\{\bLambda, \bPsi\}$, and $c$ is a constant that does not depend on $\bTheta$. 
% 
% Then, the partial derivatives of $\bLambda$ and $\bPsi$ are given by,
% \begin{equation*}
% \begin{split}
%     \frac{\partial \ell(\bY_i; \bTheta)}{\partial \bLambda} 
%     &= -\frac{1}{2}\left\{
%     (\bLambda\bLambda^\top+\bPsi)^{-1}\bLambda - (\bLambda\bLambda^\top+\bPsi)^{-1}(R_i\bX_i-\bmu)(R_i\bX_i-\bmu)^\top(\bLambda\bLambda^\top+\bPsi)^{-1}\bLambda
%     \right\}\\
%      \frac{\partial \ell(\bY_i; \bTheta)}{\partial \bPsi}
%     &= -\frac{1}{2}\mathrm{diag}\left\{
%     (\bLambda\bLambda^\top+\bPsi)^{-1}-(\bLambda\bLambda^\top+\bPsi)^{-1}(R_i\bX_i-\bmu)(R_i\bX_i-\bmu)^\top(\bLambda\bLambda^\top+\bPsi)^{-1}
%     \right\}
% \end{split}
% \end{equation*}

\section{Supplement to Section~\ref{sec:sim-ch3}}
\subsection{Average iterations and convergence rates}
\label{sec:supp-sim-avgiter}
\begin{table}[H]
\caption{Average iterations (in converged cases for both FADS-P and FADS-D) and convergence rates (in \%) of FADS-P and FADS-D algorithms applied to randomly simulated datasets where true $(n,p,q) = (300,30,3)$.} 
\label{table:sim-c1q3-iter}
\centering
%\begingroup\normalsize
\resizebox{.6\textwidth}{!}{%
\begin{tabular}{cc|cccccc}
  \toprule 
  & & 1 & 2 & 3 & 4 & 5 & 6 \\
 \midrule 
 \midrule\multirow{2}{*}{FADS-P} & iteration & 109 &67  &48  &83 &107 &115\\ 
   & convergence rate & 100&100 &100 &100 &100 &100\\ 
 \midrule\multirow{2}{*}{FADS-D} & iteration & 257  &341  &595  &885 &1438 &2732\\ 
   & convergence rate &100 &100 &100 &92 &82 &70\\  
   \bottomrule 
\end{tabular}%
}
%\endgroup
\end{table}
\begin{table}[H]
\caption{Average iterations (in converged cases for both FADS-P and FADS-D) and convergence rates (in \%) of FADS-P and FADS-D algorithms applied to randomly simulated datasets where true $(n,p,q) = (300,30,5)$.} 
\label{table:sim-c1q5-iter}
\centering
%\begingroup\normalsize
\resizebox{.9\textwidth}{!}{%
\begin{tabular}{cc|cccccccccc}
  \toprule 
  & & 1 & 2 & 3 & 4 & 5 & 6 & 7 & 8 & 9 & 10 \\
 \midrule 
 \midrule\multirow{2}{*}{FADS-P} & iteration & 127  &88  &70  &62  &33  &76  &89 &102 &111 &147\\
   & convergence rate &100 &100 &100 &100 &100 &100 &100 &100 &100 &100\\ 
 \midrule\multirow{2}{*}{FADS-D} & iteration &378  &467  &570  &820 &1304 &1952 &2312 &3225 &3296 &4091\\ 
   & convergence rate &91  &94  &98 &100 &100 &82  &55  &38  &28  &24\\  
   \bottomrule 
\end{tabular}%
}
%\endgroup
\end{table}
\begin{table}[H]
\caption{Average iterations (in converged cases for both FADS-P and FADS-D) and convergence rates (in \%) of FADS-P and FADS-D algorithms applied to randomly simulated datasets where true $(n,p,q) = (1000,100,3)$.} 
\label{table:sim-c2q3-iter}
\centering
%\begingroup\normalsize
\resizebox{.65\textwidth}{!}{%
\begin{tabular}{cc|cccccc}
  \toprule 
  & & 1 & 2 & 3 & 4 & 5 & 6 \\
 \midrule 
 \midrule\multirow{2}{*}{FADS-P} & iteration & 683 &208 &123 &141 &275 &402\\ 
   & convergence rate &100 &100 &100 &100 &100 &100\\ 
 \midrule\multirow{2}{*}{FADS-D} & iteration &1411 &2075 &3156 &3120 &3864 &4042\\ 
   & convergence rate &100  &100  &100  &100  &100  &100\\  
   \bottomrule 
\end{tabular}%
}
%\endgroup
\end{table}
\begin{table}[H]
\caption{Average iterations (in converged cases for both FADS-P and FADS-D) and convergence rates (in \%) of FADS-P and FADS-D algorithms applied to randomly simulated datasets where true $(n,p,q) = (1000,100,5)$.} 
\label{table:sim-c2q5-iter}
\centering
%\begingroup\normalsize
\resizebox{.9\textwidth}{!}{%
\begin{tabular}{cc|cccccccccc}
  \toprule 
  & & 1 & 2 & 3 & 4 & 5 & 6 & 7 & 8 & 9 & 10 \\
 \midrule 
 \midrule\multirow{2}{*}{FADS-P} & iteration & 789 &231 &275 &233 &141 &204 &307 &456 &475 &601\\ 
   & convergence rate &100 &100 &100 &100 &100 &100 &100 &100 &100 &100\\ 
 \midrule\multirow{2}{*}{FADS-D} & iteration & 2281 &2411 &3581 &4498 &5818 &6185 &6971 &7269 &7206 &7189\\ 
   & convergence rate &100 &100 &100 &100  &96  &79  &72  &65  &64  &59\\  
   \bottomrule 
\end{tabular}%
}
%\endgroup
\end{table}
\subsection{Average CPU time}
\label{sec:supp-sim-avgtime-ch3}
\begin{table}[H]
\caption{Average CPU time (in seconds) of FADS-P and FADS-D algorithms applied to randomly simulated datasets (Converged cases).} 
\label{table:sim-time}
\centering
%\begingroup\normalsize
\resizebox{\textwidth}{!}{%
\begin{tabular}{cc|cccccccccc}
  \toprule 
  & & 1 & 2 & 3 & 4 & 5 & 6 & 7 & 8 & 9 & 10\\
 \midrule 
 \midrule\multirow{2}{*}{\bm{$(n,p,q)=(300,30,3)$}} 
 & FADS-P &7.570  &5.640  &3.451  &6.826  &9.241 &10.239& -& -& -& -\\
 %& FADS-D & 19.337&  45.074&  59.251&  92.536& 115.560& 185.902& -& -& -& -\\ 
 & FADS-D &10.411  &13.965  &24.603  &36.619  &59.461 &113.415& -& -& -& -\\ 
  \midrule\multirow{2}{*}{\bm{$(n,p,q)=(300,30,5)$}} 
   & FADS-P &7.983  &6.248  &5.901  &4.594  &2.171  &5.698  &7.014  &8.718 &10.044 &13.119\\ 
  & FADS-D &15.229  &18.929  &23.348  &33.641  &54.090  &81.192  &97.207 &134.969 &138.628 &168.619\\  
  \midrule\multirow{2}{*}{\bm{$(n,p,q)=(1000,100,3)$}} 
   & FADS-P &134.633  &47.134  &26.230  &36.970  &76.216 &111.779 &- &- &- &-\\ 
  & FADS-D &219.185 &331.254 &503.038 &498.516 &620.895 &644.078 &- &- & -&-\\ 
  \midrule\multirow{2}{*}{\bm{$(n,p,q)=(1000,100,5)$}} 
   & FADS-P &145.243  &46.902  &60.470  &49.188  &27.735  &49.616  &78.007 &121.077 &131.134 &172.879\\ 
   & FADS-D &354.947  &381.501  &570.623  &710.967  &927.920 &1007.009 &1147.169 &1189.592 &1184.188 &1178.299\\
   \bottomrule 
\end{tabular}%
}
%\endgroup
\end{table}

\subsection{Data-driven cases: Relative Frobenius distances of FADS-P estimates}
\label{sec:supp-sim-tcga-ch3}
\begin{figure}[H]
\centering
  \resizebox{.7\linewidth}{!}{
  \input{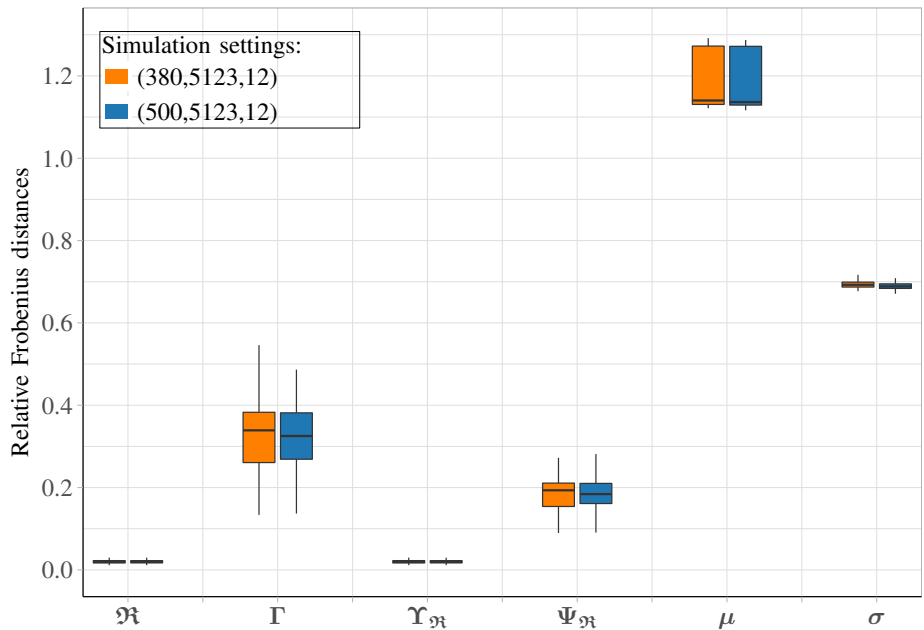}
  }
\caption{Data-driven cases: Relative Frobenius distances of FADS-P for $\corr$, $\bm\Gamma$, ${\Upsilon}_{\corr}$, ${\Psi}_{\corr}$, $\bm\sigma$ and $\mu$.}
\label{fig-est-tcga}
\end{figure}
\section{Supplement to Section~\ref{sec:app-ch3}}
\subsection{List of words in  \#MeToo tweets}
Figures~\ref{fig-wlist-pic1}--\ref{fig-wlist-pic4} together present the
721 words (in alphabetical order along columns) for the final data
matrix of the \#MeToo tweet data that were collected and analyzed in Section~\ref{sec:app-ch3-metoo}. 
\label{sec:supp-wordlist}
\begin{figure}[H]
\centering
\includegraphics[width = 1\textwidth]{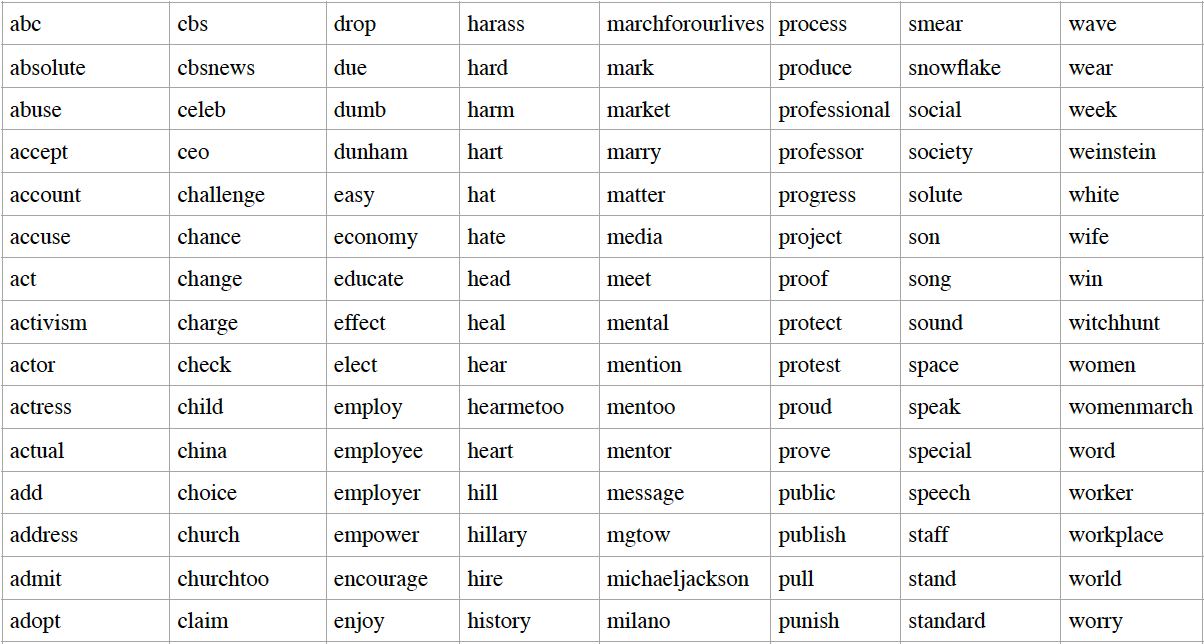}\\
\vspace{-0.05in}
\includegraphics[width = 1\textwidth]{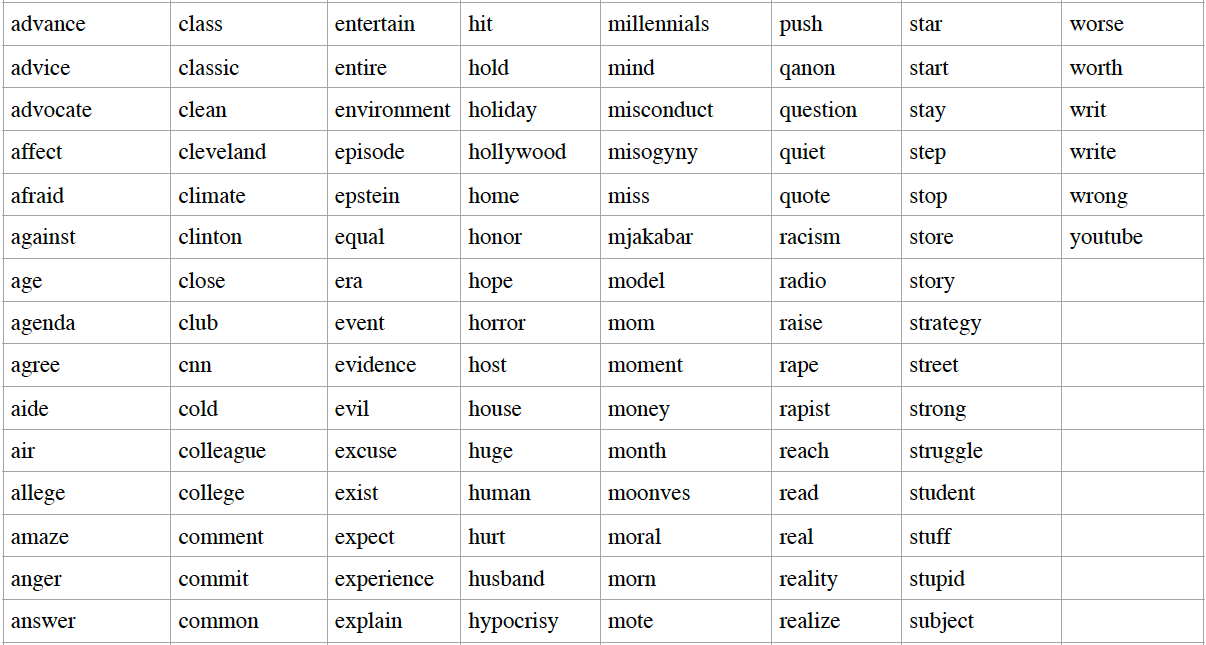}\\
\caption{Word list for \#MeToo tweet data, part 1}
\label{fig-wlist-pic1}
\end{figure}

\begin{figure}[H]
\includegraphics[width = 0.875\textwidth]{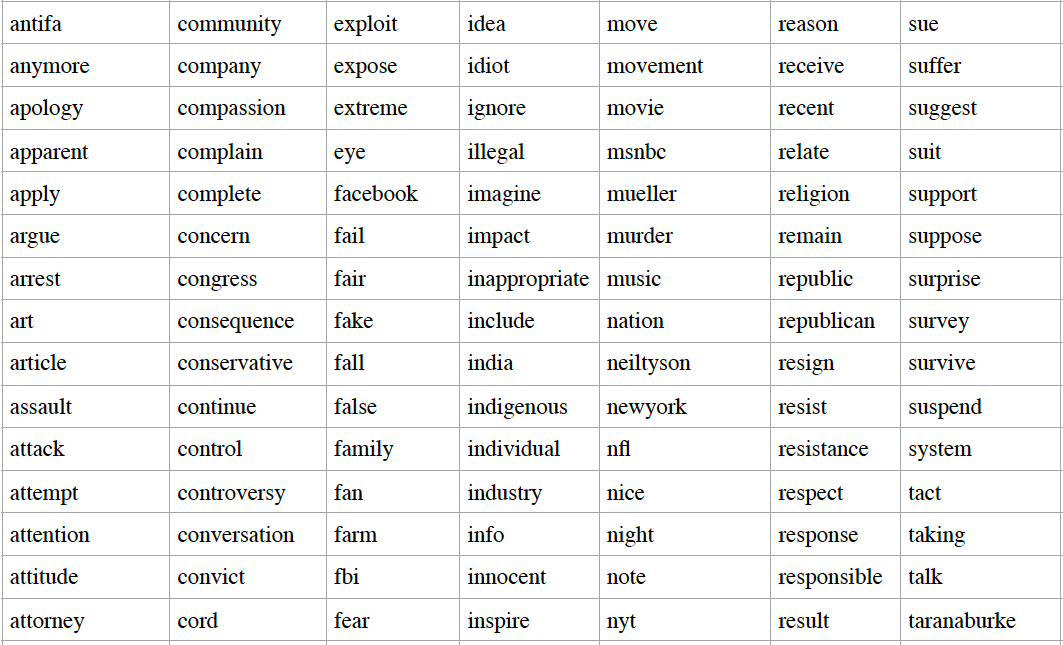}\\
\vspace{-0.05in}
\hspace{-0.08in}
\includegraphics[width = 0.875\textwidth]{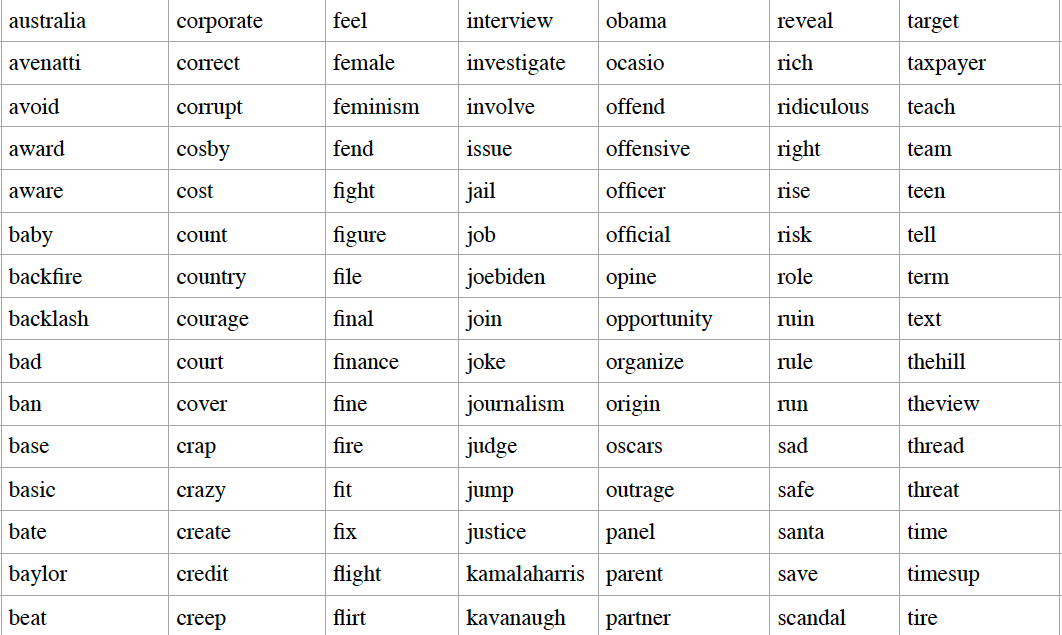}
\caption{Word list for \#MeToo tweet data, part 2}
\label{fig-wlist-pic2}
\end{figure}

\begin{figure}[H]
\includegraphics[width = 0.875\textwidth]{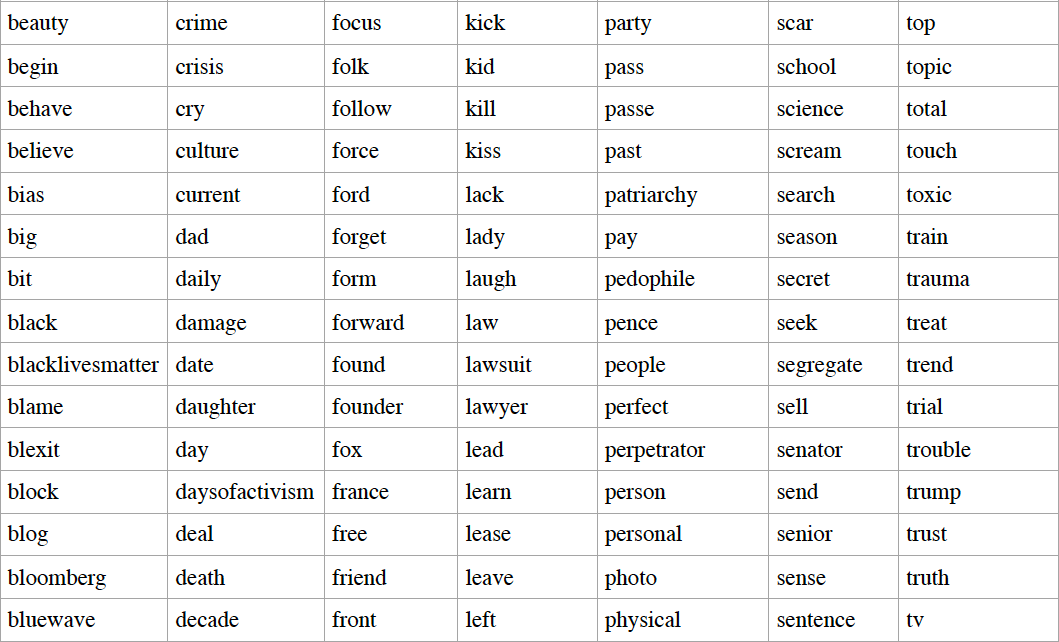}\\
\hspace{-0.15in}
\includegraphics[width = 0.877\textwidth]{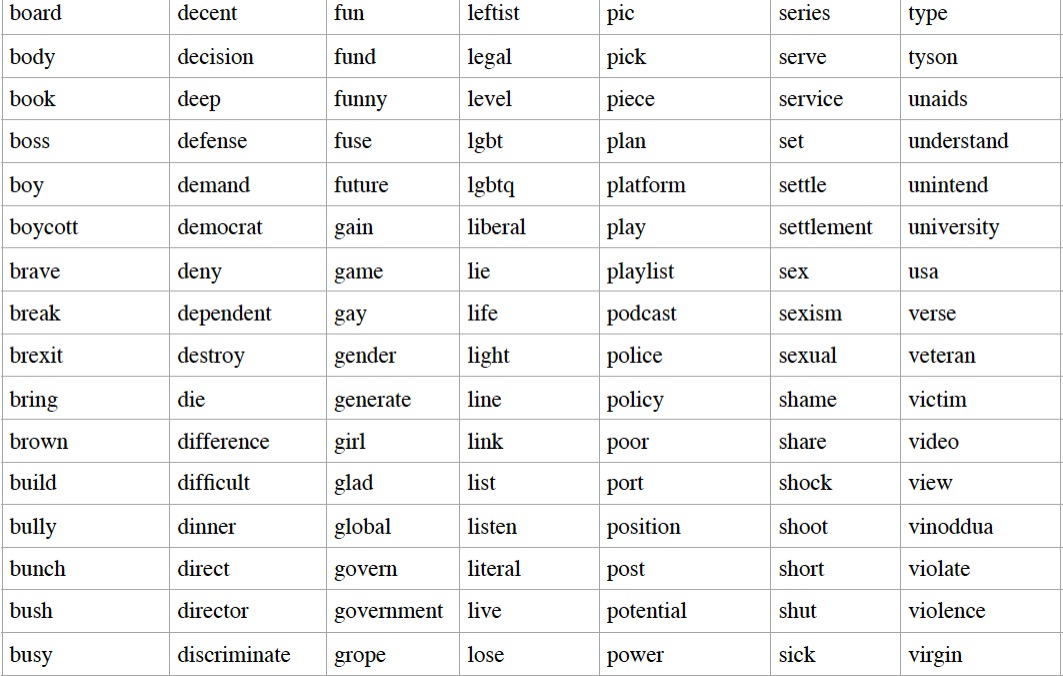}
\vspace{-0.05in}
\caption{Word list for \#MeToo tweet data, part 3}
\label{fig-wlist-pic3}
\end{figure}

\begin{figure}[H]
\includegraphics[width = 0.875\textwidth]{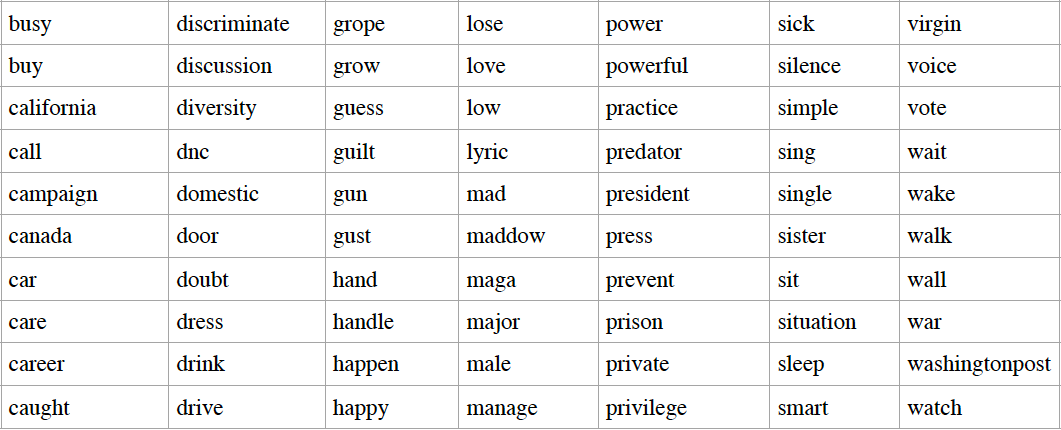}
\vspace{-0.05in}
\caption{Word list for \#MeToo tweet data, part 4}
\label{fig-wlist-pic4}
\end{figure}

\subsection{Handwritten digits}
\begin{figure}[H]
    \centering
    \mbox{
      \subfloat[original sample]{\includegraphics[width =
        0.5\textwidth]{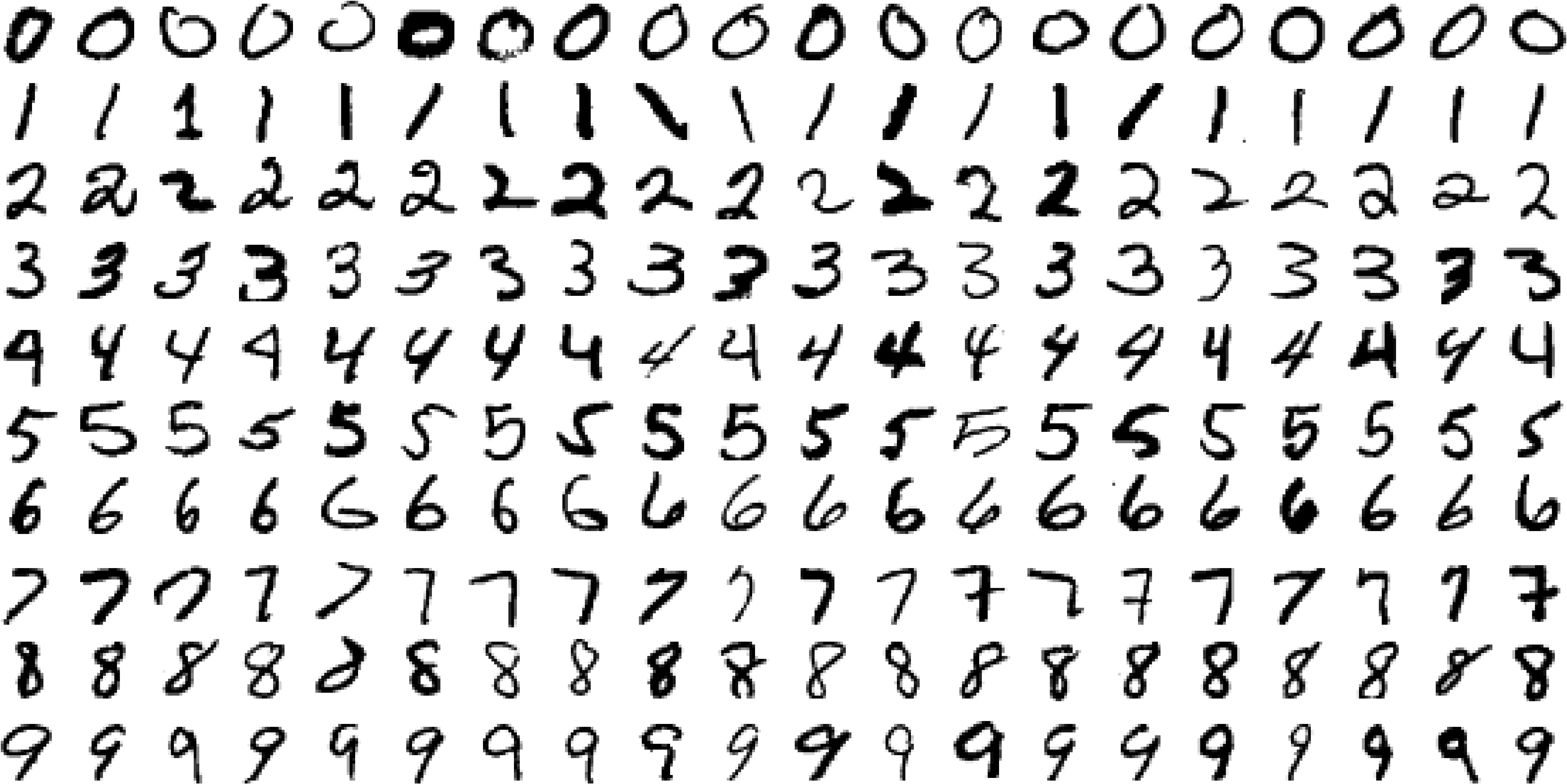}}
      \vrule
      \subfloat[jittered sample]{\includegraphics[width = 0.5\textwidth]{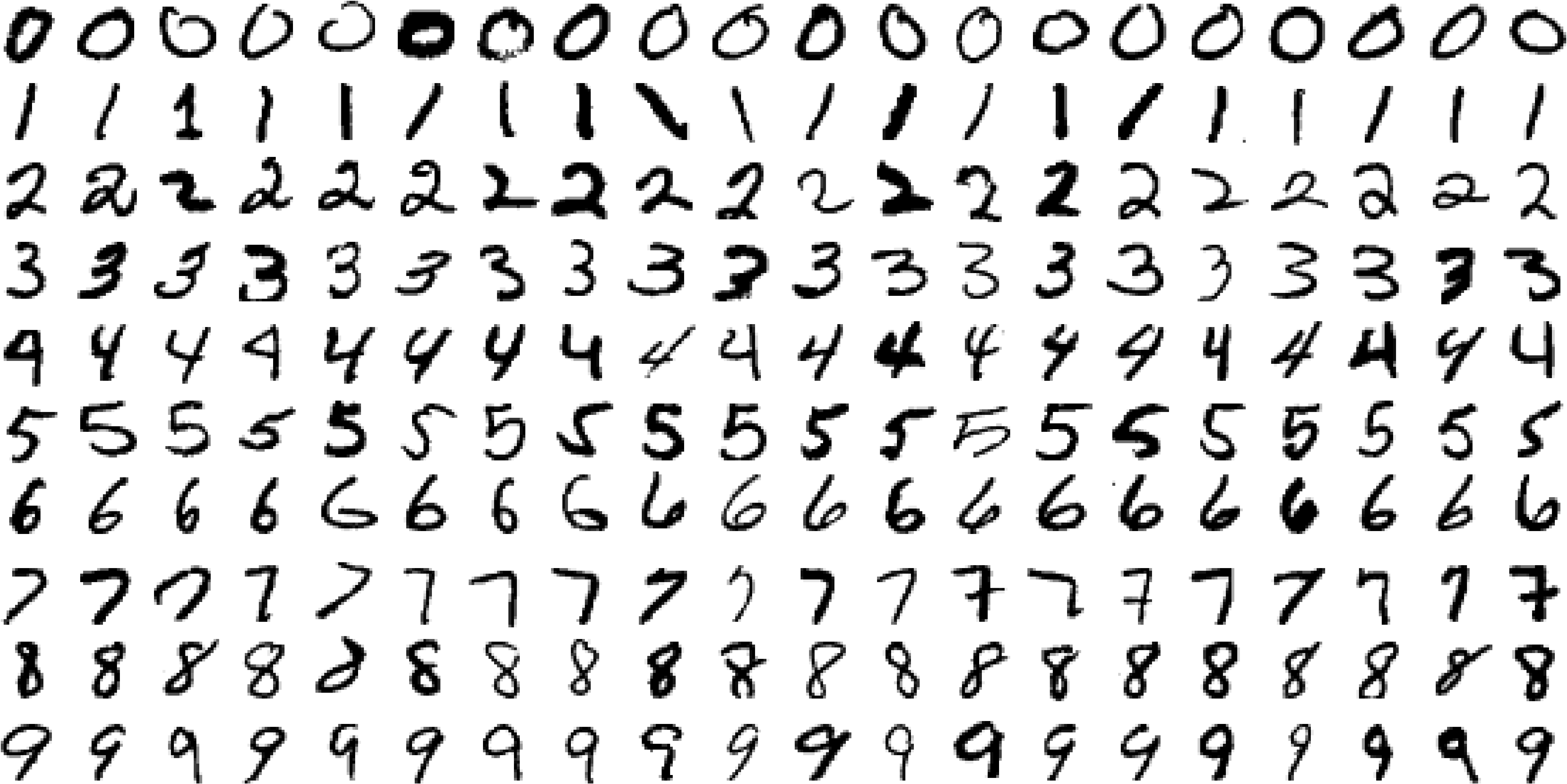}}
    }
    \caption{(a) Original and (b) Jittered images of 20 handwritten digits. The two sets are essentially indistinguishable.} 
    \label{fig:jittered-digits}
\end{figure}
\end{document}